\newtheorem{theorem}{Theorem}[section]
\newtheorem{lemma}[theorem]{Lemma}
\newtheorem{corollary}[theorem]{Corollary}
\newtheorem{proposition}[theorem]{Proposition}
\theoremstyle{remark}
\newtheorem{remark}[theorem]{Remark}
\numberwithin{equation}{section}
\newcommand{\A}{\mathcal{A}}
\newcommand{\C}{\mathbb{C}}
\newcommand{\R}{\mathbb{R}}
\newcommand{\E}{\mathbb{E}}
\renewcommand{\Pr}{\mathbb{P}}
\newcommand{\rank}{\operatorname*{rank}}
\newcommand{\trace}{\operatorname*{trace}}
\newcommand{\sign}{\operatorname*{sign}}
\newcommand{\toep}{\operatorname*{Toep}}
\newcommand{\minimize}{\operatorname*{minimize}}
\newcommand{\st}{\mbox{subject to}}
\newcommand{\nn}{\nonumber}
\newcommand{\eq}[1]{(\ref{eq:#1})}
\def \mathbbm {\mathbb}
\newcommand{\tmop}{\operatorname*}
\def \nn {\nonumber}
\title{Compressed Sensing Off the Grid}
\author{Gongguo Tang$^\dagger,$ Badri Narayan Bhaskar$^\dagger,$ Parikshit Shah$^\sharp$,
and Benjamin Recht$^\sharp$\\
$^\dagger$Department of Electrical and Computer Engineering\\
$^\sharp$Department of Computer Sciences\\
University of Wisconsin-Madison
}
\date{July 2012; Last Revised July 2013}
\begin{document}

\maketitle

\noindent
\begin{abstract}
This work investigates the problem of estimating the frequency components of a mixture of $s$ complex sinusoids from a random subset of $n$ regularly spaced samples. Unlike previous work in compressed sensing, the frequencies are not assumed to lie on a grid, but can assume any values in the normalized frequency domain $[0, 1]$. An atomic norm minimization approach is proposed to exactly recover the unobserved samples and identify the unknown frequencies, which is then reformulated as an exact semidefinite program. Even with this continuous dictionary, it is shown that $O(s\log s \log n)$ random samples are sufficient to guarantee exact frequency localization with high probability, provided the frequencies are well separated. Extensive numerical experiments are performed to illustrate the effectiveness of the proposed method. 
\end{abstract}

\textbf{Keywords:}
Atomic norm, basis mismatch, compressed sensing, continuous dictionary, line spectral estimation, nuclear norm relaxation, Prony's method, sparsity.

\section{Introduction}

Compressed sensing has demonstrated that data acquisition and compression can often be combined, dramatically reducing the time and space needed to acquire many signals of interest~\cite{Candes:2006eq,Candes:2008hb,Donoho:ci,Baraniuk:ix}.
Despite the tremendous impact of compressed sensing on signal processing theory and practice, its development thus far has focused on signals with sparse representations in finite discrete dictionaries. However, signals encountered in applications such as radar, array processing, communication, seismology, and remote sensing are usually specified by parameters in a \emph{continuous} domain~\cite{Stoica:2005wf,Ekanadham:2011tj,Parrish:2009ie}.  In order to apply the theory of compressed sensing to such applications, researchers typically adopt a discretization procedure to reduce the continuous parameter space to a finite set of grid points \cite{Malioutov:2005jw,Herman:ew, Fannjiang:2010dl, Baraniuk:2007gj,Bajwa:cm, Duarte:2011wy, Rauhut:2007ij, spice_likes, stoica_new}. While this simple strategy yields state-of-the-art performance for problems where the true parameters lie on the grid, discretization has several significant drawbacks --- \emph{(i)} In cases where the true parameters do not fall onto the finite grid,  the signal cannot often not be sparsely represented by the discrete dictionary \cite{Chi:2011ke, Herman:ek, Duarte:2011wy}. \emph{(ii)} It is difficult to characterize the performance of discretization using standard compressed sensing analyses since the dictionary becomes very coherent as we increase the number of grid points. \emph{(iii).} Although finer grids may improve the reconstruction error in theory, very fine grids often lead to numerical instability issues. 


We sidestep the issues arising from discretization by working directly on the continuous parameter space for estimating the continuous frequencies and amplitudes of a mixture of complex sinusoids from partially observed time samples. In particular, the frequencies are not assumed to lie on a grid, and can instead take arbitrary values across the bandwidth of the signal. With a time-frequency exchange, our model is exactly the same as the one in Cand\`es, Romberg, and Tao's foundational work on compressed sensing \cite{Candes:2006eq}, except that we do not assume the spikes to lie on an equispaced grid. This major difference presents a significant technical challenge as the resulting dictionary is no longer an orthonormal Fourier basis, but is an infinite dictionary with continuously many atoms and arbitrarily high correlation between candidate atoms.  We demonstrate that a sparse sum of complex sinusoids can be reconstructed exactly from a small sampling of its time samples provided the frequencies are sufficiently far apart from one another.

Our computational method and theoretical analysis is based upon the \emph{atomic norm} induced by samples of complex exponentials~\cite{Chandrasekaran:2010wp}.  Chandrasekaran \emph{et al} argue that the atomic norm is the best convex heuristic for underdetermined, structured linear inverse problems, and it generalizes the $\ell_1$ norm for sparse recovery and the nuclear norm for low-rank matrix completion. The norm is a convex function, and in the case of complex exponentials, can be computed via semidefinite programming.  We show how the atomic norm for moment sequences can be derived either from the perspective of sparse approximation or rank minimization~\cite{Recht:2010ht}, illuminating new ties between these related areas of study.  Much as was the case in other problems where the atomic norm has been studied, we prove that atomic norm minimization achieves nearly optimal recovery bounds for reconstructing sums of sinusoids from incomplete data.

To be precise, we consider signals whose spectra consist of spike trains with unknown locations in the normalized interval $[0,1]$, where we identify $0$ and $1$. Rather than sampling the signal at all times $t=0, \ldots, n-1$ we sample the signal at a subset of times $t_1, \ldots t_m$ with each $t_j \in \left\{0, \ldots, n-1 \right\}$. Our main contribution is summarized by the following theorem.
\begin{theorem}
  \label{thm:main_general} Suppose we observe the signal
\begin{align}\label{eqn:signalmodel}
x_j^\star = \sum_{k=1}^s c_k e^{i2\pi f_k j}, j = 0, \ldots, n-1
\end{align}
with unknown frequencies $\{f_1, \ldots, f_s\} \subset [0, 1]$ on an index set  $T \subset \{0, \ldots, n-1\}$ of size $m$ selected
uniformly at random.  Additionally, assume $\sign(c_k) := c_k/|c_k|$ are drawn
i.i.d.\ from the uniform distribution on the complex unit circle and 
\begin{align*}\label{eq:delta-definition}
  \Delta_f & = \min_{k \neq j} \left| f_k - f_j \right|\end{align*}
{where the distance $\left| f_k - f_j \right|$ is understood as the wrap-around
distance on the unit circle.}  If $\Delta_f \geq \frac{1}{\lfloor (n-1)/4  \rfloor}$, then there exists a numerical constant $C$ such that
  \begin{align*}
    m & \geq  C \max \left\{ \log^2 \frac{n}{\delta}, s \log \frac{s}{\delta}
    \log \frac{n}{\delta} \right\},
  \end{align*}
is sufficient to guarantee that we can recover $x^{\star}$ and localize the frequencies via a semidefinite program with probability at least $1 - \delta$.
\end{theorem}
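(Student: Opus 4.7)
My plan is to follow the standard dual-certificate strategy for atomic norm recovery, adapted to the continuous Fourier dictionary and the random sampling pattern. First I would set up the atomic norm $\|x\|_\A$ induced by the atoms $a(f,\phi) = e^{i\phi}(1, e^{i2\pi f}, \ldots, e^{i2\pi f(n-1)})^T$, write down the Lagrange dual of the constrained atomic norm minimization over signals agreeing with $x^\star$ on $T$, and verify that strong duality reduces exact recovery to the existence of a trigonometric polynomial $Q(f)=\sum_{j\in T} q_j e^{-i2\pi f j}$ (i.e., Fourier coefficients supported on $T$) satisfying the interpolation conditions $Q(f_k)=\sign(c_k)$ for $k=1,\ldots,s$ and the strict off-support bound $|Q(f)|<1$ for all $f\in[0,1]\setminus\{f_1,\ldots,f_s\}$. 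As a second derivative/curvature-based sufficient condition that is easier to enforce, I would also ask $Q'(f_k)=0$, ensuring the $f_k$'s are local extrema of $|Q|$.

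The candidate $Q$ is built from shifts and derivatives of a randomized squared Fej\'er-type kernel. Concretely, let $K(f)$ be the deterministic kernel used by Cand\`es and Fernandez-Granda in the super-resolution setting (whose coefficients are supported on $\{-(n-1),\ldots,n-1\}$). Define the random kernel $\bar K(f) = \frac{n}{m}\sum_{j\in T} g_j e^{-i2\pi f j}$, where $g_j$ are the chosen Fej\'er coefficients, so that $\E \bar K(f) = K(f)$ under uniform sampling. Then set
\begin{equation*}
Q(f) = \sum_{k=1}^{s} \alpha_k \bar K(f - f_k) + \beta_k \bar K'(f - f_k),
\end{equation*}
which automatically has its Fourier coefficients supported on $T$. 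The coefficients $\alpha=(\alpha_k), \beta=(\beta_k)$ are determined by the $2s\times 2s$ linear system encoding $Q(f_k)=\sign(c_k)$ and $Q'(f_k)=0$. Under the separation hypothesis $\Delta_f \geq 1/\lfloor(n-1)/4\rfloor$, the deterministic version of this system (with $K$ in place of $\bar K$) is well-conditioned by the Cand\`es--Fernandez-Granda analysis; I would then use the matrix Bernstein inequality to show that the random system matrix concentrates around its expectation with high probability once $m \gtrsim s\log(s/\delta)$, so $\alpha,\beta$ exist, are unique, and have controlled norms.

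Verifying the off-support bound $|Q(f)|<1$ is the main technical obstacle, since one needs a pointwise-in-$f$ concentration statement on a continuum. I would first show pointwise concentration of $Q(f)-\E Q(f)$ using a scalar Bernstein inequality applied to the sum over $j\in T$, exploiting the i.i.d.\ random signs $\sign(c_k)$ (which make certain cross terms mean-zero and yield subgaussian tails on the values of $Q$). I would extend pointwise control to a uniform statement by a covering argument: lay down a grid of $\mathrm{poly}(n)$ points in $[0,1]$, apply the Bernstein bound together with a union bound on the grid, and then use a Bernstein inequality applied to the derivative $Q'$ (and, near the support points, $Q''$) to control oscillation between grid points. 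Near each $f_k$, the local upper bound $|Q(f)| \leq 1$ follows from $|Q(f_k)|=1$, $Q'(f_k)=0$ and a uniform concentration bound on $Q''(f)$ combined with the negative curvature of the deterministic $\E Q$ that Cand\`es--Fernandez-Granda established. Combining the interpolation step with the uniform off-support bound yields a valid dual certificate with probability at least $1-\delta$ under the stated sample complexity, completing the proof. The delicate point throughout is to carry out all Bernstein applications with the same noise scale that gives the $s \log s \log n$ rate rather than a cruder $s \log^2 n$ rate, which requires careful bookkeeping of the second-moment and Orlicz-norm parameters of each random sum involved.
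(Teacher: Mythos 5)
Your scaffold is right — atomic-norm dual certificate, random squared-Fej\'er kernel, interpolation at $\{f_k\}$, grid-plus-derivative covering — but the step you call ``pointwise concentration of $Q(f)-\E Q(f)$ using a scalar Bernstein inequality applied to the sum over $j\in T$'' does not go through as stated, and this is the crux of the whole argument. Once you write $Q$ in the form $Q^{(\ell)}(f) \propto \langle u, L^{*} v_\ell(f)\rangle$ with $u = \sign(c)$, $v_\ell(f)$ the stacked kernel/derivative values, and $L$ a block of the inverse $D^{-1}$ of the random $2s\times 2s$ interpolation matrix, the coefficients $\alpha,\beta$ — hence the Fourier coefficients $q_j$ of your $Q$ — are functions of the \emph{entire} random set $T$. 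So the sum $\sum_{j\in T} q_j e^{-i2\pi f j}$ is \emph{not} a sum of independent terms over $j$, and no scalar Bernstein inequality over $j$ applies to it. The paper's repair is the decomposition $L^{*}v_\ell(f) = \bar L^{*}\bar v_\ell(f) + L^{*}\bigl(v_\ell(f) - p\bar v_\ell(f)\bigr) + \bigl(L - p^{-1}\bar L\bigr)^{*} p \bar v_\ell(f)$, which separates the deviation from $\bar Q$ into two terms $I_1^\ell(f) = \langle u, L^{*}(v_\ell - p\bar v_\ell)\rangle$ and $I_2^\ell(f) = \langle u, (L-p^{-1}\bar L)^{*} p \bar v_\ell\rangle$, \emph{each} a random-sign inner product with a vector that depends on $T$ only. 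Conditional on ``good $T$'' events — matrix Bernstein on $\|p^{-1}D - \bar D\|$ for the system matrix, plus a Talagrand concentration bound for $\|v_\ell(f) - p\bar v_\ell(f)\|_2$ — one applies Hoeffding over the independent signs $u$. Without a decomposition of this kind the two sources of randomness (sampling and signs) are entangled and you cannot produce the claimed tail bounds, nor the $s\log s\log n$ rate.

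Two further corrections. First, the kernel's coefficient support is mis-stated: if the dual certificate must have Fourier coefficients in $J=\{0,\ldots,n-1\}$, the underlying symmetric squared-Fej\'er kernel must have coefficients in $\{-2M,\ldots,2M\}$ with $M \approx (n-1)/4$ (which is why $\Delta_f \geq 1/\lfloor(n-1)/4\rfloor$, i.e.\ $\approx 4/n$, appears rather than $2/n$), followed by a modulation $e^{-i2\pi f(2M)}$ to shift it into $\{0,\ldots,4M\}$; a kernel with coefficients on $\{-(n-1),\ldots,n-1\}$ would produce a $Q$ whose coefficients spill out of $J$. The paper in fact first proves the theorem on the symmetric index set $J=\{-2M,\ldots,2M\}$ and reduces the $\{0,\ldots,n-1\}$ case by exactly this modulation, which also absorbs the phase rotation $e^{i2\pi f_k(2M)}$ into $\sign(c_k)$. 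Second, when you say you will ``use a Bernstein inequality applied to the derivative $Q'$'' to control oscillation between grid points, make sure you mean Bernstein's \emph{polynomial} inequality ($\sup|p'|\le N\sup|p|$ for trigonometric polynomials of degree $N$) rather than the probabilistic Bernstein inequality; the former is what lets a grid of size $\mathrm{poly}(n)$ suffice.
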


The frequencies may be identified directly using the dual solution of the atomic norm minimization problem we propose in this paper. Alternatively, once the missing entries are recovered exactly, the frequencies can be identified by Prony's method \cite{deProny:tg}, a matrix pencil approach \cite{hua1990matrix}, or other linear prediction methods \cite{Stoica:1993cr}. After identifying the frequencies, the coefficients $\{c_k\}_{k=1}^s$ can be obtained by solving a linear system.  

\begin{remark}
{\bf{(Resolution)}} An interesting artifact of using convex optimization methods is the necessity of a particular resolution condition on the spectrum of the underlying signal. For the signal to be recoverable via our methods using $O(s\log s \log n)$ random time samples from the set $\left\{0, 1, \ldots, n-1 \right\}$, the spikes in the spectrum need to be separated by roughly $\frac{4}{n}$. In contrast, if one chose to acquire $O(s\log s \log n)$ \emph{consecutive} samples from this set (equispaced sampling), the required minimum separation would be  $\frac{4}{s \log s \log n}$; this sampling regime was studied by Cand\`es and Fernandez-Granda \cite{Candes:2012uf}. Therefore, in some sense, the resolution is determined by the region over which we take the samples, either in full or in a uniform random manner. We comment that numerical simulations of Section \ref{sec:experiments} suggest that the critical separation is actually $\frac{1}{n}$. We leave tightening our bounds by the extra constant of 4 to future work.
\end{remark}

\begin{remark}
{\bf (Random Signs)}
The randomness of the signs of the coefficients essentially assumes that the sinusoids have random phases.  Such a model is practical in many spectrum sensing applications as argued in \cite[Chapter 4.1]{Stoica:2005wf}.  Our proof will reveal that the phases can obey any symmetric distribution on the unit circle, not simply the uniform distribution.
\end{remark}

\begin{remark}
  {\bf{(Band-limited Signal Models)}} Note that any mixture of sinusoids with frequencies bandlimited to $[-W, W]$, after appropriate normalization, can be assumed to have frequencies in $[0, 1]$. Consequently, a bandlimited signal of such a form leads to samples of the form \eqref{eqn:signalmodel}.    More precisely, suppose the frequencies $\{ w_k \}$ lie in $[-W, W]$, and $x^\star \left( t \right)$ is a continuous signal of the
  form:
  \begin{align*}
    x^\star \left( t \right) & = \sum_{k = 1}^s c_k e^{i 2 \pi
    w_k t}\,.
  \end{align*}
By taking regularly spaced Nyquist samples at $t \in \left\{ {0}/{2W}, {1}/{2W}, \dots, (n-1)/{2W} \right\}$, we observe
  \begin{align*}
    x^\star_j & := x^\star \left( {j}/{2W} \right)  = \sum_{k = 1}^s c_k
    e^{i 2 \pi \frac{w_k}{2W} j}\nn\\
    & =  \sum_{k = 1}^s c_k e^{i 2 \pi f_k j} \tmop{with}
    f_k = \frac{w_k}{2W} \in \left[ - \frac{1}{2}, \frac{1}{2} \right],
  \end{align*}
  which is exactly the same as our model (\ref{eqn:signalmodel}) after a trivial translation of the frequency domain.  We emphasize that one does not need to actually acquire all of these Nyquist samples and then discard some of them to obtain the set $T$. In practice, one could pre-determine the set $T$ and sample only at locations specified by $T$. This procedure allows one to achieve compression at the sensing stage, the central innovation behind the theory of compressed sensing.
  \end{remark}

\begin{remark}{\bf (Basis Mismatch)}
Finally, we note that our result completely obviates the \emph{basis mismatch} conundrum~\cite{Chi:2011ke, Herman:ek, Duarte:2011wy} of discretization methods, where the frequencies might well fall off the grid. Since our continuous dictionary is globally coherent, Theorem \ref{thm:main_general} shows that the global coherence of the frame is not an obstacle to recovery. What matters more is the \emph{local} coherence between the atoms composing the true signal, as characterized by the separation between the frequencies. 
\end{remark}

This paper is organized as follows.   First, we specify our reconstruction algorithm as the solution to an atomic norm minimization problem in Section~\ref{sec:atomic-norm}. We show that this convex optimization problem can be exactly reformulated as a semidefinite program and that our methodology is thus computationally tractable.  We outline connections to prior art and the foundations that we build upon in Section~\ref{sec:related}.  We then proceed to develop the proofs in Section~\ref{sec:dualconstruction}. Our proof  requires the construction of an explicit certificate that satisfies certain interpolation conditions. The production of this  certificate requires us to consider certain random polynomial kernels, and derive concentration inequalities for these kernels that may be of independent interest to the reader.  In Section \ref{sec:experiments}, we validate our theory by extensive numerical experiments, confirming that random under-sampling as a means of compression coupled with atomic norm minimization as a means of recovery are a viable, superior alternative to discretization techniques.

\section{The Atomic Norm and Semidefinite Characterizations}\label{sec:atomic-norm}

Our signal model is a positive combination of complex sinusoids with arbitrary phases.  As motivated in~\cite{Chandrasekaran:2010wp}, a natural regularizer that encourages a sparse combination of such sinusoids is the~\emph{atomic norm} induced by these signals.  Precisely, define atoms $a(f,\phi) \in \C^{|J|}$, $f \in [0, 1]$ and $\phi \in [0,2\pi)$ as
\begin{equation*}\label{eqn:atom}
  [a\left(f, \phi\right)]_j  =
    e^{i(2{\pi}f j + \phi)},\, j \in J
\end{equation*}
and rewrite the signal model \eqref{eqn:signalmodel} in matrix-vector form
\begin{align}
x^\star = \sum_{k=1}^s |c_k| a(f_k,\phi_k) \label{eqn:signalmodel:mv}
\end{align}
where  $J$ is an index set with values being either $\{0, \ldots, n-1\}$ or $\{-2M, \ldots, 2M\}$ for some positive integer $n$ and $M$, and $\phi_k$ is the phase of the complex number $c_k$. In the rest of the paper, we use $\Omega = \{f_1, \ldots, f_s\} \subset [0, 1]$ to denote the unknown set of frequencies. In the representation \eqref{eqn:signalmodel:mv}, we could also choose to absorb the phase $\phi_k$ into the coefficient $|c_k|$ as we did in \eqref{eqn:signalmodel}. We will use both representations in the following and explicitly specify that the coefficient $c_k$ is positive when the phase term $\phi_k$ is in the atom $a(f_k, \phi_k)$.

The set of atoms $\mathcal{A} = \{a(f,\phi)~:~ f \in [0, 1], \phi \in [0, 2\pi) \}$ are building blocks of the signal $x^\star$, the same way that canonical basis vectors are building blocks for sparse signals, and unit-norm rank one matrices are building blocks for low-rank matrices. In sparsity recovery and matrix completion, the unit balls of the sparsity-enforcing norms, e.g., the $\ell_1$ norm and the nuclear norm, are exactly the convex hulls of their corresponding building blocks. In a similar spirit, we define an atomic norm $\|\cdot\|_\A$ by identifying its unit ball with the convex hull of $\A$
\begin{align*} 
  \left\| x \right\|_{\mathcal{A}} & = \inf \left\{ t > 0: x \in t \tmop{conv}
  \left( \mathcal{A} \right) \right\} 
   = \inf_{\substack{c_k \geq 0, \; \phi_k \in [0, 2\pi) \\ f_k \in [0, 1]}} \Big\{ \sum_k c_k: x = \sum_k c_k a(f_k, \phi_k)  
  \Big\} .
\end{align*}
Roughly speaking, the atomic norm $\|\cdot\|_\A$ can enforce sparsity in $\A$ because low-dimensional faces of $\operatorname*{conv}(\A)$ correspond to signals involving only a few atoms. The idea of using atomic norms to enforce sparsity for a general set of atoms was first proposed and analyzed in \cite{Chandrasekaran:2010wp}.

When the phases $\phi$ are all $0$, the set $\mathcal{A}_0 = \{a(f,0):  f \in [0, 1] \}$ is called the~\emph{moment curve} which traces out a one-dimensional variety in $\R^{2|J|}$.  It is well known that the convex hull of this curve is characterizable in terms of Linear Matrix Inequalities, and membership in the convex hull can thus be computed in polynomial time (see~\cite{Sturmfels11} for a proof of this result and a discussion of many other algebraic varieties whose convex hulls are characterized by semidefinite programming).  When the phases are allowed to range in $[0,2\pi)$, a similar semidefinite characterization holds.

\begin{proposition}\label{prop:sdp-char}
	For $x\in \C^{|J|}$ with $J = \{0, \ldots, n-1\}$ or $\{-2M, \ldots, 2M\}$,
	\begin{equation}
	\label{eq:atomic_sdp}
	\|x\|_{\mathcal{A}} = \inf\left\{ \tfrac{1}{2|J|} \trace(\toep(u)) +\tfrac{1}{2} t~:~  \begin{bmatrix}
\operatorname*{Toep}(u) & x\\
x^* & t
\end{bmatrix} \succeq 0 \right\}.
	\end{equation}
\end{proposition}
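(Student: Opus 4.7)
The plan is to prove the two inequalities $\mathrm{SDP}(x) \leq \|x\|_\mathcal{A}$ and $\|x\|_\mathcal{A} \leq \mathrm{SDP}(x)$, where $\mathrm{SDP}(x)$ denotes the infimum on the right-hand side of \eqref{eq:atomic_sdp}.

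For the direction $\mathrm{SDP}(x) \leq \|x\|_\mathcal{A}$, I would start from any atomic decomposition $x = \sum_k c_k a(f_k,\phi_k)$ with $c_k \geq 0$ and construct an explicit feasible pair. The natural choice is $u$ defined by $\toep(u) = \sum_k c_k a(f_k,0) a(f_k,0)^*$ (which is Toeplitz because each outer product $a(f,0)a(f,0)^*$ has entries depending only on the index difference) and $t = \sum_k c_k$. Writing $a(f_k,\phi_k) = e^{i\phi_k}a(f_k,0)$, the block matrix factors as the sum of rank-one PSD matrices
\begin{equation*}
\sum_k c_k \begin{bmatrix} a(f_k,0) \\ e^{-i\phi_k} \end{bmatrix}\begin{bmatrix} a(f_k,0)^* & e^{i\phi_k}\end{bmatrix} \succeq 0,
\end{equation*}
and has the required blocks. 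Since $\|a(f_k,0)\|^2 = |J|$, we get $\tfrac{1}{2|J|}\trace(\toep(u)) + \tfrac{1}{2}t = \sum_k c_k$; taking infimum over atomic decompositions yields the inequality.

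For the reverse inequality, I would invoke the classical Carath\'eodory--Toeplitz (Vandermonde) decomposition: any PSD Toeplitz matrix admits a representation $\toep(u) = \sum_{k=1}^r d_k a(f_k,0) a(f_k,0)^*$ with $r = \rank(\toep(u))$, distinct $f_k \in [0,1]$, and strictly positive weights $d_k$. Given any feasible $(u,t)$, positivity of the block matrix forces $x$ into the range of $\toep(u)$, which is spanned by the independent Vandermonde vectors $\{a(f_k,0)\}$, so $x = \sum_k w_k a(f_k,0)$ with unique coefficients $w_k$. This already produces an atomic decomposition $x = \sum_k |w_k| a(f_k,\arg w_k)$ certifying $\|x\|_\mathcal{A} \leq \sum_k |w_k|$.

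It remains to bound $\sum_k |w_k|$ by the SDP objective. The Schur complement applied to the PSD block matrix gives $t \geq x^* \toep(u)^\dagger x$, and a direct computation with $A = [a(f_1,0),\ldots,a(f_r,0)]$, $D = \diag(d_k)$, and $x = Aw$ shows
\begin{equation*}
x^* \toep(u)^\dagger x = w^* D^{-1} w = \sum_{k} \frac{|w_k|^2}{d_k}.
\end{equation*}
Combining this with the AM-GM bound $|w_k| \leq \tfrac{1}{2}(d_k + |w_k|^2/d_k)$ and the identity $\trace(\toep(u)) = |J|\sum_k d_k$ yields
\begin{equation*}
\sum_k |w_k| \leq \tfrac{1}{2}\sum_k d_k + \tfrac{1}{2}\sum_k \frac{|w_k|^2}{d_k} \leq \tfrac{1}{2|J|}\trace(\toep(u)) + \tfrac{1}{2}t,
\end{equation*}
completing the proof. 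The main obstacle is the Vandermonde decomposition step: one needs to cite or justify Carath\'eodory--Toeplitz carefully (working in $\R^{2|J|}$ via identification with the moment curve, and handling both index sets $\{0,\ldots,n-1\}$ and $\{-2M,\ldots,2M\}$ uniformly since each produces a Toeplitz structure), and to argue that rank-deficient $\toep(u)$ still yields enough distinct nodes for the range argument to go through. Everything else is a short calculation.
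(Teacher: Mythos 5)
Your proposal is correct, and the forward direction (bounding $\mathrm{SDP}(x)$ by the atomic norm via an explicit feasible lift) matches the paper's argument essentially verbatim. For the reverse direction, however, you take a genuinely different path than the paper. Both proofs start from the Carath\'eodory--Toeplitz decomposition $\toep(u) = VDV^*$ and observe that positive semidefiniteness forces $x = Vw$ into the range of $V$, but the paper then introduces an auxiliary vector $q$ with $V^*q = \sign(w)$, contracts the Schur-complement inequality $VDV^* \succeq t^{-1}Vww^*V^*$ against $q$ to obtain the single scalar inequality $\trace(D)\,t \ge \bigl(\sum_k |w_k|\bigr)^2$, and finishes with a \emph{global} AM--GM on the pair $(\trace(D), t)$. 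You instead compute the generalized Schur complement exactly, showing $t \ge x^*\toep(u)^\dagger x = w^* D^{-1} w = \sum_k |w_k|^2/d_k$ (the pseudoinverse calculation is correct because $V$ has full column rank, so $\toep(u)^\dagger = V(V^*V)^{-1}D^{-1}(V^*V)^{-1}V^*$ and $x^*\toep(u)^\dagger x = w^*D^{-1}w$), and then apply \emph{elementwise} AM--GM $|w_k| \le \tfrac12(d_k + |w_k|^2/d_k)$ before summing. Your route is more computational but avoids the slightly ad hoc construction of $q$, and makes the role of the pseudoinverse in the degenerate Schur condition explicit; the paper's route avoids ever having to write down $\toep(u)^\dagger$. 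Both produce the same bound $\sum_k|w_k| \le \tfrac{1}{2|J|}\trace(\toep(u)) + \tfrac12 t$, and indeed your intermediate inequality $t \ge \sum_k |w_k|^2/d_k$ implies the paper's $\trace(D)\,t \ge (\sum_k|w_k|)^2$ by Cauchy--Schwarz, so your argument is if anything slightly tighter at the intermediate step.
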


In the proposition, we used the superscript $^*$ to denote conjugate transpose and $\operatorname*{Toep}(u)$ to denote the Toeplitz matrix whose first column is equal to $u$.  The proof of this proposition relies on the following classical Vandermonde decomposition lemma for positive semidefinite Toeplitz matrices
\begin{lemma}[Caratheodory-Toeplitz, \cite{caratheodory1911variabilitatsbereich,caratheodory1911zusammenhang,toeplitz1911theorie}]\label{lm:vand} Any positive semidefinite Toeplitz matrix $P$ can be represented as
  follows
  \[
    P  =  V D V^*,
  \]
  where
  \[
  \begin{aligned}
    V & = \left[ {a} \left( f_1,0 \right) \cdots
    {a} \left( f_r,0 \right) \right]\,,\\
    D & = \tmop{diag} \left( \left[ d_1 \cdots d_r \right] \right)\,,
  \end{aligned}
  \]
 $d_k$ are real positive numbers, and $r = \rank(P)$.
\end{lemma}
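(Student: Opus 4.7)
The plan is to prove the Vandermonde decomposition by an operator-theoretic construction that exploits the shift-invariance built into the Toeplitz structure; the frequencies $f_k$ and weights $d_k$ will emerge as eigenvalues and squared components of a spectral decomposition. First I would factor $P \succeq 0$ through its range: write $P = U^* U$ with $U \in \C^{r \times N}$, where $N$ is the size of $P$, and let $u_0, \ldots, u_{N-1} \in \C^r$ denote its columns. The Toeplitz relation $P_{jk} = t_{j-k}$ translates into the shift-invariance $\langle u_j, u_k \rangle = \langle u_{j+1}, u_{k+1} \rangle$ for $0 \leq j,k \leq N-2$, where $\langle u, v\rangle := u^* v$. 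On the subspace $V_0 := \operatorname{span}(u_0,\ldots,u_{N-2}) \subseteq \C^r$ I would define a shift operator $S$ by $S u_j = u_{j+1}$. Both well-definedness and the isometry property onto $V_1 := \operatorname{span}(u_1,\ldots,u_{N-1})$ follow from the identity
\[
\Big\|\sum_j c_j u_{j+1}\Big\|^2 = \sum_{j,k} \bar c_j c_k \langle u_{j+1}, u_{k+1}\rangle = \sum_{j,k} \bar c_j c_k \langle u_j, u_k\rangle = \Big\|\sum_j c_j u_j\Big\|^2,
\]
which is an immediate consequence of shift-invariance.

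The crux of the argument is extending $S$ to a unitary operator on all of $\C^r$. Because $V_0$ and $V_1$ have equal dimension (matched by the isometry $S$), their orthogonal complements in $\C^r$ also have equal dimension, and any isometry between $V_0^\perp$ and $V_1^\perp$ extends $S$ to a unitary $\tilde S : \C^r \to \C^r$. The finite-dimensional spectral theorem for unitary operators then produces a unitary $W$ and a diagonal matrix $\Lambda = \diag(e^{i 2\pi f_1}, \ldots, e^{i 2\pi f_r})$ with $f_k \in [0,1)$ such that $\tilde S = W \Lambda W^*$. Unitarity of $\tilde S$ is precisely what forces the eigenvalues onto the unit circle and delivers the frequency parametrization.

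With the diagonalization in hand, the Vandermonde decomposition can be read off directly. Since $u_0 \in V_0$, iterating $\tilde S u_j = u_{j+1}$ gives $u_j = \tilde S^j u_0$ for $j = 0,\ldots,N-1$, and with $w := W^* u_0$ one computes
\[
t_{j-k} = \langle u_j, u_k \rangle = u_0^* \tilde S^{j-k} u_0 = w^* \Lambda^{j-k} w = \sum_{l=1}^r |w_l|^2\, e^{i 2\pi f_l (j-k)},
\]
which is exactly $P = V D V^*$ with $D = \diag(|w_1|^2, \ldots, |w_r|^2)$. Non-negativity $d_l \geq 0$ is automatic; moreover, $\rank(V D V^*)$ equals the number of distinct $f_l$ carrying strictly positive weight, and since this rank must equal $\rank(P) = r$ while there are only $r$ frequencies to begin with, all $r$ of them are forced to be distinct and all $d_l$ strictly positive. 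The main obstacle is the extension step: well-definedness of $S$ on $V_0$ is immediate from shift-invariance, but extending to a full unitary requires identifying $V_0$ and $V_1$ as equidimensional subspaces of a common ambient space of dimension $r$, which is precisely why the compact Gram factorization $P = U^* U$ with $U \in \C^{r \times N}$ (not an $N \times N$ square root) is essential to the argument.
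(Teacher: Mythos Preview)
The paper does not prove this lemma at all: it is stated as the classical Carath\'eodory--Toeplitz result with citations to the 1911 papers and then used as a black box (for Proposition~\ref{prop:sdp-char} and Proposition~\ref{pro:rank}). So there is no in-paper proof to compare against, and your proposal stands on its own.

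Your argument is correct and is a clean operator-theoretic proof: encode the Toeplitz structure as shift-invariance of a rank-$r$ Gram factorization, build a unitary shift on the $r$-dimensional range, and read the frequencies off its spectrum. One small slip in the final display: with your convention $\langle u,v\rangle = u^*v$ and $u_j=\tilde S^j u_0$ you get
\[
\langle u_j,u_k\rangle = u_0^*(\tilde S^j)^*\tilde S^k u_0 = u_0^*\tilde S^{\,k-j}u_0,
\]
not $\tilde S^{\,j-k}$; this only flips the sign of each $f_l$, so the conclusion survives after relabeling $f_l\mapsto -f_l\pmod 1$. Your rank-counting at the end is fine; a slightly sharper way to see that the $f_l$ are distinct and all $d_l>0$ is to note that the columns $u_j=\tilde S^j u_0$ span $\C^r$ (since $U$ has rank $r$), so $u_0$ is cyclic for $\tilde S$, which forces $\tilde S$ to have simple spectrum and $w=W^*u_0$ to have no zero component.
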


The Vandermonde decomposition can be computed efficiently via root finding or by solving a generalized eigenvalue problem~\cite{hua1990matrix}.

\begin{proof}[Proof of Proposition~\ref{prop:sdp-char}]
We prove the case $J = \{0, \ldots, n-1\}$. The other case can be proved in a similar manner. Denote the value of the right hand side of \eq{atomic_sdp} by $\mathrm{SDP}(x)$.  Suppose $x = \sum_k c_k a(f_k,\phi_k)$ with $c_k>0$.  Defining $u = \sum_k c_k a(f_k,0)$ and $t = \sum_k c_k$, we note that 
\[
\toep(u) = \sum_{k}c_k a(f_k,0) a(f_k,0)^* = \sum_{k}c_k a(f_k,\phi_k) a(f_k,\phi_k)^*.
\] 
Therefore, 
\begin{align}\label{eqn:verifyfeasibility}
		\begin{bmatrix}
\operatorname*{Toep}(u) & x\\
x^* & t
\end{bmatrix}
= \sum_{k} c_k \begin{bmatrix} a(f_k,\phi_k) \\ 1 \end{bmatrix}\begin{bmatrix} a(f_k,\phi_k) \\ 1 \end{bmatrix}^* 		 \succeq 0
	\end{align}

Now, $\frac{1}{n}\trace(\toep(u)) = t = \sum_k c_k$ so that $SDP(x) \leq \sum_k c_k$. Since this holds for any decomposition of $x$, we conclude that $\|x\|_{\mathcal{A} }\geq \mathrm{SDP}(x)$.

Conversely,  suppose for some $u$ and $x$,
\begin{equation}\label{eq:toep-psd}
\begin{bmatrix}
\operatorname*{Toep}(u) & x\\
x^* & t
\end{bmatrix} \succeq 0\,.
\end{equation}
In particular, $\operatorname*{Toep}(u)\succeq 0$.  Form a Vandermonde decomposition
\[
	\operatorname*{Toep}(u)= V D V^*
\]
as promised by Lemma~\ref{lm:vand}.  Since $V D V^* = \sum_k d_k a(f_k,0) a(f_k,0)^*$ and $\|a(f_k,0)\|_2=\sqrt{n}$, we have $\frac{1}{n}\trace(\operatorname*{Toep}(u)) = \trace(D)$.

Using this Vandermonde decomposition and the matrix inequality~\eq{toep-psd}, it follows that $x$ is in the range of $V$, and hence
\[
	x = \sum_k w_k a(f_k,0) = Vw
\]
for some complex coefficient vector $w = [\cdots, w_k, \cdots]^T$.  Finally, by the Schur Complement Lemma, we have
\[
	V D V^* \succeq t^{-1} V w w^* V^*
\]
Let $q$ be any vector such that $V^*q = \sign(w)$.  Such a vector exists because $V$ is full rank.  Then
\[
	\trace(D)= q^* V D V^*q \succeq t^{-1}q^* V w w^* V^*q = t^{-1} \left(\sum_k |w_k|\right)^2.
\]
implying that $\trace(D) t \geq \left(\sum_k |w_k|\right)^2$.   By the arithmetic geometric mean inequality,
\[
	\tfrac{1}{2n} \trace(\operatorname*{Toep}(u)) + \tfrac{1}{2} t = 	\tfrac{1}{2} \trace(D) + \tfrac{1}{2} t \geq \sqrt{\trace(D) t} \geq \sum_k |w_k| \geq \|x\|_\A
\]
implying that $\mathrm{SDP}(x)\geq \|x\|_{\mathcal{A}}$ since the previous chain of inequalities hold for any choice of $u,t$ that are feasible.
\end{proof}

There are several other approaches to proving the semidefinite programming characterization of the atomic norm.  As we will see below, the dual norm of the atomic norm is related to the maximum modulus of trigonometric polynomials (see equation \eqref{eqn:dualatomicnorm}).  Thus, proofs based on Bochner's Theorem~\cite{Megretski03}, the bounded real lemma~\cite{Dumitrescu:2007vw, Bhaskar:2012tq}, or spectral factorization~\cite{roh2006discrete} would also provide a tight characterization. It is interesting that the SDP for the continuous case is not very different from an SDP derived for basis pursuit denoising on a grid in \cite{spice_likes, stoica_new, stoica_offgrid_heuristic}. {The work \cite{stoica_offgrid_heuristic} also provides a heuristic to deal with off-grid frequencies, though there is no theory to certify exact recovery.}

\subsection{Atomic Norm Minimization for Continuous Compressed Sensing}

Recall that we observe only a subset of entries $T \subset J$. As prescribed in~\cite{Chandrasekaran:2010wp}, a natural algorithm for estimating the missing samples of a sparse sum of complex exponentials is the atomic norm minimization problem
\begin{equation}  \label{eqn:minimization}
\begin{array}{ll}
\minimize_{x} & \left\| x \right\|_{\mathcal{A}} \\
  \text{subject to} & \ x_j = x^{\star}_j , j \in T
 \end{array}
\end{equation}
or, equivalently, the semidefinite program
\begin{align}
\minimize_{u,\ x,\ t} &\ \frac{1}{2|J|} \trace(\toep(u)) + \frac{1}{2} t \nn\\
\text{subject\ to\ } & \begin{bmatrix}
\operatorname*{Toep}(u) & x\\
x^* & t
\end{bmatrix} \succeq 0 \label{eqn:tracemin}\\
&\ x_j = x^\star_j, j \in T.\nn
\end{align}
The main result of this paper is that this semidefinite program almost always recovers the missing samples and identifies the frequencies provided the number of measurements is large enough and the frequencies are reasonably well-separated. We formalize this statement for the case $J = \{-2M, \ldots, 2M\}$ in the following theorem.

\begin{theorem}
  \label{thm:main} Suppose we observe the time samples of
\[
x_j^\star = \sum_{k=1}^s c_k e^{i2\pi f_k j}
\]
on the index set  $T \subset J = \{-2M, \ldots, 2M\}$ of size $m$ selected uniformly at random.  Additionally, assume $\sign(c_k)$ are drawn i.i.d. from a symmetric distribution on the complex unit circle. If $\Delta_f \geq \frac{1}{M}$, then there exists a numerical constant $C$ such that
\[
    m \geq  C \max \left\{ \log^2 \frac{M}{\delta}, s \log \frac{s}{\delta}
    \log \frac{M}{\delta} \right\},
\]
is sufficient to guarantee that with probability at least $1 - \delta$, $x^{\star}$ is the unique optimizer to (\ref{eqn:minimization}).\end{theorem}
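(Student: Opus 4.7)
The plan is to reduce exact recovery to the construction of an explicit dual certificate. By strong duality applied to \eqref{eqn:tracemin} and the semidefinite characterization of $\|\cdot\|_{\A}$ in Proposition~\ref{prop:sdp-char}, $x^\star$ is the unique minimizer of \eqref{eqn:minimization} provided there exists $q\in\C^{|J|}$ with $q_j=0$ for $j\notin T$ such that the trigonometric polynomial
\[
Q(f)\ :=\ \langle q,\,a(f,0)\rangle\ =\ \sum_{j\in T}\overline{q_j}\,e^{i2\pi f j}
\]
satisfies the interpolation conditions $Q(f_k)=\sign(c_k)$ for $k=1,\dots,s$ and the strict inequality $|Q(f)|<1$ for every $f\in[0,1]\setminus\Omega$. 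This reduction is standard, so the entire technical content lies in building such a $Q$ from samples supported on the random set $T$.

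My plan for constructing $Q$ is to adapt the interpolation approach of Cand\`es and Fernandez-Granda to the random-subsampling setting. I would first fix the smooth, nonnegative deterministic kernel $\bar K$ of degree $2M$ whose coefficients $\{g_j\}_{j\in J}$ are those of the (normalized) squared Fej\'er kernel, and form the candidate
\[
\bar Q(f)\ =\ \sum_{k=1}^s \bar\alpha_k\,\bar K(f-f_k)+\bar\beta_k\,\bar K'(f-f_k)
\]
with $(\bar\alpha,\bar\beta)$ solving the $2s\times 2s$ system $\bar Q(f_k)=\sign(c_k)$, $\bar Q'(f_k)=0$; under the separation hypothesis $\Delta_f\ge 1/M$ this $\bar Q$ is the certificate known to obey $|\bar Q(f)|<1$ off $\Omega$. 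To handle the random mask $T$, I would replace $\bar K$ by the random kernel
\[
K(f)\ =\ \frac{|J|}{m}\sum_{j\in T}g_j\,e^{i2\pi f j},
\]
which satisfies $\E K=\bar K$, and I would define $Q$ by solving the analogous $2s\times 2s$ interpolation system with $K$ in place of $\bar K$. By construction $Q$ is a linear combination of $\{e^{i2\pi f j}\}_{j\in T}$, so it is automatically of the form $\langle q,a(f,0)\rangle$ with $\mathrm{supp}(q)\subset T$.

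Verifying the certificate then reduces to four concentration steps. (i) Using a matrix Bernstein inequality for the sum of independent contributions indexed by $T$, I would show that the $2s\times 2s$ interpolation matrix built from $K,K',K''$ at $\{f_k\}$ is close in operator norm to its deterministic counterpart, and in particular invertible once $m\gtrsim s\log(s/\delta)\log(M/\delta)$. (ii) Consequently the coefficients $(\alpha,\beta)$ are close to $(\bar\alpha,\bar\beta)$; here the random-sign hypothesis enters through a Hoeffding-type inequality on the complex unit circle, giving the $\sqrt{\log s}$ improvement that converts an $s^2$ into the advertised $s\log s$. (iii) I would establish uniform closeness of $Q,Q',Q''$ to $\bar Q,\bar Q',\bar Q''$ on a fine grid of $[0,1]$ via scalar Bernstein together with a union bound, the variance and sup-norm parameters coming from explicit bounds on $g_j$ and its derivative weights. (iv) Bernstein's inequality for trigonometric polynomials of degree $2M$ then transfers the grid bound to all $f\in[0,1]$ at the cost of choosing a grid of size $N=\Theta(Ms)$. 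Finally I would split $[0,1]$ into a near region $\bigcup_k\{f:|f-f_k|\le c/M\}$, where $|Q(f)|<1$ follows from a Taylor expansion using $\bar Q(f_k)=\sign(c_k)$, $\bar Q'(f_k)=0$ and the strictly negative curvature of $|\bar Q|^2$ at $f_k$ inherited from the deterministic analysis, and a far region, where $|\bar Q|$ is bounded away from $1$ and the concentration bound of (iv) closes the gap.

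The main obstacle will be step (iii), and it is what drives the sample complexity: one needs $\|Q-\bar Q\|_\infty$ together with its first two derivatives to be of order $1/\log(M/\delta)$ uniformly on $[0,1]$ with failure probability $\delta$. Obtaining this requires careful variance bookkeeping for the Bernstein inequality applied to the random kernel evaluated at a grid of $\Theta(Ms)$ points, and it is precisely the combination of these variance bounds with the Hoeffding step for the random signs that produces the $\max\{\log^2(M/\delta),\,s\log(s/\delta)\log(M/\delta)\}$ sample bound. Once (i)--(iv) are in place, the construction above yields a valid dual certificate and Theorem~\ref{thm:main} follows.
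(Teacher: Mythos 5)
Your proposal is structurally the same as the paper's: reduce to a dual certificate via Proposition~\ref{pro:optimality}, build $Q$ from a randomly masked squared Fej\'er kernel with $\E K = p\bar K$, prove invertibility of the $2s\times 2s$ interpolation system via matrix Bernstein, show grid-level closeness of $Q^{(\ell)}$ to $\bar Q^{(\ell)}$, extend to all of $[0,1]$ via Bernstein's polynomial inequality, and finish with the near/far dichotomy inherited from the deterministic analysis of Cand\`es--Fernandez-Granda. The skeleton is right.

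The one step I would push back on is (iii), which you yourself identify as the crux. Writing $Q^{(\ell)}(f)-\bar Q^{(\ell)}(f)$ as a target for ``scalar Bernstein plus union bound'' elides a genuine difficulty: this deviation is \emph{not} a sum of independent summands indexed by $T$, because the coefficients $(\alpha,\beta)$ involve $D^{-1}$, the inverse of the \emph{random} interpolation matrix, so the mask $\{\delta_j\}$ enters nonlinearly. The paper handles this by factoring
\[
\tfrac{1}{\sqrt{|\bar K_M''(0)|^\ell}}\bigl(Q^{(\ell)}(f)-\bar Q^{(\ell)}(f)\bigr)
= \underbrace{\langle u,\,L^*\bigl(v_\ell(f)-p\bar v_\ell(f)\bigr)\rangle}_{I_1^\ell(f)}
+ \underbrace{\langle u,\,(L-p^{-1}\bar L)^*\,p\bar v_\ell(f)\rangle}_{I_2^\ell(f)},
\]
and then treating the two sources of randomness separately: Talagrand's concentration inequality controls $\|v_\ell(f)-p\bar v_\ell(f)\|_2$ (the mask), after which Hoeffding on the complex unit circle controls the inner product with the random sign vector $u$, \emph{conditionally} on the good mask event $\mathcal E_{1,\tau}$. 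That conditioning is what makes the summands independent in the Hoeffding step, and it is the mechanism that actually delivers the $s\log s$ scaling rather than $s^2$ (not an unconditional Hoeffding bound as your sketch suggests). Without some version of this decomposition and conditioning, step (iii) as written does not go through. Two smaller inaccuracies worth correcting: the grid is of size $\Theta(M^3/\varepsilon)$ (from the crude Lipschitz bound $CM^3|f_a-f_b|$), not $\Theta(Ms)$, though this only shifts constants inside a logarithm; and the tolerance $\varepsilon$ in the uniform deviation bound need only be a fixed small numerical constant, not of order $1/\log(M/\delta)$, since the deterministic estimates of Proposition~\ref{pro:expected} already leave a constant margin.
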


We prove this theorem in Section~\ref{sec:dualconstruction}.  Note that Theorem~\ref{thm:main_general} is a corollary of Theorem~\ref{thm:main} via a simple reformulation.  We provide a proof of the equivalence in Appendix~\ref{apx:thm:main_general}.

\subsection{Duality and Frequency Localization}

To every norm, there is an associated dual norm, and the dual of the atomic norm for complex sinusoids has useful structure for both analysis and implementations.  In this section, we analyze the structure of the dual problem and show that the dual optimal solution can be used as a method to identify the frequencies $\{f_k\}$ that comprise the optimal $x^\star$.

Define the inner product as $\left<q, x\right> = x^*q$, and the real inner product as $\left<q, x\right>_\R = \operatorname*{Re}(\left<q, x\right>)$. Then the dual norm of $\|\cdot\|_\A$ is defined as
\begin{align}\label{eqn:dualatomicnorm}
  \left\| q \right\|_{\mathcal{A}}^{\ast}  = \sup_{\left\| x
  \right\|_{\mathcal{A}} \leq 1} \left\langle q, x \right\rangle_\R =
  \sup_{\phi \in [0, 2\pi), f \in [0, 1]} \langle q, e^{i\phi}a(f,0)
  \rangle_\R =  \sup_{f \in [0, 1]} | \langle q, a(f,0)
  \rangle |   
 \end{align}
 that is, the dual atomic norm is equal to the maximum modulus of the polynomial $Q(z) = \sum_{j\in J} q_j z^{-j}$ 
on the unit circle.   The dual problem of (\ref{eqn:minimization}) is thus
\begin{align}
  \operatorname*{maximize}_{q} &\ \ \left\langle q_T, x_T^{\star} \right\rangle_\R \nonumber \\
  \tmop{subject}\ \tmop{to} &\ \ \left\| q \right\|_{\mathcal{A}}^{\ast} \leq
  1\label{eqn:dual}
 \\
  &\ \ q_{T^c} = 0 \nonumber
\end{align}
which follows from a standard Lagrangian analysis~\cite{Chandrasekaran:2010wp}.   Note that the dual atomic norm problem is an optimization over polynomials with bounded modulus on the unit disk.

Let $\left( x, q \right)$ be primal-dual feasible to \eqref{eqn:minimization} and \eqref{eqn:dual}. We have that
  \begin{align*}
    \left\langle q, x \right\rangle_\R & = \left\langle q_T, x_T \right\rangle_\R, \text{\ since\ } q_{T^c} = 0.\\
    & = \left\langle q_T, x_T^\star \right\rangle_\R,  \text{\ since\ } x_T =
    x^{\star}_T\\
    & = \left\langle q, x^{\star} \right\rangle_\R  .
  \end{align*}
 Since the primal is only equality constrained, Slater's condition naturally
  holds, implying strong duality \cite[Section 5.2.3]{Boyd:2004uz}. By weak duality, we always have
  have
  \begin{align*}
    \left\langle q, x \right\rangle_\R  = \left<q, x^\star\right>_\R & \leq \left\| x \right\|_{\mathcal{A}}
  \end{align*}
  for any $x$ primal feasible and any $q$ dual feasible. Strong duality implies equality holds  \emph{if and only if} $q$ is dual optimal and $x$ is primal optimal. A straightforward consequence of strong duality is
a certificate of the support of the solution to \eqref{eqn:minimization}.

\begin{proposition}
  \label{pro:optimality}Suppose the atomic set $\A$ is composed of atoms defined by $[a(f,0)]_j = e^{i2\pi f j}, j \in J$ with $J$ being either $\{-2M, \cdots, 2M\}$ or $\{0, \cdots, n-1\}$. Then $\hat{x} = x^{\star}$ is the unique optimizer to
  (\ref{eqn:minimization}) if there exists a dual polynomial
  \begin{align}
    Q \left( f \right) & = \left<q, a(f,0)\right> =  \sum_{j \in J} q_j
    e^{- i 2 \pi jf}   \label{eqn:dualpoly}
  \end{align}
  satisfying
  \begin{align}
    Q \left( f_k \right) & = \tmop{sign} \left( c_k \right), \forall
    f_k \in \Omega  \label{eqn:condition:Q1}\\
    \left| Q \left( f \right) \right| & < 1, \forall f \notin \Omega 
    \label{eqn:condition:Q2}\\
    q_j & = 0, \forall j \notin T.  \label{eqn:condition:q1}
  \end{align}
\end{proposition}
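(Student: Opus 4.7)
The plan is a standard dual certificate argument in three stages: verify dual feasibility of $q$, invoke strong duality to conclude $x^\star$ is primal optimal, then squeeze an equality chain to force every optimal decomposition to live on $\Omega$.

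First I would check that $q$ is dual feasible for \eqref{eqn:dual}. Condition \eqref{eqn:condition:q1} directly gives $q_{T^c} = 0$. For the norm constraint, note that by \eqref{eqn:dualatomicnorm}, $\|q\|_\A^* = \sup_{f \in [0,1]} |Q(f)|$, and \eqref{eqn:condition:Q1}--\eqref{eqn:condition:Q2} combined with continuity yield $\sup_f |Q(f)| \leq 1$. Next, observe the computation $\langle q, a(f,\phi)\rangle = a(f,\phi)^* q = e^{-i\phi}\sum_{j \in J} q_j e^{-i 2\pi f j} = e^{-i\phi} Q(f)$. Writing $x^\star = \sum_k |c_k|\, a(f_k,\phi_k)$ with $\phi_k = \arg(c_k)$ and using $Q(f_k) = \sign(c_k) = e^{i\phi_k}$, this yields
\begin{equation*}
\langle q, x^\star\rangle_\R = \sum_k |c_k|\operatorname{Re}\left(e^{-i\phi_k} Q(f_k)\right) = \sum_k |c_k|.
\end{equation*}
By weak duality $\langle q, x^\star\rangle_\R \leq \|x^\star\|_\A$, and by the definition of $\|\cdot\|_\A$ applied to the given decomposition, $\|x^\star\|_\A \leq \sum_k |c_k|$. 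Thus equality holds throughout, which establishes strong duality and certifies $x^\star$ as a primal optimum.

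For uniqueness, suppose $\tilde x$ is another optimizer and fix an atomic decomposition $\tilde x = \sum_j \tilde c_j\, a(\tilde f_j, \tilde \phi_j)$ with $\tilde c_j > 0$ and $\sum_j \tilde c_j = \|\tilde x\|_\A$. Strong duality gives $\langle q, \tilde x\rangle_\R = \|\tilde x\|_\A$, which unfolds to
\begin{equation*}
\sum_j \tilde c_j = \sum_j \tilde c_j \operatorname{Re}\left(e^{-i\tilde\phi_j} Q(\tilde f_j)\right) \leq \sum_j \tilde c_j |Q(\tilde f_j)| \leq \sum_j \tilde c_j.
\end{equation*}
Equality throughout forces $|Q(\tilde f_j)| = 1$ for every $j$, and the strict inequality \eqref{eqn:condition:Q2} then forces $\tilde f_j \in \Omega$. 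Hence $\tilde x$ belongs to the $s$-dimensional subspace $V_\Omega := \operatorname{span}\{a(f_k,0) : k = 1,\dots,s\}$. The difference $\tilde x - x^\star \in V_\Omega$ vanishes on $T$, so injectivity of the restriction $V_\Omega \to \C^T$ — equivalently, full column rank of the partial Vandermonde matrix with rows indexed by $T$ and columns $a(f_k,0)|_T$ — forces $\tilde x = x^\star$.

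The only genuinely nontrivial point is the last injectivity claim; the feasibility/optimality half is routine Lagrangian bookkeeping. The partial-Vandermonde injectivity is a separate combinatorial/linear-algebraic fact about $T$ and $\Omega$: one either assumes it (it holds automatically whenever $|T| \geq s$ together with mild genericity of $T$ and distinctness of the $f_k$), or one absorbs it into the construction of the dual certificate in Section~\ref{sec:dualconstruction}, where the sampling regime of Theorem~\ref{thm:main} ensures it holds with the stated probability. I would state this explicitly as the only missing ingredient, since the remainder of the argument depends only on the three interpolation conditions already listed.
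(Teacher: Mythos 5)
Your proof follows the same dual-certificate strategy as the paper: verify dual feasibility, derive $\langle q, x^\star\rangle_\R = \sum_k |c_k|$ and sandwich it against $\|x^\star\|_\A$ to certify primal optimality, then use the strict bound $|Q(f)| < 1$ off $\Omega$ to force every optimal decomposition onto $\Omega$, and finally appeal to a linear-algebraic fact to pin down uniqueness. The first half is identical in substance to the paper's argument.

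Where you diverge is the last step, and here your version is actually sharper than the paper's. The paper closes the uniqueness argument by asserting that ``the set of atoms with frequencies in $\Omega$ are linearly independent,'' referring to the full atoms $a(f_k,0) \in \C^{|J|}$. That is the wrong object: once you know $\tilde x$ and $x^\star$ are both supported on $\Omega$, their difference lies in $V_\Omega := \operatorname{span}\{a(f_k,0)\}$ and vanishes on $T$, so what is needed is injectivity of the restriction map $V_\Omega \to \C^T$, i.e.\ full column rank of the \emph{partial} Vandermonde matrix with rows indexed by $T$. Linear independence of the full atoms (trivially true for $s \leq |J|$ and distinct $f_k$) does not imply this. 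You correctly isolate the restricted-injectivity requirement as the genuine additional ingredient, and you are right that in the setting of Theorem~\ref{thm:main} it is supplied implicitly by the dual-certificate construction (Lemma~\ref{lm:invertibility} forces $D$, hence the relevant restriction, to be well-conditioned). One caution: your parenthetical that injectivity ``holds automatically whenever $|T| \geq s$ together with mild genericity'' deserves a caveat rather than a claim. Generalized Vandermonde matrices with nodes on the unit circle and arbitrary integer exponent sets can be singular even with $|T| \geq s$; for instance $s = 2$, $T = \{0,2\}$, $f_1 = 0$, $f_2 = 1/2$ gives a rank-one matrix. So genericity here is doing real work and the condition should be stated, not waved at. With that caveat, your treatment of uniqueness is the more careful of the two, and the rest of your argument matches the paper.
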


The polynomial $Q \left( f \right)$ works as a dual certificate to certify
that $x^\star$ is the primal optimizer. The conditions on $Q
\left( f \right)$ are imposed on the values of the dual polynomial (condition
(\ref{eqn:condition:Q1}) and (\ref{eqn:condition:Q2})) and on the coefficient
vector $q$ (condition (\ref{eqn:condition:q1})).  To prove Theorem~\ref{thm:main}, we will construct a dual certificate satisfying the conditions of Proposition~\ref{pro:optimality} in Section \ref{sec:dualconstruction}.

\begin{proof}[Proof of Proposition \ref{pro:optimality}]
Any vector $q$ that satisfies the conditions of Proposition \ref{pro:optimality}
   is  dual feasible.  We also have that
  \begin{align*}
    \left\langle q, x^{\star} \right\rangle_\R & = \Big\langle q, \sum_{k = 1}^s c_k a
    \left( f_k, 0 \right) \Big\rangle_\R\\
    & = \sum_{k = 1}^s \tmop{Re} \left( c_k^{\ast} \left\langle q, a
    \left( f_k, 0\right) \right\rangle \right)\\
    & = \sum_{k = 1}^s \tmop{Re} \left( c_k^{\ast} \tmop{sign} \left( c_k
    \right) \right)\\
    & = \sum_{k = 1}^s \left| c_k \right|\\
    & \geq \left\| x^{\star} \right\|_{\mathcal{A}},
  \end{align*}
  where the last inequality is due to the definition of atomic norm. 
  On the other hand, H\"older's inequality states $\left<q, x^\star\right>_\R \leq \|q\|_\A^* \|x^\star\|_\A \leq \|x^\star\|_\A$, implying $\left<q, x^\star\right>_\R = \|x^\star\|_\A$.  
  Since $(x^\star, q)$ is primal-dual feasible,  we conclude that $x^{\star}$ is a primal optimal solution and $q$ is a dual optimal solution because of strong duality.
  
  For uniqueness, suppose $\hat{x} = \sum_k \hat{c}_k a ( \hat{f}_k, 0
  )$ with $\left\| \hat{x} \right\|_{\mathcal{A}} = \sum_k \left|
  \hat{c}_k \right|$ is another optimal solution. We then have for the dual
  certificate $q$:
  \begin{align*}
    \left\langle q, \hat{x} \right\rangle_\R & = \Big\langle q, \sum_k
    \hat{c}_k a ( \hat{f}_k, 0 ) \Big\rangle_\R\\
    & = \sum_{f_k \in \Omega} \tmop{Re} \left( \hat{c}_k^{\ast}
    \left\langle q, a \left( f_k, 0 \right) \right\rangle \right) + \sum_{\hat{f}_l \notin
    \Omega} \tmop{Re} \left( \hat{c}_l^{\ast} \left\langle q, a \left( \hat{f}_l, 0
    \right) \right\rangle \right)\\
    & < \sum_{f_k \in \Omega} \left| \hat{c}_k \right| + \sum_{\hat{f}_l \notin
    \Omega} \left| \hat{c}_l \right|\\
    & = \left\| \hat{x} \right\|_{\mathcal{A}}
  \end{align*}
  due to condition (\ref{eqn:condition:Q2}) \ if $\hat{x}$ is not solely
  supported on $\Omega$, contradicting strong duality. So all optimal solutions are supported on $\Omega$. \
  Since for both $J = \{-2M, \cdots, 2M\}$ and $\{0, \cdots, n-1\}$, the set of atoms with frequencies in $\Omega$ are linearly
  independent, the optimal solution is unique.
\end{proof}

An interesting and useful consequence of this proposition is that the dual solution $\hat{q}$ provides a way to determine the composing frequencies of $x^\star$.  One could evaluate the dual trigonometric polynomial
$\hat{Q}(f) := \left<\hat{q}, a(f, 0)\right>$ and localize the frequencies by identifying the locations where the polynomial $\hat{Q}(f)$ achieves modulus $1$. The evaluation of $\hat{Q}(f)$ can be performed efficiently using Fast Fourier Transform. Once the frequencies are estimated, the coefficients can be obtained by solving a linear system of equations. We illustrate frequency localization from dual polynomial in Figure \ref{fig:freqlocalization}.
\begin{figure}
\begin{center}
 \includegraphics[width=.5\textwidth, trim = 15mm 10mm 15mm 15mm, clip = true]
 {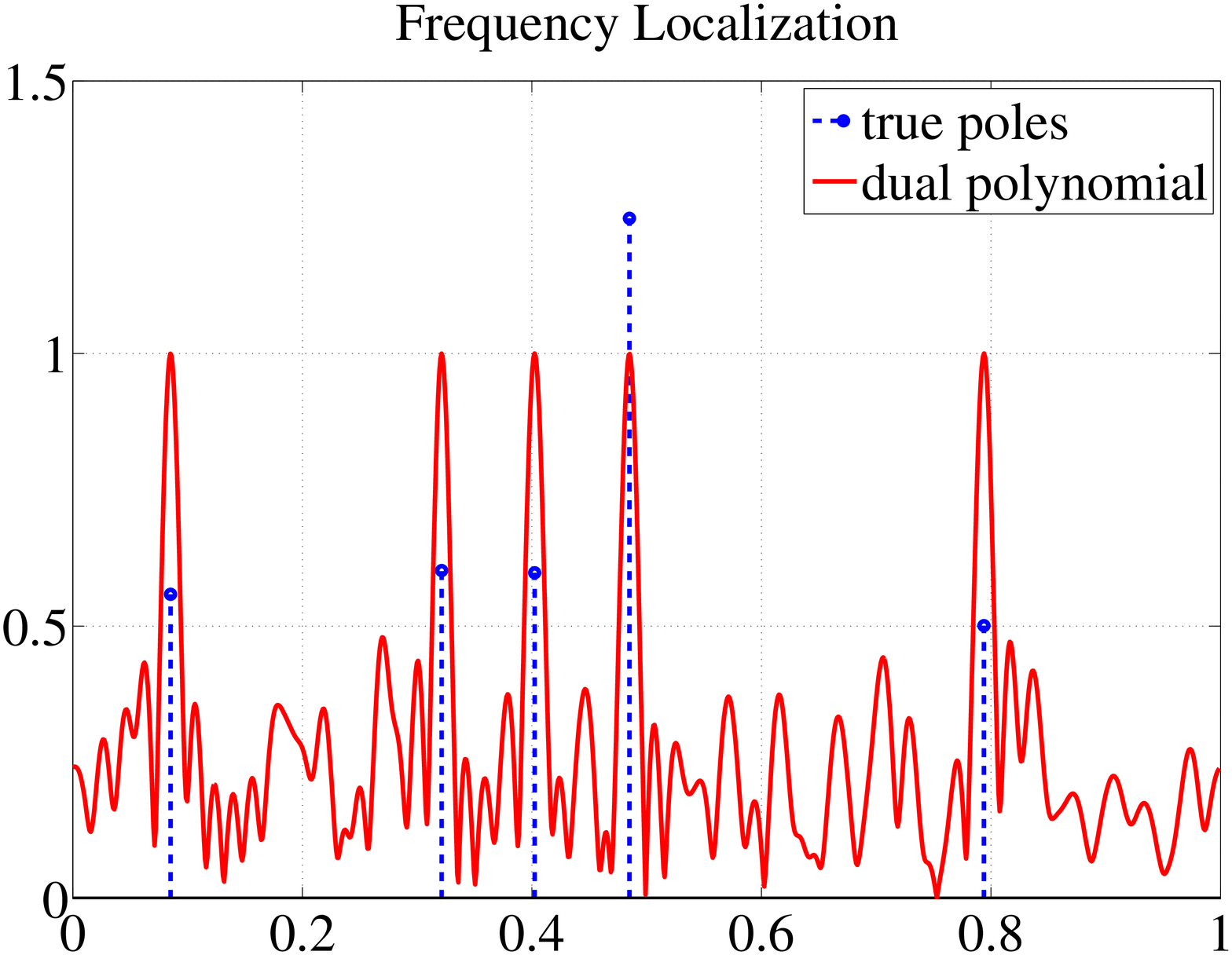} \caption{ \small {\bf Frequency localization from dual polynomial.} Original frequencies in the signal (dashed blue) with their heights representing the coefficient magnitudes and the dual polynomial modulus (solid red) obtained from a dual optimum. The recovered frequencies are obtained by identifying points where the dual polynomial has modulus one.} \label{fig:freqlocalization} 
 \end{center} 
\end{figure}

We would like to caution the reader that the dual optimal solutions are not unique in general. However, we can show that any dual optimal solution \emph{must contain the $s$ frequencies in $\Omega$ whenever the optimal primal solution is $x^\star$.} To see this, let $\hat{\Omega} = \{f: |\hat{Q}(f)| = 1\}$ be the set of recovered frequencies, and assume $\Omega /\hat{\Omega} \neq \emptyset$, then we have 
\begin{align*}
\left<\hat{q}, x^\star\right>_\R &= \langle \hat{q}, \sum_k
    {c}_k a ( {f}_k, 0 ) \rangle_\R\\
    & = \sum_{f_k \in \hat{\Omega}\bigcap\Omega} \tmop{Re} \left( {c}_k^{\ast}
   \hat{Q}(f_k) \right) + \sum_{f_l \in
    \Omega/\hat{\Omega}} \tmop{Re} \left({c}_l^{\ast} \hat{Q}(f_l)  \right)\\
    & < \sum_{f_k \in \hat{\Omega}\bigcap \Omega } \left| {c}_k \right| + \sum_{f_l \in
    \Omega/\hat{\Omega}} \left|{c}_l \right|\\
    & = \left\| x^\star \right\|_{\mathcal{A}},
\end{align*}
where get the strict inequality because $|\hat{Q}(f_l)| < 1$ for $f_l \in \Omega/\hat{\Omega}$, contradicting strong duality. Therefore, we have $\Omega \subset \hat{\Omega}$.  

In general, the set $\hat{\Omega}$ might contain spurious frequencies.  However, if we leverage the fact that the atomic norm is representable in terms of linear matrix inequalities, we can show that most semidefinite programming solvers will find good dual certificates.  To make this precise, let us study the dual semidefinite program of the atomic norm problem~\eqref{eqn:tracemin}. This dual problem also has a semidefinite programming formulation: 
\begin{align}
  \operatorname*{maximize}_{q, H} &\ \ \left\langle q_T, x_T^{\star} \right\rangle_\R\nonumber \\
  \tmop{subject}\ \tmop{to} &
 \ \ \left[\begin{array}{ll}
  H & -q\\
  -q^* & 1
  \end{array} \right] \succeq 0 \label{eqn:dual1}\\
  &\ \ \sum_{k=1}^{|J|-j} H_{k, k+j} = \left\{\begin{array}{ll}
  1, & j = 0,\\
  0, & j = 1, 2, \ldots, |J|-1.
  \end{array}\right.\label{eqn:dual2}\\
  & \ \ \ H \text{\ is Hermitian}\label{eqn:dual3}\\
  &\ \ \ q_{T^c} = 0 \nonumber
\end{align}
Here the three linear matrix inequalities \eqref{eqn:dual1}, \eqref{eqn:dual2}, and \eqref{eqn:dual3} are equivalent to the dual norm constraint $\|q\|_\A^* \leq 1$.  We emphasize that most solvers can directly return a dual optimal solution for free when solving the primal problem. So it is not necessary to solve the dual semidefinite program to obtain a dual optimum.

The following proposition addresses when $\hat{\Omega} = \Omega$. The proof is given in Appendix \ref{apx:pro:freqlocalization}.

\begin{proposition}[Exact Frequency Localization]
\label{pro:freqlocalization}
For signal $x = \sum_{k=1}^s c_k a(f_k, 0)$, denote by $\mathcal{D} = \{(q, H)\}$ the set of optimal solutions of the dual semidefinite program. If there exists one $(q, H) \in \mathcal{D}$ such that the dual polynomial $Q(f) = \left<{q}, a(f,0)\right>$ satisfies
\begin{align}
Q(f_k) &= \sign(c_k), k = 1, \ldots, s\label{eq:interpolation}\\
|Q(f)| &< 1, f \neq f_k, \forall k\label{eq:magnitude}
\end{align}
then the following statements hold:
\begin{enumerate}
\item all $(q, H)$ in the relative interior of $\mathcal{D}$ form strictly complementary pairs with the (unique) primal optimal solution, i.e., 
\begin{align}
\rank\left(\begin{bmatrix} H & -q\\
-q^* & 1\end{bmatrix} \right) + s = |J|+1;
\end{align}
\item all $(q, H)$ in the relative interior of $\mathcal{D}$ satisfy \eqref{eq:interpolation} and \eqref{eq:magnitude};
\item the dual central path converges to a point in the relative interior point of $\mathcal{D}$.
\end{enumerate}
\end{proposition}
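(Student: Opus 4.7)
The plan is to derive all three conclusions from the strict complementarity theorem for semidefinite programming applied to the primal SDP~\eqref{eqn:tracemin} and its dual \eqref{eqn:dual1}--\eqref{eqn:dual3}. First I would note that the hypothesis combined with the uniqueness argument of Proposition~\ref{pro:optimality} ensures $x^\star$ is the unique primal $x$-optimum, and I would identify the canonical primal optimum
\[
\Psi^\star_{\mathrm{can}} = \sum_{k=1}^s |c_k|\, v_k v_k^*, \qquad v_k = \begin{bmatrix} a(f_k,\phi_k) \\ 1 \end{bmatrix},\ \phi_k = \arg c_k,
\]
which is primal feasible with objective value $\sum_k |c_k| = \|x^\star\|_\A$ (the construction used in the proof of Proposition~\ref{prop:sdp-char}) and has rank exactly $s$ since the Vandermonde vectors $a(f_k,0)$ are linearly independent.

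Next, for any dual optimum $M = \begin{bmatrix} H & -q \\ -q^* & 1 \end{bmatrix} \in \mathcal{D}$, complementary slackness forces $M\,\Psi^\star_{\mathrm{can}} = 0$, hence $M v_k = 0$ for every $k$. Setting $N := H - qq^*$ (PSD by Schur complement, with $\rank M = \rank N + 1$), the top block of $M v_k = 0$ rearranges to $N a(f_k,0) = 0$, and linear independence of $\{a(f_k,0)\}_{k=1}^s$ yields the uniform bound $\rank M \leq |J|+1-s$ on all of $\mathcal{D}$. Slater's condition holds on both sides (take $\toep(u)=cI$ with $c$ large on the primal, and $H=\tfrac{1}{|J|}I,\ q=0$ on the dual), so the SDP strict-complementarity theorem produces some $M^\star \in \mathcal{D}$ with $\rank M^\star = |J|+1-s$, saturating the bound.

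The rank of PSD matrices in a convex set is maximized at every relative-interior point, since the range of a convex combination of a maximum-rank matrix and another PSD matrix equals the larger subspace. Consequently every $(q,H)\in\operatorname{ri}(\mathcal{D})$ achieves $\rank M = |J|+1-s$, which is claim (1). Claim (2) then follows from two observations: for every $M\in\mathcal{D}$ the bottom row of $M v_k = 0$ is precisely $Q(f_k)=\sign(c_k)$, giving \eqref{eq:interpolation}; and a short expansion of the trace constraints \eqref{eqn:dual2} yields the key identity $a(f,0)^* H a(f,0) = 1$ for all $f$, so $a(f,0)^* N a(f,0) = 1 - |Q(f)|^2$. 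Therefore $|Q(f)|=1$ iff $a(f,0) \in \ker N$. For relative-interior $(q,H)$, $\dim\ker N = s$ and the $s$ independent vectors $a(f_k,0)$ already fill it, so $\ker N = \mathrm{span}\{a(f_k,0)\}_{k=1}^s$; combined with the Vandermonde independence this gives $|Q(f)|=1$ iff $f\in\Omega$, i.e.\ \eqref{eq:magnitude}.

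Claim (3) is a standard property of the log-determinant central path in semidefinite programming: when Slater's condition holds on both sides, the dual central path is a smooth curve in the strict interior of the dual feasible set that converges, as the barrier parameter tends to zero, to the analytic center of $\mathcal{D}$, which by definition lies in $\operatorname{ri}(\mathcal{D})$. The main obstacle I anticipate is the second step: cleanly coupling the rank upper bound $\rank M \leq |J|+1-s$ with the SDP strict-complementarity theorem while allowing for a possibly non-unique primal optimum in $(u^\star,t^\star)$. After that, everything else unspools from the identity $a(f,0)^* N a(f,0) = 1 - |Q(f)|^2$ and the relative-interior rank property.
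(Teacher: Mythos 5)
Your overall plan runs parallel to the paper's, and several of your ingredients are correct and even a bit slicker than what the paper writes: the identity $a(f,0)^* H a(f,0) \equiv 1$ from the trace constraints \eqref{eqn:dual2} cleanly gives $a(f,0)^* (H - qq^*) a(f,0) = 1 - |Q(f)|^2$, and the Schur-complement bookkeeping $\rank M = 1 + \rank N$ lets you reduce claim (2) to a kernel-dimension count. The appeal to the common range space of relative-interior points is the same Goldfarb--Scheinberg fact the paper cites, and claim (3) via the analytic center / central path is also what the paper does.

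The genuine gap is the sentence ``the SDP strict-complementarity theorem produces some $M^\star \in \mathcal{D}$ with $\rank M^\star = |J|+1-s$, saturating the bound.'' There is no such theorem. Unlike linear programming, where the Goldman--Tucker theorem guarantees a strictly complementary primal--dual pair, semidefinite programs can \emph{fail} strict complementarity even when Slater's condition holds on both sides (this is the central message of Alizadeh--Haeberly--Overton's nondegeneracy paper, which the paper cites). Strict complementarity holds \emph{generically} in SDP, but that does not help for a fixed instance. You correctly derived the upper bound $\rank M \leq |J|+1-s$ for every dual optimum from $M v_k = 0$, but a construction is needed to show the bound is attained. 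This is exactly the role of the paper's Lemma~\ref{lm:maxrank}: starting from the given $q$, the nonnegative trigonometric polynomial $R(e^{i2\pi f}) = 1 - |Q(f)|^2$ has precisely $s$ zeros on the circle, and the positive-real lemma together with the spectral factorization theorem lets one exhibit a Gram matrix $G \succeq 0$ of rank exactly $|J| - s$ representing $R$; setting $H = G + qq^*$ then produces a dual optimum with $\rank M = |J|+1-s$. That explicit construction is the crux of claim (1), and your proof cannot close without it or an equivalent argument.

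Everything downstream of that point in your proposal would go through once the maximal-rank dual optimum is in hand. In particular, you do \emph{not} need to worry about your anticipated obstacle (non-uniqueness of $(u^\star,t^\star)$): the hypotheses force the primal $x$-optimum to be $x^\star$ uniquely by the argument of Proposition~\ref{pro:optimality}, and the Vandermonde decomposition then pins down $u^\star$ and $t^\star$ as well, so the primal optimal face is a single point of rank $s$.
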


Statement 2) of the proposition implies that we could use any optimal solution in the relative interior of the dual optimal set to localize frequencies, while statement 3) says any primal-dual path following algorithm for solving semidefinite programs (e.g., SDPT3) will produce such an interior point in the dual optimal set.

Under the conditions of Proposition \ref{pro:freqlocalization}, the relative interior of $\mathcal{D}$ excludes dual optimal solutions that contain spurious frequencies. A particular pathological case is when all coefficients ${c_k}$ in \eqref{eqn:signalmodel} are positive and $q = e_1$ (assume $J = \{0, \ldots, n-1\}$ and every index is observed), which is apparently a dual optimal solution. The dual polynomial corresponding to $e_1$ is constant $1$ and contains every frequency in $[0, 1]$. It is easy to show that the only $H$ such that $(H, e_1)$ satisfies \eqref{eqn:dual1}, \eqref{eqn:dual2}, and \eqref{eqn:dual3} is $H = e_1 e_1^*$. Hence, $(H, e_1)$ is not maximal complementary unless $x^\star$ contains at least $s = n$ frequencies, in which case the only $n-1$ degree trigonometric polynomial satisfying \eqref{eq:interpolation} is constant $1$.

However, this pathological case will not happen if we have a few well-separated frequencies and the phases are random\footnote{We believe the randomness requirement of phases are merely technical.}. Indeed, as we will see in Section \ref{sec:dualconstruction}, the way we prove Theorem \ref{thm:main} and Theorem \ref{thm:main_general} is to explicitly construct a dual polynomial satisfying \eqref{eq:interpolation} and \eqref{eq:magnitude}. Therefore, combining these theorems  and Proposition \ref{pro:freqlocalization}, we obtain the following corollary:
\begin{corollary}
Under the conditions of Theorem \ref{thm:main_general} (or Theorem \ref{thm:main}), with high probability, we could identify the frequencies in $x^\star$ from a dual optimal solution in the relative interior of the set of all dual optimal solutions.
\end{corollary}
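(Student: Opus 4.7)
The plan is to obtain the corollary as a direct concatenation of Theorem~\ref{thm:main_general} (or Theorem~\ref{thm:main}) with Proposition~\ref{pro:freqlocalization}; no new analytical machinery should be needed. First I would invoke the theorems to claim that, with probability at least $1-\delta$ under the stated sample and separation hypotheses, the dual polynomial $Q(f)=\langle q, a(f,0)\rangle$ constructed in the proof in Section~\ref{sec:dualconstruction} satisfies the interpolation identity $Q(f_k)=\sign(c_k)$ for every $f_k \in \Omega$ together with the strict bound $|Q(f)|<1$ for $f \notin \Omega$. This is exactly the content of the later proof of Theorem~\ref{thm:main}: the certificate is engineered to meet conditions \eqref{eq:interpolation} and \eqref{eq:magnitude} pointwise, not merely the weaker Proposition~\ref{pro:optimality} conditions.

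Next, I would promote this $q$ to a feasible dual SDP pair $(q,H)$ for \eqref{eqn:dual1}--\eqref{eqn:dual3}. Since the supremum $\sup_f |Q(f)|$ equals $1$, the dual atomic norm $\|q\|_{\A}^*$ equals $1$, and by the semidefinite characterization of the dual unit ball (which is the dual to Proposition~\ref{prop:sdp-char}) there exists a Hermitian $H$ with unit-trace Toeplitz-constrained structure such that the block matrix in \eqref{eqn:dual1} is positive semidefinite. Combined with $q_{T^c}=0$ (imposed on the constructed certificate) and the primal-dual equality $\langle q_T, x_T^\star\rangle_{\mathbb{R}} = \|x^\star\|_{\A}$ verified in the proof of Proposition~\ref{pro:optimality}, the pair $(q,H)$ lies in the optimal set $\mathcal{D}$ of the dual SDP.

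Having produced such a $(q,H) \in \mathcal{D}$ satisfying \eqref{eq:interpolation} and \eqref{eq:magnitude}, I would simply apply statement~2 of Proposition~\ref{pro:freqlocalization}: every pair in the relative interior of $\mathcal{D}$ enjoys the same interpolation and strict sub-unit-modulus properties. In particular, the dual polynomial associated with any such relative-interior pair attains modulus one \emph{exactly} on the set $\Omega$ of true frequencies, so $\hat{\Omega}=\Omega$ and the frequencies are identified exactly. Statement~3 further ensures that standard primal-dual interior-point solvers, which track the central path, return precisely such a relative-interior point, making the claim algorithmically operational.

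The only delicate point, rather than a serious obstacle, is ensuring the high-probability certificate constructed in Section~\ref{sec:dualconstruction} really can be completed to a feasible pair $(q,H)\in\mathcal{D}$ rather than merely witnessing primal optimality in the sense of Proposition~\ref{pro:optimality}. This is handled cleanly by the SDP duality between Proposition~\ref{prop:sdp-char} and its dual formulation \eqref{eqn:dual1}--\eqref{eqn:dual3}: a feasible $q$ for the norm problem~\eqref{eqn:dual} with $\|q\|_{\A}^* \le 1$ always admits some $H$ completing it to an SDP-feasible dual solution, and any such completion inherits optimality from $\langle q_T, x_T^\star\rangle_{\mathbb{R}} = \|x^\star\|_{\A}$. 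With that observation in place, the corollary follows immediately by taking a union over the two high-probability events (existence of the certificate and uniqueness of the primal optimum), both of which are already encompassed by the probability $1-\delta$ guarantee of the main theorems.
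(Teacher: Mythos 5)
Your proposal is correct and follows the paper's approach: the paper justifies the corollary in a single sentence by observing that the certificate constructed in Section~\ref{sec:dualconstruction} satisfies \eqref{eq:interpolation} and \eqref{eq:magnitude}, so Proposition~\ref{pro:freqlocalization} applies directly. Your additional step --- noting that the certificate vector $q$ can always be completed to an SDP-feasible (hence SDP-optimal) pair $(q,H)$ via the Gram-matrix characterization of the dual norm ball, so that the hypothesis of Proposition~\ref{pro:freqlocalization} is genuinely met --- is a detail the paper leaves implicit, and it is the right thing to check.
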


\subsection{The Power of Rank Minimization}

The semidefinite programming characterization of the atomic norm also allows us to draw connections to the study of rank minimization~\cite{Candes:2008tg, Recht:2010ht, Gross:2009id, Recht:2011up}.  A direct way to exploit sparsity in the frequency domain is via minimization of the following ``$\ell_0$-norm" type quantity
\[
\|x\|_{\A, 0} = \min_{\substack{c_k \geq 0, \; \phi_k \in [0, 2\pi) \\ f_k \in [0, 1]}}\Big\{s: x = \sum_{k=1}^s c_k a(f_k, \phi_k)\Big\} \,.
\]
This penalty function chooses the \emph{sparsest} representation of a vector in terms of complex exponentials.  This combinatorial quantity is closely related to the rank of positive definite Toeplitz matrices as delineated by the following Proposition:

\begin{proposition}\label{pro:rank}
The quantity $\|x\|_{\A,0}$ is equal to the optimal value of the following rank minimization problem:
\begin{equation}\label{eqn:rankminimization}
\begin{array}{ll} \minimize_{u, t}& \  \rank(\toep(u))\\
\operatorname*{subject\ to\ }&  \begin{bmatrix}
\operatorname*{Toep}(u) & x\\
x^* & t
\end{bmatrix} \succeq 0.
\end{array}
\end{equation}
\end{proposition}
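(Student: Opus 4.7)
The plan is to prove equality by showing both inequalities, reusing essentially the same constructions that appeared in the proof of Proposition~\ref{prop:sdp-char}. The rank plays the role that the trace played there, so the arithmetic–geometric step is unnecessary; what remains are two ``extract a representation'' arguments, one in each direction.

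For the inequality $\|x\|_{\mathcal{A},0} \geq \rank^\star$, where $\rank^\star$ denotes the optimal value of \eqref{eqn:rankminimization}, I would start from any decomposition $x = \sum_{k=1}^s c_k\, a(f_k,\phi_k)$ with $c_k>0$ realizing $s=\|x\|_{\mathcal{A},0}$. Setting $u = \sum_k c_k\, a(f_k,0)$ and $t = \sum_k c_k$, exactly as in the proof of Proposition~\ref{prop:sdp-char}, gives $\toep(u) = \sum_k c_k\, a(f_k,0)\, a(f_k,0)^*$, so $\rank(\toep(u)) \leq s$, and the block PSD constraint holds by the same rank-one-sum identity as in \eqref{eqn:verifyfeasibility}. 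Thus $(u,t)$ is feasible for \eqref{eqn:rankminimization}, giving $\rank^\star \leq s = \|x\|_{\mathcal{A},0}$.

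For the reverse inequality $\rank^\star \geq \|x\|_{\mathcal{A},0}$, I would take a minimizer $(u,t)$ of \eqref{eqn:rankminimization} with $r = \rank(\toep(u))$. By the Caratheodory–Toeplitz Lemma~\ref{lm:vand}, $\toep(u) = VDV^*$ with $V = [a(f_1,0)\,\cdots\,a(f_r,0)]$ and $D$ diagonal positive. The block PSD constraint, via a Schur complement argument, forces $x \in \mathrm{Range}(\toep(u)) = \mathrm{Range}(V)$, so one may write $x = Vw = \sum_{k=1}^r w_k\, a(f_k,0)$ for some complex vector $w$. Polar-decomposing each coordinate as $w_k = |w_k| e^{i\phi_k}$ and using $e^{i\phi_k} a(f_k,0) = a(f_k,\phi_k)$, one obtains $x = \sum_{k=1}^r |w_k|\, a(f_k,\phi_k)$, an atomic representation using at most $r$ atoms (fewer if some $w_k$ vanish). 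Therefore $\|x\|_{\mathcal{A},0} \leq r = \rank^\star$.

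There is no real obstacle here; the proof is structurally a stripped-down version of the proof of Proposition~\ref{prop:sdp-char}, with the arithmetic–geometric step replaced by the trivial observation that a Vandermonde decomposition of length $r$ produces at most $r$ atoms. The only point that warrants a sentence of care is the deduction $x \in \mathrm{Range}(\toep(u))$ from positive semidefiniteness of the bordered matrix, which follows either from the standard Schur complement characterization or directly from the observation that any vector $y$ in the kernel of $\toep(u)$ satisfies $y^* x = 0$ by testing the PSD condition against $\begin{bmatrix} y \\ 0 \end{bmatrix}$ and $\begin{bmatrix} y \\ \alpha \end{bmatrix}$.
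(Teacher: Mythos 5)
Your proof is correct and follows essentially the same route as the paper's: both directions are established by passing between atomic decompositions and the Vandermonde decomposition of $\toep(u)$, with the feasibility identity from \eqref{eqn:verifyfeasibility} in one direction and the range argument via Lemma~\ref{lm:vand} in the other. The only cosmetic difference is that the paper sets aside the trivial cases $x=0$ and $r^\star=n$ (disposing of the latter by expanding $x$ in the Fourier basis) rather than running the Vandermonde argument through them; your version works there too, but making that split explicit would match the paper exactly.
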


\begin{proof}
The case for $x = 0$ is trivial. For $x \neq 0$, denote by $r^\star$ the optimal value of \eqref{eqn:rankminimization}. We first show $r^\star \leq \|x\|_{\A, 0}$. Suppose $\|x\|_{\A, 0} = s < n$.  Assume the decomposition $x = \sum_{k=1}^s c_k a(f_k,\phi_k)$ with $c_k > 0$ achieves $\|x\|_{\A, 0}$, and set $u = \sum_k c_k a(f_k, 0)$ so that $\operatorname*{Toep}(u) = \sum_k c_k a(f_k, \phi_k) a(f_k, \phi_k)^* \succeq 0, t = \sum_k c_k > 0$.  Then, as we saw in \eqref{eqn:verifyfeasibility},
\[
\begin{bmatrix}\operatorname*{Toep}(u) & x\\
x^* & t
\end{bmatrix} 	=  \sum_{k=1}^s c_k
\begin{bmatrix} a(f_k,\phi_k) \\ 1
\end{bmatrix}
\begin{bmatrix} a(f_k,\phi_k) \\ 1
\end{bmatrix}^* \succeq 0\,.
\]
This implies that $r^\star = \rank(\operatorname*{Toep}(u)) \leq s$.
 
We next show $\|x\|_{\A, 0} \leq r^\star$. The $r^\star = n$ case is trivial as we could always expand $x$ on a Fourier basis, implying $\|x\|_{\A, 0} \leq n$. We focus on $r^\star < n$. Suppose $u$ is an optimal solution of \eqref{eqn:rankminimization}. Then if
\[
	\operatorname*{Toep}(u) = VDV^*
\]
is a Vandermonde decomposition, positive semidefiniteness implies that $x$ is in the range of $V$ which means that $x$ can be expressed as a combination of at most $r^\star$ atoms, completing the proof.
\end{proof}

Hence, for this particular set of atoms, atomic norm minimization is a trace relaxation of a rank minimization problem.   The trace relaxation has been proven to be a powerful relaxation for recovering low rank matrices subject to random linear equations~\cite{Recht:2010ht}, values at a specified set of entries~\cite{Candes:2008tg}, Euclidean distance constraints~\cite{javanmard2011localization}, and partial quantum expectation values~\cite{gross2010quantum}.  However, our sampling model is far more constrained and none of the existing theory applies to our problem.  Indeed, typical results on trace-norm minimization demand that the number of measurements exceeds the rank of the matrix times the number of rows in the matrix.  In our case, this would amount to $O(sn)$ measurements for an $s$ sparse signal.  We prove in the sequel that only $O(s \, \mathrm{polylog}(n))$ samples are required, dramatically reducing the dependence on $n$.

We close this section by noting that a~\emph{positive} combination of complex sinusoids with zero phases observed at the first $2s$ samples can be recovered via the trace relaxation with no limitation on the resolution~\cite{positive_fourier_donoho,positive_fuchs}. Why does the story change when we bring unknown phases into the picture?  A partial answer is provided by Figure~\ref{fig:moment-curve}.  Figure~\ref{fig:moment-curve} (a) and (b) display the set of atoms with no phase (i.e., $\{a(f,0)\}$) and phase either $0$ or $\pi$ respectively.  That is,  Figure~\ref{fig:moment-curve} (a) plots the set 
\begin{equation*}\label{eq:small-set}
\mathcal{A}_1 = \left\{ \begin{bmatrix} \cos(2\pi f) & \cos(4\pi f) & \cos(6\pi f) \end{bmatrix}~:~ f \in [0,1] \right\}\,,
\end{equation*}
while (b) displays the set
\begin{equation*}\label{eq:small-set-w-phase}
\mathcal{A}_2=\left\{ \begin{bmatrix} \cos(2\pi f+ \phi) & \cos(4\pi f+ \phi) & \cos(6\pi f+\phi) \end{bmatrix}~:~ f \in [0,1],\, \phi \in \{0,\pi\} \right\}\,.
\end{equation*}
Note that $\mathcal{A}_2$ is simply $\mathcal{A}_1 \cup -\mathcal{A}_1$.  Their convex hulls are displayed in Figure~\ref{fig:moment-curve} (c) and (d) respectively.  The convex hull of $\mathcal{A}_1$ is \emph{neighborly} in the sense that every edge between every pair of atoms is an exposed face and every atom is an extreme point.  On the other hand, the only secants between atoms in $\mathcal{A}_2$ that are faces of the convex hull of $\mathcal{A}_2$ are those between atoms with far apart phase angles and frequencies.  Problems only worsen if we let the phase range in $[0,2\pi)$. Thus, our intuition from positive moment curves does not extend to the compressed sensing problem of sinusoids with complex phases.  Nonetheless, we are able to demonstrate that under mild resolution assumptions, we can still recover sparse superpositions from very small sampling sets.

\begin{figure}
\begin{tabular}{cc}
 \includegraphics[width=.45\textwidth, trim = 15mm 75mm 15mm 75mm, clip = true]{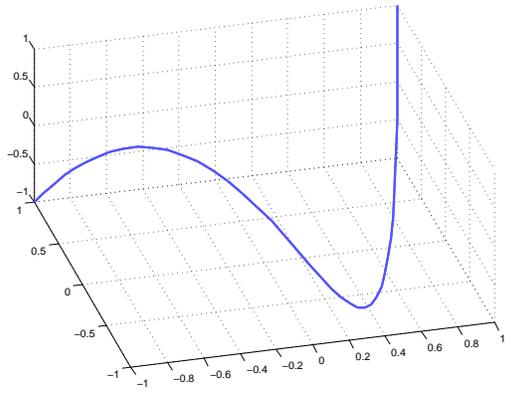} & 
 \includegraphics[width=.45\textwidth, trim = 15mm 75mm 15mm 75mm, clip = true]{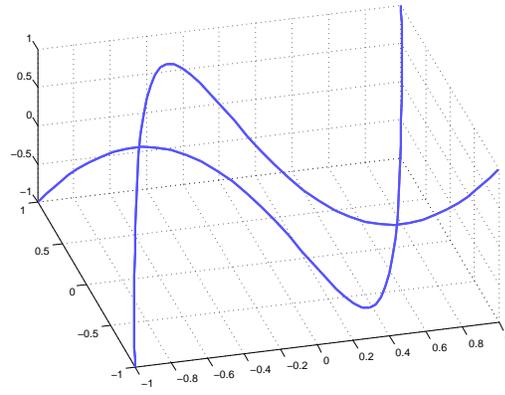}\\
 (a) & (b)\\
 \includegraphics[width=.45\textwidth, trim = 15mm 75mm 15mm 75mm, clip = true]{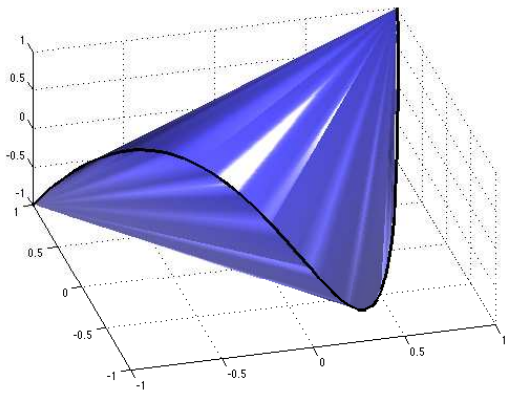} &
 \includegraphics[width=.45\textwidth, trim = 15mm 75mm 15mm 75mm, clip = true]{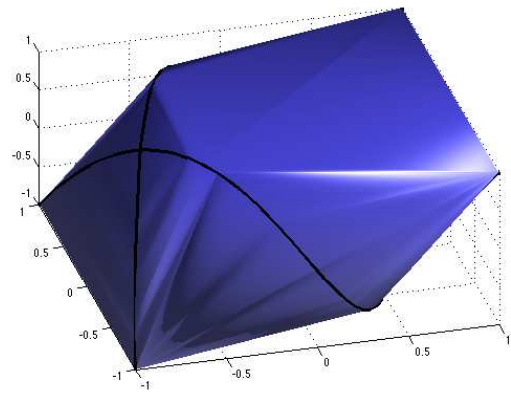}\\
 (c) & (d)
\end{tabular}
\caption{ \small {\bf Moments and their convex hulls.} (a) The real moment curve for the first three moments.  (b) The moment curve for the same frequencies, but adding in phase.  (c) The convex hull of (a).  (d) The convex hull of (b). Whereas all of the secants of (a) are extreme in their convex hull (c), many segments between atoms of (b) lie inside the convex hull (d).  }  \label{fig:moment-curve} 
\end{figure}

\section{Prior Art and Inspirations}\label{sec:related}

Frequency estimation is extensively studied and techniques for estimating sinusoidal frequencies from time samples date back to the work of Prony \cite{deProny:tg}. Many linear prediction algorithms based on Prony's method were proposed to estimate the frequencies from \emph{regularly spaced} time samples. A survey of these methods can be found in \cite{Blu:2008cd} and an extensive list of references is given in \cite{Stoica:1993cr}. With equispaced samples, these root-finding based procedures deal with the problem directly on the continuous frequency domain, and can recover frequencies provided the number of samples is at least twice of the number of frequencies, regardless of how closely these frequencies are located \cite{deProny:tg, Blu:2008cd, Stoica:1993cr, Stoica:2005wf}. 

In recent work \cite{Candes:2012uf}, Cand\`es and Fernandez-Granda studied this problem from the point of view of convex relaxations and proposed a total-variation norm minimization formulation that provably recovers the spectrum exactly. However, the convex relaxation requires the frequencies to be well separated by the inverse of the number of samples.  The proof techniques of this prior work form the foundation of analysis in the sequel, but many major modifications are required to extend their results to the compressed sensing regime.

In \cite{Bhaskar:2012tq}, the authors proposed using atomic norm to denoise a line spectral signal corrupted with Gaussian noise, and reformulated the resulting atomic norm minimization problem as a semidefinite program using the bounded real lemma \cite{Dumitrescu:2007vw}. Denoising is important to frequency estimation since the frequencies in a line spectral signal corrupted with moderate noise can be identified by linear prediction algorithms. Since the atomic norm framework in \cite{Bhaskar:2012tq} is essentially the same as the total-variation norm framework of \cite{Candes:2012uf}, the same semidefinite program can also be applied to total-variation norm minimization. 

What is common to all aforementioned approaches, including linear prediction methods, is the reliance on observing uniform or equispaced time samples. In sharp contrast, we show that nonuniform sampling is not only a viable option, and that the original spectrum can be recovered exactly in the continuous domain, but in fact is a means of \emph{compressive} or compressed sampling.  Indeed non-uniform sampling allows us to effectively sample the signal at a sub-Nyquist rate.  We point out that we have only handled the case of undersampling uniform samples as opposed to arbitrarily nonuniform samples in this paper. However, this is still of practical importance. For array signal processing applications, this corresponds to a reduction in the number of sensors required for exact recovery, since each sensor obtains one spatial sample  of the field. An extensive justification of the necessity of using randomly located sensor arrays can be found in \cite{Carin:eqa}. To the best of our knowledge, little is known about exact line-spectrum recovery with non-uniform sampling using \emph{parametric} methods, except sporadic work using $\ell_2$-norm minimization to recover the missing samples \cite{Dowski:1988ec}, or based on nonlinear least square data fitting\cite{Stoica:ir}. \emph{Nonparametric} methods such as Periodogram and Correlogram for nonuniform sampling have gained popularity in recent years \cite{Wang:2005vq, Shaghaghi:2012wa}, but their resolutions are usually low. 

An interesting feature related to using convex optimization based methods for estimation such as \cite{Candes:2012uf} is a particular resolvability condition: the separation between frequencies is required to be greater than $\frac{4}{n}$ where $n$ is the number of measurements. Linear prediction methods do not have a resolvability limitation, but it is known that in practice the numerical stability of root finding limits how close the frequencies can be. Theorem \ref{thm:main} can be viewed as an extension of the theory to nonuniform samples. Note that our approach gives an exact semidefinite characterization and is hence computationally tractable.  We believe our results have potential impact on two related areas: extending compressed sensing to continuous dictionaries, and extending line spectral estimation to nonuniform sampling, thus providing new insight in sub-Nyquist sampling and super-resolution.

\section{Proof of Theorem~\ref{thm:main}} \label{sec:dualconstruction}
The key to show that the optimization \eqref{eqn:minimization} succeeds is to construct a dual variable satisfying the conditions \eqref{eqn:condition:Q1}, \eqref{eqn:condition:Q1}, and \eqref{eqn:condition:q1} in Proposition \ref{pro:optimality} to certify the optimality of $x^\star$. The rest of the paper's proofs focus on the symmetric case $J = \{-2M, \ldots, 2M\}$.

As shown in Proposition \ref{pro:optimality}, the dual certificate can be interpreted as a polynomial with bounded modulus on the unit circle.  The polynomial is constrained to have most of its coefficients equal to zero.  In the case that all of the entries are observed, the polynomial constructed by Cand\`es and Fernandez-Granda~\cite{Candes:2012uf} suffices to guarantee optimality. Indeed they write the certificate polynomial via a kernel expansion and show that one can explicitly find appropriate kernel coefficients that certify optimality. We review this construction in Section~\ref{sec:detour}.  The requirements of the certificate polynomial in our case are far more stringent and require a non-trivial modification of their construction using a~\emph{random} kernel. This random kernel has nonzero coefficients only in the indices corresponding to observed locations (the randomness enters because the samples are observed at random).   The expected value of our random kernel is a multiple of the kernel developed in~\cite{Candes:2012uf}.  

Using a matrix Bernstein inequality, we show that we can find suitable coefficients to satisfy most of the optimality conditions.  We then write our solution in terms of the deterministic kernel plus a random perturbation.  The remainder of the proof is dedicated to showing that this random perturbation is small everywhere.  First, we show that the perturbation is small on a fine grid of the circle in Section~\ref{sec:perturbations}.  To do so, we emulate the proof of Cand\`{e}s and Romberg for reconstruction from incoherent bases\cite{Candes:2007es}.  Finally, in Section~\ref{sec:ext_cont}, we complete the proof by estimating the Lipschitz constant of the random polynomial, and, in turn, proving that the perturbations are small everywhere.  Our proof is based on Bernstein's polynomial inequality which was used to estimate the noise performance of atomic norm denoising by Bhaskar \emph{et al}~\cite{Bhaskar:2012tq}.

\subsection{A Detour: When All Entries are Observed}\label{sec:detour}
Before we consider the random observation model, we explain how to construct a dual polynomial when all entries in $J = \{-2M, \ldots, 2M\}$ are
observed, i.e., $T = J$. The kernel-based construction method, which was first proposed in \cite{Candes:2012uf}, inspires our random kernel based construction in Section
\ref{sec:randompoly}. The results presented in this subsection are also necessary for our later
proofs.

When all entries are observed, the optimization problem
(\ref{eqn:minimization}) has a trivial solution, but we can still apply duality to certify the optimality of a particular decomposition. Indeed, a dual
polynomial satisfying the conditions given in Proposition
\ref{pro:optimality} with $T^c = \emptyset$ means that $\left\| x^{\star}
\right\|_{\mathcal{A}} = \sum_k \left| c_k \right|$, namely, the decomposition
$x^{\star} = \sum_k c_k a \left( f_k,0 \right)$ achieves the atomic norm. To
construct such a dual polynomial, Cand\`es and  Fernandez-Granda suggested considering a polynomial
$\bar{Q}$ of the following form~\cite{Candes:2012uf} :
\begin{align}
  \bar{Q} \left( f \right) & = \sum_{k=1}^s \alpha_k 
  \bar{K}_M  \left( f - f_k \right) + \sum_{k=1}^s \beta_k 
  \bar{K}_M'  \left( f - f_k \right) .  \label{eqn:dualpolydeterministic}
\end{align}
Here $\bar{K}_M \left( f \right)$ is the squared Fej\'er kernel
\begin{align}
  \bar{K}_M (f) & = \left[ \frac{\sin (\pi M f)}{M \sin (\pi f)}
  \right]^4  = \frac{1}{M}  \sum_{j = - 2 M}^{2 M} g_M \left(j \right)
  e^{- i 2 \pi f j}\label{eqn:FejerKernelExpansion}
\end{align}
with $g_M \left(j\right) = \frac{1}{M}  \sum_{k = \max \left( j -
M, - M \right)}^{\min \left( j + M, M \right)} \left( 1 - \left| \frac{k}{M}
\right| \right)  \left( 1 - \left| \frac{j}{M} - \frac{k}{M} \right| \right)$
the discrete convolution of two triangular functions. The squared Fej\'er kernel is a good candidate kernel because it attains the value of $1$ at its peak, and rapidly decays to zero. Provided a separation condition is satisfied by the original signal, a suitable set of coefficients $\alpha$ and $\beta$ can always be found.

We use $\bar{K}_M',
\bar{K}_M'', \bar{K}_M'''$ to denote the first three derivatives of
$\bar{K}_M$.  We list some useful facts about the kernel $\bar{K}_M(f)$:
\begin{align*}
\nn  \bar{K}_M \left( 0 \right) & = 1\\
\nn  \bar{K}_M' \left( 0 \right) = \bar{K}_M''' \left( 0 \right) & = 0\\
 \nn \bar{K}_M'' \left( 0 \right) & = - \frac{4 \pi^2  \left( M^2 - 1
  \right)}{3}\,.
  \end{align*}
For the weighting function $g_M \left( \cdot \right)$, we have
\begin{align}\label{eqn:gbd}
  \left\| g_M \right\|_{\infty} & = \sup_j \left| g_M \left( j \right)
  \right| \leq 1.
\end{align}

We require that the dual polynomial \eqref{eqn:dualpolydeterministic} satisfies
\begin{align}\label{eqn:Q1:expected}
  \bar{Q} \left( f_j \right) & = \sum_{k = 1}^s \alpha_k \bar{K}_M
  \left( f_j - f_k \right) + \sum_{k = 1}^s \beta_k \bar{K}_M' \left( f_j -
  f_k \right) = \tmop{sign} \left( c_j \right),
\end{align}
\begin{align}\label{eqn:Q2:expected}
  \bar{Q}' \left( f_j \right) & = \sum_{k = 1}^s \alpha_k \bar{K}_M'
  \left( f_j - f_k \right) + \sum_{k = 1}^s \beta_k \bar{K}_M'' \left(
  f_j - f_k \right) = 0,
\end{align}
for all $f_j \in \Omega$.  The constraint (\ref{eqn:Q1:expected}) guarantees that $Q \left( f \right)$
satisfies the interpolation condition (\ref{eqn:condition:Q1}), and the
constraint (\ref{eqn:Q2:expected}) is used to ensure that $|Q \left( f \right)|$
achieves its maximum at frequencies in $\Omega$.  Note that the condition \eqref{eqn:condition:q1} is absent in this section's setting since the set $T^c$ is empty.

We rewrite these linear constraints in the matrix vector form:
\begin{align*}
  \left[\begin{array}{cc}
    \bar{D}_0 & \frac{1}{\sqrt{\left| \bar{K}_M'' \left( 0 \right) \right|}}
    \bar{D}_1\\
    - \frac{1}{\sqrt{\left| \bar{K}_M'' \left( 0 \right) \right|}} \bar{D}_1 & -
    \frac{1}{\left| \bar{K}_M'' \left( 0 \right) \right|} \bar{D}_2
  \end{array}\right] \left[\begin{array}{c}
    \alpha\\
    \sqrt{\left| \bar{K}_M'' \left( 0 \right) \right|} \beta
  \end{array}\right] & = \left[\begin{array}{c}
    u\\
    0
  \end{array}\right]
\end{align*}
where $\left[ \bar{D}_0 \right]_{j k} = \bar{K}_M \left( f_j - f_k \right)$,
$\left[ \bar{D}_1 \right]_{j k} = \bar{K}_M' \left( f_j - f_k \right)$,
$\left[ \bar{D}_2 \right]_{j k} = \bar{K}_M'' \left( f_j - f_k \right)$ and $u
\in \mathbbm{C}^{s}$ is the vector with $u_j = \tmop{sign}
\left( c_j \right)$. We have rescaled the system of linear equations such that the system
matrix is symmetric, positive semidefinite, and very close to identity.  Positive definiteness follows because the system matrix is a positive combination of outer products.  To get an idea of
why the system matrix is near the identity, observe that $\bar{D}_0$ is symmetric with diagonals one,
$\bar{D}_1$ is antisymmetric, and $\bar{D}_2$ is symmetric with negative
diagonals $\bar{K}_M'' \left( 0 \right)$. We define
\begin{align}
  \bar{D} & = \left[\begin{array}{cc}
    \bar{D}_0 & \frac{1}{\sqrt{\left| \bar{K}_M'' \left( 0 \right) \right|}}
    \bar{D}_1\\
    - \frac{1}{\sqrt{\left| \bar{K}_M'' \left( 0 \right) \right|}} \bar{D}_1 & -
    \frac{1}{\left| \bar{K}_M'' \left( 0 \right) \right|} \bar{D}_2
  \end{array}\right] = \left[\begin{array}{cc}
    \bar{D}_0 & \frac{1}{\sqrt{\left| \bar{K}_M'' \left( 0 \right) \right|}}
    \bar{D}_1\\
    \frac{1}{\sqrt{\left| \bar{K}_M'' \left( 0 \right) \right|}}
    \bar{D}_1^{\ast} & - \frac{1}{\left| \bar{K}_M'' \left[ 0 \right) \right|}
    \bar{D}_2
  \end{array}\right]\label{eqn:defbarD}
\end{align}
and summarize properties of the system matrix $\bar{D}$ and its submatrices in the
following proposition, whose proof is given in Appendix \ref{apx:sysmtx:deterministic}.

\begin{proposition}\label{pro:sysmtx:deterministic}
  Suppose $\Delta_f \geq \Delta_{\min} = \frac{1}{M}$. Then $\bar{D}$ is invertible and
  \begin{align}
  \|I-\bar{D}\| & \leq 0.3623,\label{eqn:I_Dbarbd}\\
  \|\bar{D}\| & \leq 1.3623,\label{eqn:Dbarbd}\\
  \|\bar{D}^{-1}\| & \leq 1.568\label{eqn:Dbaribd}.
 \end{align}
 Here $\|\cdot\|$ denotes the matrix operator norm.
\end{proposition}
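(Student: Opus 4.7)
The plan is to prove the bound $\|I - \bar{D}\| \leq 0.3623$ first; the other two inequalities then follow by elementary arguments. Once $\|I - \bar{D}\| < 1$, the triangle inequality gives $\|\bar{D}\| \leq 1 + \|I-\bar{D}\| \leq 1.3623$, and a Neumann series argument guarantees that $\bar{D}$ is invertible with $\|\bar{D}^{-1}\| \leq (1-\|I-\bar{D}\|)^{-1} \leq 1/0.6377 \leq 1.568$. So essentially the entire proof is devoted to the first bound.

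Since $\bar{D}$ is Hermitian (this is why the rescaling by $1/\sqrt{|\bar{K}_M''(0)|}$ was chosen), I would bound its operator norm by its maximum absolute row sum. The first step is to verify that the diagonal of $\bar{D}$ is essentially the identity: the diagonal of $\bar{D}_0$ equals $\bar{K}_M(0)=1$, the diagonal of $\frac{1}{\sqrt{|\bar{K}_M''(0)|}}\bar{D}_1$ equals $\bar{K}_M'(0)/\sqrt{|\bar{K}_M''(0)|} = 0$, and the diagonal of $-\frac{1}{|\bar{K}_M''(0)|}\bar{D}_2$ equals $1$ exactly. Therefore $I - \bar{D}$ has exactly zero diagonal, and its operator norm is controlled entirely by off-diagonal contributions.

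The main step is then to bound the off-diagonal entries of each of the three blocks using the decay of $\bar{K}_M$, $\bar{K}_M'$, and $\bar{K}_M''$. One would derive uniform estimates of the form $|\bar{K}_M^{(\ell)}(f)| \leq C_\ell/(M|f|)^{p_\ell}$ valid for $|f| \geq 1/M$ (along with corresponding complementary bounds near the origin), using the explicit product form $\bar{K}_M(f) = [\sin(\pi Mf)/(M\sin(\pi f))]^4$. Combined with the separation hypothesis $\Delta_f \geq 1/M$, the frequencies $\{f_k\}$ can be paired off by their cyclic ordering so that for each fixed $j$ we have $|f_j - f_k| \geq |j-k|/M$ (up to wrap-around). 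Each row sum $\sum_{k\neq j} |\bar{K}_M^{(\ell)}(f_j-f_k)|/\text{norm}_\ell$ can then be dominated by twice the tail $\sum_{\ell\geq 1} \ell^{-p_\ell}$ of a convergent series. Summing across the four block contributions yields the claimed $0.3623$.

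The hard part, and the only non-routine content, is producing the sharp numerical constant $0.3623$ (as opposed to ``some constant less than one''). This requires tight pointwise estimates for $\bar{K}_M, \bar{K}_M', \bar{K}_M''$ — in particular, handling separately the ``near'' contributions from the closest neighbors (where the crude decay bound is loose) and the ``far'' tail (where one sums the series). This is exactly the computation carried out by Cand\`es and Fernandez-Granda in \cite{Candes:2012uf}; here the calculation needs only minor bookkeeping changes because our matrix $\bar{D}$ carries a slightly different normalization (the factor $1/\sqrt{|\bar{K}_M''(0)|}$ on the cross blocks). I would defer the detailed numerical estimates to an appendix, noting that they are computer-verifiable bounds on explicit univariate functions.
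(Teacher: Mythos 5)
Your proposal follows essentially the same route as the paper: both bound $\|I-\bar{D}\|$ by the maximum absolute row sum of the Hermitian zero-diagonal matrix $I-\bar{D}$ (the paper phrases this via Ger\v{s}hgorin), defer the block-by-block row-sum estimates producing the constant $0.3623$ to Cand\`es and Fernandez-Granda's proof of their Lemma~2.2, and then obtain $\|\bar{D}\| \leq 1.3623$ and $\|\bar{D}^{-1}\| \leq 1.568$ from the triangle inequality and a Neumann-series bound.
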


For notational simplicity, partition the inverse of $\bar{D}$ as
\begin{align*}
\bar{D}^{- 1} = \left[\begin{array}{cc}
\bar{L} & \bar{R}
  \end{array}\right]
\end{align*}
  where $\bar{L}$ and $\bar{R}$ are both $2 s \times s$.  
Then, solving for $\alpha$ and $\sqrt{\left| \bar{K}_M''
\left( 0 \right) \right|} \beta$ yields
\begin{align}\label{eqn:coeffs:deterministic}
  \left[\begin{array}{c}
    \alpha\\
    \sqrt{\left| \bar{K}_M'' \left( 0 \right) \right|} \beta
  \end{array}\right] & = \bar{D}^{- 1} \left[ \begin{array}{c}
    u\\
    0
  \end{array}\right]
   = \bar{L} u.
\end{align}
Then the $\ell$th derivative of the dual polynomial (after normalization) is
\begin{align}\label{eqn:Qfasvlbar}
  \frac{1}{\sqrt{|\bar{K}_M''(0)|}^\ell}\bar{Q}^{(\ell)} \left( f \right) & = \sum_{k = 1}^s \alpha_k \frac{1}{\sqrt{|\bar{K}_M''(0)|}^{\ell}}\bar{K}_M^{(\ell)}
  \left( f - f_k \right) \nn \\
& \qquad\qquad  + \sum_{k = 1}^s \sqrt{\left| \bar{K}_M'' \left(
  0 \right) \right|} \beta_k \frac{1}{\sqrt{\left| \bar{K}_M'' \left( 0 \right)
  \right|}^{\ell+1}} \bar{K}_M^{(\ell+1)} \left( f - f_k \right)\nn\\
  & = \bar{v}_\ell \left( f \right)^{\ast} \bar{L} u = \left<\bar{L}u, \bar{v}_\ell(f)\right>.
\end{align}
where we have defined
\begin{align}\label{eqn:vlbar}
  \bar{v}_{\ell} \left( f \right) & = \frac{1}{\sqrt{\left| \bar{K}_M''
  \left( 0 \right) \right|}^\ell} \left[\begin{array}{c}
    \bar{K}_M^{\left( \ell \right)} \left( f - f_1 \right)^{\ast}\\
    \vdots\\
   \bar{K}_M^{\left( \ell \right)} \left( f - f_s \right)^{\ast}\\
      \frac{1}{\sqrt{\left| \bar{K}_M''
  \left( 0 \right) \right|}}\bar{K}_M^{(\ell+1)} \left( f - f_1 \right)^{\ast}\\
  \vdots\\
        \frac{1}{\sqrt{\left| \bar{K}_M''
  \left( 0 \right) \right|}}\bar{K}_M^{(\ell+1)} \left( f - f_s \right)^{\ast}
    \end{array}\right]
\end{align}
with $\bar{K}_M^{\left( \ell \right)}$ the $\ell$th derivative of $\bar{K}_M$.

To certify that the polynomial with the coefficients~\eqref{eqn:coeffs:deterministic} are bounded uniformly on the unit circle, Cand\`es and Fernandez-Granda divide the domain $[0,1]$ into regions near to and far from the frequencies of $x^\star$. Define 
\begin{align*}
  \Omega_{\tmop{near}} & = \bigcup_{k = 1}^s \left[ f_k - f_{b, 1}, f_k + f_{b, 1} \right]\\
\Omega_{\tmop{far}} & = \left[ 0, 1 \right] /\Omega_{\mathrm{near}}
\end{align*}
with $f_{b, 1} = 8.245\times 10^{- 2} \frac{1}{M}$.  On $\Omega_{\mathrm{far}}$, $|Q(f)|$ was analyzed directly, while on $\Omega_{\mathrm{near}}$ $|Q(f)|$ is bounded by showing that its second order derivative is negative.  The following results are derived in the proofs of Lemmas 2.3 and 2.4 in \cite{Candes:2012uf}:

\begin{proposition}
  \label{pro:expected}Assume
  $\Delta_f \geq \Delta_{\min} = \frac{1}{M}$. Then we have
  \begin{align}\label{eqn:defaraway}
    \left| \bar{Q} \left( f \right) \right| & < 0.99992, \tmop{for} f \in
    \Omega_{\tmop{far}}
  \end{align}
  and for $f \in \Omega_{\tmop{near}}$
  \begin{align}
    \bar{Q}_R \left( f \right) & \geq 0.9182 \label{eqn:near1}\\
    \left| \bar{Q}_I \left( f \right) \right| & \leq 3.611 10^{- 2}\label{eqn:near2}\\
    \frac{1}{\left| \bar{K}_M'' \left( 0 \right) \right|} \bar{Q}''_R \left( f
    \right) & \leq - 0.314\label{eqn:near3}\\
    \left| \frac{1}{\left| \bar{K}_M'' \left( 0 \right) \right|} \bar{Q}_I''
    \left( f \right) \right| & \leq 0.5755\label{eqn:near4}\\
    \left| \frac{1}{\sqrt{\left| \bar{K}_M'' \left( 0 \right) \right|}} \bar{Q}'
    \left( f \right) \right| & \leq 0.4346\label{eqn:near5}.
  \end{align}
  and as a consequence,
  \begin{align*}\label{eqn:neartotal}
    \frac{1}{\left| \bar{K}_M'' \left( 0 \right) \right|} \left( \bar{Q}_R
    \left( f \right) \bar{Q}_R'' \left( f \right) + \left| \bar{Q}' \left( f
    \right) \right|^2 + \left| \bar{Q}_I \left( f \right) \right| \left|
    \bar{Q}_I'' \left( f \right) \right| \right) & \leq - 7.865 10^{- 2} .
  \end{align*}
\end{proposition}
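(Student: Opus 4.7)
The plan is to establish the claimed bounds on $\bar{Q}$ and its derivatives by expressing the polynomial as a perturbation of the ideal ``identity'' interpolant, and then using the rapid decay of the squared Fej\'er kernel together with the separation condition $\Delta_f \geq 1/M$.

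First I would control the coefficient vector $\bar{L}u$. Proposition \ref{pro:sysmtx:deterministic} gives $\|I - \bar{D}\| \leq 0.3623$, which means that $[\alpha;\,\sqrt{|\bar{K}_M''(0)|}\beta] = \bar{L}u$ is close to $[u;\,0]$. More precisely, writing $\bar{D}^{-1} = I + (\bar{D}^{-1} - I)$ and using $\|\bar{D}^{-1} - I\| \leq \|\bar{D}^{-1}\|\,\|I-\bar{D}\| \leq 1.568 \cdot 0.3623$, I can bound both $\|\alpha - u\|_\infty$ and $\sqrt{|\bar{K}_M''(0)|}\|\beta\|_\infty$ in terms of a small constant times $\sqrt{s}$ (or simply use $\ell_2$ bounds since $\|u\|_2 = \sqrt{s}$). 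This reduces every subsequent estimate of $\bar{Q}^{(\ell)}$ to a sum over $k$ of shifted kernel values, with coefficients close to $u$ in the first block and close to zero in the second.

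Next I would exploit that under $\Delta_f \geq 1/M$ the squared Fej\'er kernel $\bar{K}_M$ and its low-order derivatives satisfy sharp tail bounds: if $|f - f_k| \geq \Delta_{\min}$, then $|\bar{K}_M^{(\ell)}(f-f_k)|/|\bar{K}_M''(0)|^{\ell/2}$ decays like the fourth power of the distance, so the sums
\begin{equation*}
S_\ell(f) \;=\; \sum_{k : f_k \neq f_j} \frac{|\bar{K}_M^{(\ell)}(f-f_k)|}{|\bar{K}_M''(0)|^{\ell/2}}
\end{equation*}
are uniformly small (numerically tiny, on the order of $10^{-2}$). For $f \in \Omega_{\text{near}}$ around a particular $f_j$, I would split $\bar{Q}(f)$ into the ``leading'' term $\mathrm{sign}(c_j)\bar{K}_M(f-f_j)$ plus the contribution from $\beta_j\bar{K}_M'(f-f_j)$ plus the off-center terms indexed by $k \neq j$. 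Combining $\bar{K}_M(f-f_j) \approx 1$ with $\bar{K}_M''(0) < 0$, the Taylor expansion of the leading kernel dominates, and the coefficient perturbations from Step~1 plus the tail sums give the quantitative bounds \eqref{eqn:near1}--\eqref{eqn:near5}. The same split, applied to $\bar{Q}''$, yields the sign and magnitude of the second derivative needed for the ``$\bar{Q}_R \bar{Q}_R'' + |\bar{Q}'|^2 + |\bar{Q}_I||\bar{Q}_I''|$'' bound, which is what certifies $|\bar{Q}(f)| < 1$ on $\Omega_{\text{near}}\setminus\Omega$.

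For $\Omega_{\text{far}}$, every frequency $f_k$ is at distance at least $f_{b,1} = 8.245 \times 10^{-2}/M$ from $f$, so every term in the representation \eqref{eqn:Qfasvlbar} of $\bar{Q}$ lies in the decay regime. Bounding $|\bar{Q}(f)|$ by $|\langle u,\bar{v}_0(f)\rangle| + |\langle \bar{L}u - [u;0],\bar{v}_0(f)\rangle|$ and using the kernel tail estimate, one obtains \eqref{eqn:defaraway}.

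The main obstacle is not the conceptual structure but the production of the sharp numerical constants $0.99992$, $0.9182$, $3.611\cdot 10^{-2}$, etc. Each constant requires careful bookkeeping: an explicit upper bound on $\|\bar{L} - [I;0]\|$, explicit summable envelopes for $|\bar{K}_M^{(\ell)}|$ at spacings $\geq \Delta_{\min}$, and then a verification (typically symbolic/numerical) that the combined remainder plus perturbation stays below $1$ on the near set and that the second-derivative-based certificate is strictly negative. The delicate part is the near-region analysis, since the leading term there is itself close to $1$ in modulus, so every error must be tracked to several digits. Fortunately the whole computation is carried out in detail in Lemmas~2.3 and~2.4 of \cite{Candes:2012uf}, and I would simply adopt those estimates verbatim, noting that the only inputs are the separation $\Delta_f \geq 1/M$ and the norm bounds of Proposition \ref{pro:sysmtx:deterministic}, both of which are in force here.
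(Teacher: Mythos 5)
Your proposal is correct and ends up exactly where the paper does: Proposition~\ref{pro:expected} is not proved from scratch in this paper but is imported directly from the proofs of Lemmas~2.3 and~2.4 of Cand\`es and Fernandez-Granda \cite{Candes:2012uf}, whose only inputs are the separation $\Delta_f \geq 1/M$ and the system-matrix bounds of Proposition~\ref{pro:sysmtx:deterministic}. Your extra sketch of the near/far split and kernel-decay bookkeeping is a faithful description of what those cited lemmas actually do, so there is no substantive divergence from the paper's treatment.
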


\subsection{Bernoulli Observation Model}
The uniform sampling model is difficult to analyze directly. However, the same argument used in \cite{Candes:2006eq} shows that the probability of recovery failure under the uniform model is at most twice of that under a Bernoulli model. Here by ``recovery failure", we refer to that \eqref{eqn:minimization} would not recover the original signal $x^\star$. Therefore, without loss of generality, we focus on the following Bernoulli observation model in our proof.

We observe entries in $J$ independently with probability $p$. Let $\delta_j = 1$ or $0$ indicate whether we observe the $j$th entry. Then $\{\delta_j\}_{j \in J}$ are i.i.d. Bernoulli random variables such that
\begin{align*}
\mathbbm{P} \left(
\delta_j = 1 \right) = p.
\end{align*}
On average in this model, we will observe $p|J|$ entries. For $J = \{-2M, \ldots, 2M\}$, we use 
\[
p = \frac{m}{M} < 1.
\]

\subsection{Random Polynomial Kernels}\label{sec:randompoly}

We now turn to designing a dual certificate for the Bernoulli observation model.  As for the case that all entries
are observed, the challenge is to construct a dual polynomial satisfying
\begin{align*}
  Q \left( f_k \right) & = \tmop{sign} \left( c_k \right), \forall
  f_k \in \Omega\\
  \left| Q \left( f \right) \right| & <  1, \forall f \notin \Omega,
\end{align*}
as well as an additional constraint
\begin{equation}\label{eqn:qtcconstraint}
  q_j = 0, \forall j \in T^c . 
\end{equation}
The main difference in our random setting is that the demands on the polynomial $Q(f)$ are much stricter as manifested by \eqref{eqn:qtcconstraint}, namely, we require that \emph{most} of the coefficients of the polynomial are equal to zero. Our approach mimics the construction in the deterministic case to write
\begin{align}
  Q \left( f \right) & = \sum_{k=1}^s \alpha_k K_M \left( f - f_k
  \right) + \sum_{k=1}^s\beta_k K_M' \left( f - f_k \right),
  \label{eqn:formofpoly}
\end{align}
but using a~\emph{random kernel} $K_M(\cdot)$, which has nonzero coefficients only on the random subset $T$ and satisfies $\mathbb{E}K_M =p \bar{K}_M$.  We will then prove that $K_M$ concentrates tightly around $p\bar{K}_M$.

Our random kernel is simply the expansion~\eqref{eqn:FejerKernelExpansion}, but with each term multiplied by a Bernoulli random variable corresponding to the observation of a component:
\begin{align*}
  K_M \left( f \right) & := \frac{1}{M}  \sum_{j \in T} g_M \left(
  j \right) e^{- i 2 \pi f j}\\
  & = \frac{1}{M}  \sum_{j = - 2 M}^{2 M} \delta_j g_M \left( j
  \right) e^{- i 2 \pi f j}.
\end{align*}
As before
\begin{align*}
g_M \left( j \right) = \frac{1}{M} \sum_{k = \max
\left( j - M, - M \right)}^{\min \left( j + M, M \right)} \left( 1 - \left|
\frac{k}{M} \right| \right) \left( 1 - \left| \frac{j}{M} - \frac{k}{M}
\right| \right)
\end{align*}
is the convolution of two discrete triangular functions. The $\ell$th derivative of $K_M(f)$ is
\begin{align*}
  K_M^{\left( \ell \right)} \left( f \right) 
  & = \frac{1}{M} \sum_{j = - 2 M}^{2 M} \left( - i 2 \pi j \right)^{\ell}
  g_M \left( j \right) \delta_j e^{- i 2 \pi f j}.
\end{align*}

Both $K_M \left( f - f_k \right)$ and $K_M' \left( f - f_k
\right)$ are random trigonometric polynomials of degree at most $2 M$. More importantly, they contain
monomial $e^{- i 2 \pi f j}$ only if $\delta_j = 1$, or equivalently, $j\in T$. Hence $Q \left( f
\right)$ in \eqref{eqn:formofpoly} is of the form (\ref{eqn:dualpoly}) and satisfies $q_j = 0, j \in
T^c$. It is easy to calculate the expected values of $K_M(f)$ and its $\ell$th derivatives:
\begin{align}
  \mathbbm{E}K_M^{\left( \ell \right)} \left( f \right) & = \frac{1}{M}
  \sum_{j = - 2 M}^{2 M} \left( - i 2 \pi j \right)^{\ell} g_M \left(
  j \right) \mathbbm{E} \{\delta_j\} e^{- i 2 \pi f j}\nn\\
  & = p \frac{1}{M} \sum_{j = - 2 M}^{2 M} \left( - i 2 \pi j
  \right)^{\ell} g_M \left( j \right) e^{- i 2 \pi f j}\nn\\
  & = p \bar{K}_M^{\left( \ell \right)} \left( f \right).\label{eqn:expectationofkernel}
\end{align}

In Figure \ref{fig:random_kernel}, we plot $p^{-1}|K_M(f)|$ and $p^{-1}|K_M'(f)|$ laid over $|\bar{K}_M(f)|$ and $|\bar{K}'_M(f)|$, respectively. We see that far away from the peak, the random coefficients induce bounded oscillations to the kernel.  Near $0$, however, the random kernel remains sharply peaked.
\begin{figure}[ht]
\begin{subfigure}[b]{0.5\textwidth}
\centering
 \includegraphics[width=\textwidth, trim = 10mm 0mm 0mm 10mm, clip = true]{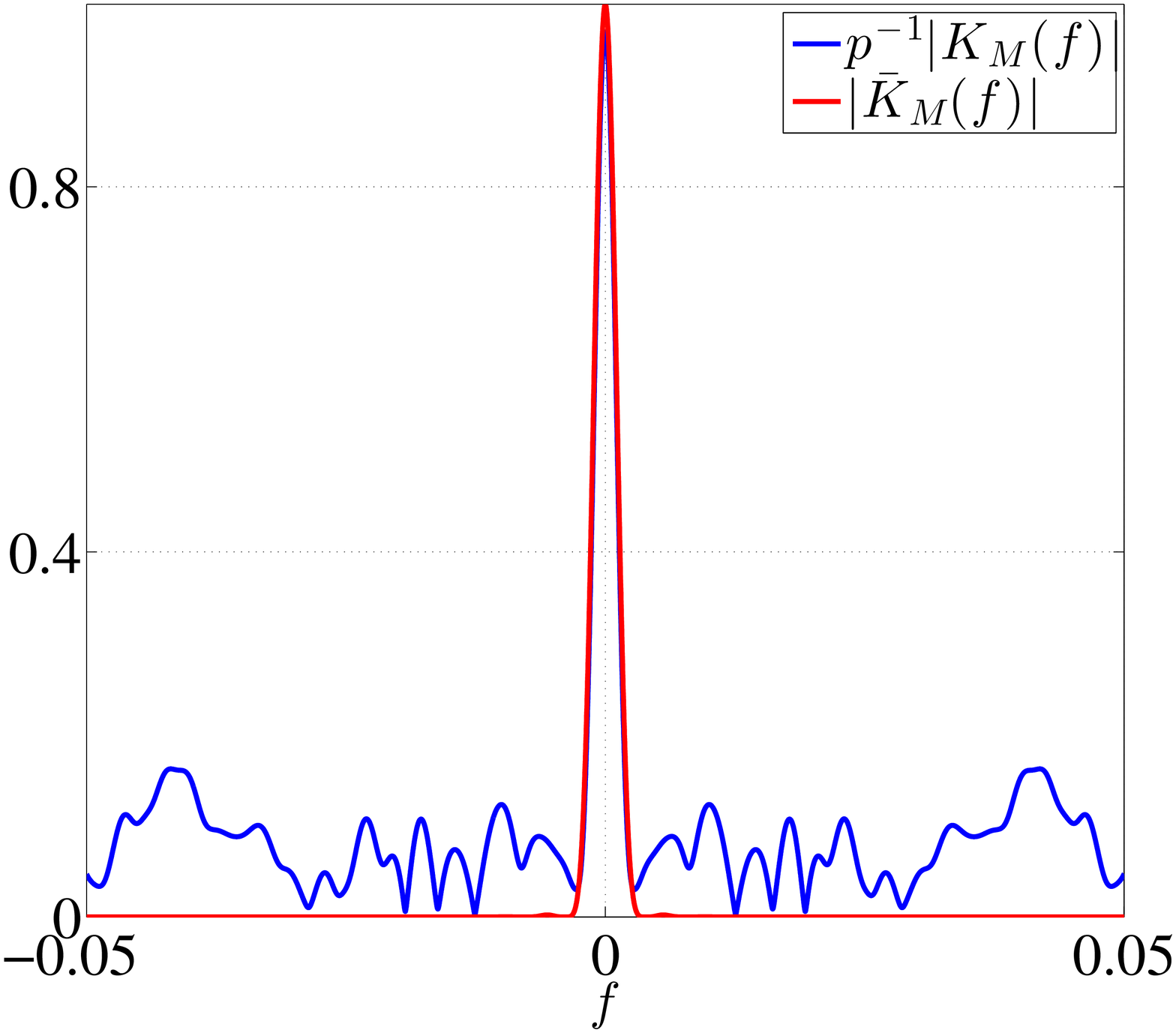}
 \caption{\small $p^{-1}|K_M(f)|$ and $|\bar{K}_M(f)|$}
 \end{subfigure}
 \begin{subfigure}[b]{0.5\textwidth}
 \includegraphics[width=\textwidth, trim = 10mm 0mm 0mm 10mm, clip = true]{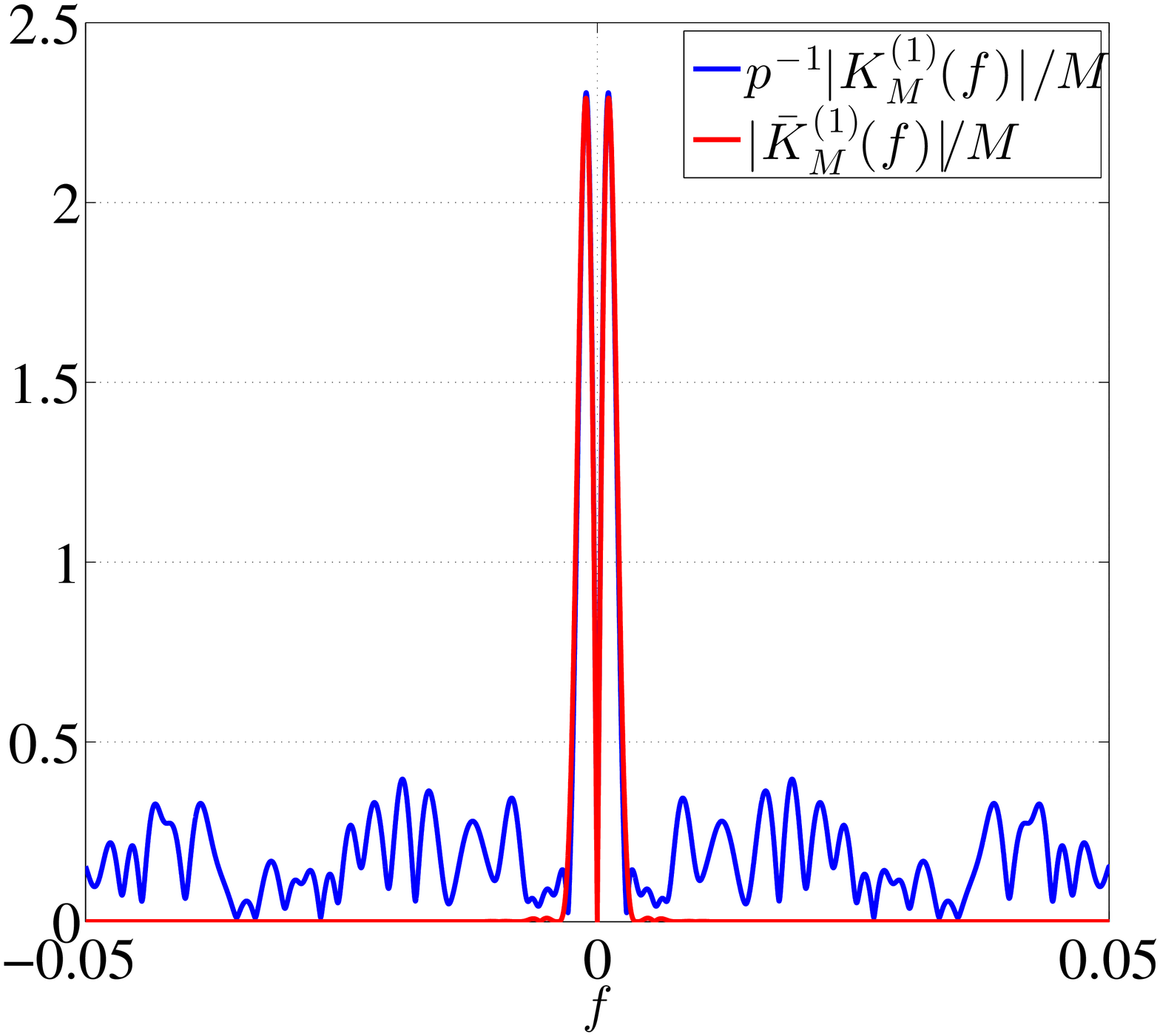}
 \caption{ \small $p^{-1}|K'_M(f)|/M$ and $|\bar{K}'_M(f)|/M$}
\end{subfigure}
 \caption{ \small Plots of the random kernel}
  \label{fig:random_kernel}
\end{figure}

In order to satisfy the conditions \eqref{eqn:condition:Q1} and \eqref{eqn:condition:Q2}, we require that
the polynomial $Q \left( f \right)$ in \eqref{eqn:formofpoly} satisfies
\begin{align}
  Q \left( f_j \right) & = \sum_{k = 1}^s \alpha_k K_M \left( f_j - f_k
  \right) + \sum_{k = 1}^s \beta_k K_M' \left( f_j - f_k \right) =
  \tmop{sign} \left( c_j \right),  \label{eqn:Q1:specific}\\
  Q' \left( f_j \right) & = \sum_{k = 1}^s \alpha_k K_M' \left( f_j - f_k
  \right) + \sum_{k = 1}^s \beta_k K_M'' \left( f_j - f_k \right) = 0
   \label{eqn:Q2:specific}
\end{align}
for all  $f_j  \in \Omega$. As for $\bar{Q}(f)$, the constraint (\ref{eqn:Q1:specific}) guarantees that $Q \left( f \right)$
satisfies the interpolation condition (\ref{eqn:condition:Q1}), and the
constraint (\ref{eqn:Q2:specific}) helps ensure that $|Q \left( f \right)|$
achieves its maximum at frequencies in $\Omega$.

We now have $2 s$ linear constraints (\ref{eqn:Q1:specific}),
(\ref{eqn:Q2:specific}) on $2 s$ unknown variables $\alpha, \beta$.   The remainder of the proof consists of three steps:

\begin{enumerate}
\item Show that the linear system (\ref{eqn:Q1:specific}),
  (\ref{eqn:Q2:specific}) is invertible with high probability using matrix Bernstein inequality \cite{Tropp:2011vm};  
\item Show $\left| Q^{(\ell)}(f) - \bar{Q}^{(\ell)}(f) \right|$, the random perturbations introduced by the random observation process, are small on a set of discrete points with high probability, implying the random dual polynomial satisfies the constraints in Proposition \ref{pro:optimality} on the grid; This step is proved using a modification of the idea in \cite{Candes:2007es}.
\item Extend the result to $[0, 1]$ using Bernstein's polynomial inequality \cite{Schaeffer:1941wm} and eventually show $|Q(f)|<1$ for $ f \notin \Omega$.
\end{enumerate}

\subsection{Invertibility}\label{sec:invertibility}

In this section we show the linear system (\ref{eqn:Q1:specific}) and
(\ref{eqn:Q2:specific}) is invertible. Rewrite the linear system of equations
(\ref{eqn:Q1:specific}) and (\ref{eqn:Q2:specific}) into the following
matrix-vector form:
\begin{align}
  \left[\begin{array}{cc}
    D_0 & \frac{1}{\sqrt{\left| \bar{K}_M'' \left( 0 \right) \right|}} D_1\\
    - \frac{1}{\sqrt{\left| \bar{K}_M'' \left( 0 \right) \right|}} D_1 & -
    \frac{1}{\left| \bar{K}_M'' \left( 0 \right) \right|} D_2
  \end{array}\right] \left[\begin{array}{c}
    \alpha\\
    \sqrt{\left| \bar{K}_M'' \left( 0 \right) \right|} \beta
  \end{array}\right] & = \left[\begin{array}{c}
   u\\
    0
  \end{array}\right],  \label{eqn:matrix:vector}
\end{align}
where $\left[ D_{\ell} \right]_{jk} = K_M^{\left( \ell \right)} \left( f_j -
f_k \right)$, and $u = \operatorname*{sign}(c)$. Note that we still rescale the derivatives using the deterministic quantity $\bar{K}_M'' \left( 0 \right)$ rather than the random variable $K_M'' \left( 0 \right)$.

The expectation computation \eqref{eqn:expectationofkernel} implies that $\mathbbm{E} \left[ D_{\ell} \right]_{j k}
=\mathbbm{E}K_M^{\left( \ell \right)} \left( f_j - f_k \right) = p \left[
\bar{D}_{\ell} \right]_{j k},$ where $\left[ \bar{D}_{\ell} \right]_{jk} =
\bar{K}_M^{\left( \ell \right)} \left( f_j - f_k \right)$. 
Define
\begin{align*}
  D & =  \left[\begin{array}{cc}
    D_0 & \frac{1}{\sqrt{\left| \bar{K}_M'' \left( 0 \right) \right|}} D_1\nn\\
    - \frac{1}{\sqrt{\left| \bar{K}_M'' \left( 0 \right) \right|}} D_1 & -
    \frac{1}{\left| \bar{K}_M'' \left( 0 \right) \right|} D_2
  \end{array}\right]\nn\\
  & =  \left[\begin{array}{cc}
    D_0 & \frac{1}{\sqrt{\left| \bar{K}_M'' \left( 0 \right) \right|}} D_1\\
    \frac{1}{\sqrt{\left| \bar{K}_M'' \left( 0 \right) \right|}} D_1^{\ast} & -
    \frac{1}{\left| \bar{K}_M'' \left( 0 \right) \right|} D_2
  \end{array}\right]\nn\\
  & =  \frac{1}{M} \sum_{j = - 2 M}^{2 M} g_M \left( j \right)
  \delta_j e \left( j \right) e \left( j \right)^{\ast}
\end{align*}
where
\begin{align}\label{eqn:ej}
  e \left( j \right) & =  \left[\begin{array}{c}
    e^{- i 2 \pi f_1 j}\\
    \vdots\\
    e^{- i 2 \pi f_s j}\\
    \frac{i 2 \pi j}{\sqrt{\left| \bar{K}_M'' \left( 0 \right) \right|}} e^{- i
    2 \pi f_1 j}\\
    \vdots\\
    \frac{i 2 \pi j}{\sqrt{\left| \bar{K}_M'' \left( 0 \right) \right|}} e^{- i
    2 \pi f_s j}
  \end{array}\right] .
\end{align}
Then we have
\begin{align*}
\E D &=  \frac{1}{M} \sum_{j = - 2 M}^{2 M} g_M \left( j \right)
  \E \{\delta_j\} e \left( j \right) e \left( j \right)^{\ast}\nn\\
  & =  p \frac{1}{M} \sum_{j = - 2 M}^{2 M} g_M \left( j \right)
e \left( j \right) e \left( j \right)^{\ast}\nn\\
 & = p \bar{D},
\end{align*}
with $\bar{D}$ defined in \eqref{eqn:defbarD}. As a consequence, we have
\begin{align*}
  D -\mathbbm{E}D & = D - p \bar{D}\\
  & = \sum_{j = - 2 M}^{2 M} \frac{1}{M} g_M \left( j \right)
  \left( \delta_j - p \right) e \left( j \right) e \left( j \right)^{\ast}\\
  & = \sum_{j = - 2 M}^{2 M} X_j .
\end{align*}
with $X_j = \frac{1}{M} g_M \left( j \right) \left( \delta_j - p
\right) e \left( j \right) e \left( j \right)^{\ast}$ a zero mean random
self-adjoint matrix. We will apply the noncommutative Bernstein
inequality to show that $D$ concentrates about its mean $p \bar{D}$ with high
probability.

\begin{lemma}[{Noncommutative Bernstein Inequality, \cite[Theorem 1.4]{Tropp:2011vm}}]\label{lm:matrixbernstein} Let $\left\{ X_j \right\}$ be a finite
  sequence of independent, random self-adjoint matrices of dimension d.
  Suppose that
  \begin{align*}
    \mathbbm{E}X_j & = 0\\
    \left\| X_j \right\| & \leq R, \tmop{almost} \tmop{surely}\\
    \sigma^2 & = \Big\| \sum_j \mathbbm{E} \left( X_j^2 \right) \Big\| .
  \end{align*}
  Then for all $t \geq 0$,
  \begin{align*}
    \mathbbm{P} \Big\{ \Big\| \sum_j X_j \Big\| \geq t \Big\} & \leq d
    \exp \left( \frac{- t^2 / 2}{\sigma^2 + R t / 3} \right) .
  \end{align*}
\end{lemma}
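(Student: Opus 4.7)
The plan is to follow the matrix Laplace transform method, which lifts the classical Chernoff--Cram\'er argument to self-adjoint matrices. First, for any self-adjoint random matrix $Y$ and any $\theta > 0$, Markov's inequality applied to the trace of the matrix exponential gives
\[
\Pr\{\lambda_{\max}(Y) \geq t\} \leq e^{-\theta t}\, \E\,\trace\, e^{\theta Y},
\]
so with $Y = \sum_j X_j$ the problem reduces to bounding the matrix moment generating function $\E\,\trace \exp(\theta \sum_j X_j)$. In the scalar case one would factor this MGF using independence; since the $X_j$ need not commute, a matrix subadditivity principle is required instead.

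The key tool is Lieb's concavity theorem: for any fixed self-adjoint $H$, the map $A \mapsto \trace \exp(H + \log A)$ is concave on the positive-definite cone. Iterating Jensen's inequality with Lieb's theorem (Tropp's derivation of the master tail bound) yields
\[
\E\,\trace \exp\Big(\theta \sum_j X_j\Big) \leq \trace \exp\Big(\sum_j \log \E\, e^{\theta X_j}\Big).
\]
This is the step I expect to be the main obstacle, since handling non-commutativity here requires the fairly deep Lieb inequality rather than a naive MGF factorization.

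The rest follows scalar Bernstein lore in semidefinite form. Because $\E X_j = 0$ and $\|X_j\| \leq R$ almost surely, Taylor expansion together with the operator inequality $X_j^k \preceq R^{k-2} X_j^2$ for $k \geq 2$ gives, in the Loewner order,
\[
\E\, e^{\theta X_j} \preceq I + \tfrac{e^{\theta R} - 1 - \theta R}{R^2}\, \E X_j^2 \preceq \exp\!\left(\tfrac{e^{\theta R} - 1 - \theta R}{R^2}\, \E X_j^2\right).
\]
Taking matrix logarithms, summing over $j$, exponentiating inside the trace, and using $\trace e^A \leq d\, e^{\lambda_{\max}(A)}$ yields
\[
\Pr\Big\{\lambda_{\max}\Big(\sum_j X_j\Big) \geq t\Big\} \leq d \exp\!\left(-\theta t + \tfrac{e^{\theta R} - 1 - \theta R}{R^2}\, \sigma^2\right).
\]
Finally, choosing $\theta = t/(\sigma^2 + R t/3)$ and using the elementary estimate $e^x - 1 - x \leq x^2/(2(1 - x/3))$ for $0 \leq x < 3$ collapses the right-hand side to $d \exp\!\left(\frac{-t^2/2}{\sigma^2 + R t/3}\right)$. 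Running the identical argument with $-X_j$ in place of $X_j$ controls $\lambda_{\min}$ as well, and the two one-sided bounds combine (with at most a factor of two, absorbed into $d$ via a self-adjoint dilation) to give the stated inequality on $\|\sum_j X_j\|$.
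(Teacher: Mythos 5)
The paper does not prove this lemma; it is stated verbatim as an external citation to Tropp's Theorem~1.4. Your proof sketch is, in substance, exactly Tropp's argument: matrix Laplace transform, subadditivity of the cumulant generating function via Lieb's concavity theorem, the semidefinite MGF bound $\E e^{\theta X_j} \preceq \exp\bigl(\tfrac{e^{\theta R}-1-\theta R}{R^2}\,\E X_j^2\bigr)$, and the standard optimization of $\theta$ using $e^x-1-x \leq \tfrac{x^2/2}{1-x/3}$. All of these steps are correct, and the final algebra closes as you describe.

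One small imprecision at the very end: passing from the one-sided $\lambda_{\max}$ bound to the two-sided operator-norm bound costs an honest factor of $2$ (union bound over $\pm\sum_j X_j$), giving $2d$ rather than $d$. A Hermitian dilation does not remove this factor when the $X_j$ are already self-adjoint, since the dilation doubles the dimension. Tropp's Theorem~1.4 is stated for $\lambda_{\max}$ with constant $d$; the $\|\cdot\|$-version as written in the paper is slightly loose. This is immaterial to the way the lemma is invoked in the paper (it only changes the absolute constant in Lemma~\ref{lm:invertibility}), but your assertion that the factor of two can be ``absorbed into $d$ via a self-adjoint dilation'' is not quite right and should be stated as a union bound giving $2d$, or one should weaken the conclusion to $\lambda_{\max}$ as in Tropp's original statement.
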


For $\tau > 0$, define the event
\begin{align}\label{eqn:dfE1}
\mathcal{E}_{1, \tau} = \left\{ \left\| p^{- 1} D - \bar{D}
  \right\| \leq \tau \right\}.
\end{align}
The following lemma, proved in Appendix \ref{apx:lm:invertibility}, shows that $\mathcal{E}_{1,\tau}$ has a high probability if $m$ is large enough.
\begin{lemma}
  \label{lm:invertibility}If $\tau \in ( 0, 0.6377)$, then we have $\mathbbm{P} \left( \mathcal{E}_{1,\tau}
  \right) \geq 1 - \delta$ provided
  \begin{align*}
    m & \geq \frac{50}{\tau^2} s \log \frac{2
    s}{\delta}.
  \end{align*}
\end{lemma}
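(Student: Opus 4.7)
The plan is to apply the noncommutative Bernstein inequality (Lemma \ref{lm:matrixbernstein}) directly to the centered sum
\[
p^{-1} D - \bar{D} \;=\; \sum_{j=-2M}^{2M} Y_j, \qquad Y_j \;:=\; p^{-1} X_j \;=\; \frac{g_M(j)\,(\delta_j - p)}{pM}\, e(j)\,e(j)^{\ast},
\]
which is a sum of independent, zero-mean, self-adjoint $2s \times 2s$ matrices. The three ingredients needed are (i) a uniform almost-sure operator-norm bound $R$ on the summands, (ii) a variance bound $\sigma^2 = \bigl\lVert \sum_j \mathbb{E}\,Y_j^2 \bigr\rVert$, and (iii) a calibration of the resulting tail bound against the sample count $m = pM$.

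First I would control $\|e(j)\|^2$. From \eqref{eqn:ej}, the first $s$ entries of $e(j)$ are unit-modulus while the last $s$ have magnitude $\frac{2\pi|j|}{\sqrt{|\bar{K}_M''(0)|}}$; using $|j|\le 2M$ and $|\bar{K}_M''(0)| = 4\pi^2(M^2-1)/3$, this ratio is bounded by an absolute constant, so $\|e(j)\|^2 \le c\, s$ for some explicit $c$. Combining this with $|g_M(j)| \le 1$ from \eqref{eqn:gbd} and $|\delta_j - p| \le 1$ yields an almost-sure bound of the form $\|Y_j\| \le R$ with $R = C_1 s/(pM) = C_1 s/m$. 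This is the rank-one bound $\|e(j)e(j)^\ast\| = \|e(j)\|^2$.

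Next I would compute the variance term. Since $Y_j^2 = \frac{g_M(j)^2 (\delta_j - p)^2}{p^2 M^2}\,\|e(j)\|^2\, e(j)e(j)^\ast$ and $\mathbb{E}(\delta_j - p)^2 = p(1-p) \le p$, we get
\[
\sum_j \mathbb{E}\,Y_j^2 \;\preceq\; \frac{(1-p)\,c\,s}{p\,M^2} \sum_j g_M(j)^2\, e(j) e(j)^\ast.
\]
Since $g_M(j) \ge 0$ (it is a convolution of two nonnegative triangular functions) and $g_M(j) \le 1$, we have $\sum_j g_M(j)^2 e(j)e(j)^\ast \preceq \sum_j g_M(j)\, e(j)e(j)^\ast = M\bar{D}$, so together with $\|\bar{D}\| \le 1.3623$ from Proposition \ref{pro:sysmtx:deterministic} we obtain $\sigma^2 \le C_2 s/(pM) = C_2 s/m$ for an explicit constant $C_2$.

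Plugging $R$ and $\sigma^2$ into Lemma \ref{lm:matrixbernstein} yields
\[
\mathbb{P}\bigl(\,\|p^{-1}D - \bar{D}\| \ge \tau\,\bigr) \;\le\; 2s\, \exp\!\left( \frac{-\tau^2/2}{C_2 s/m + C_1 s \tau/(3m)} \right),
\]
and the bound $\tau \le 0.6377$ lets me absorb the $R\tau/3$ term into the $\sigma^2$ term with a harmless constant. Requiring the exponent to exceed $\log(2s/\delta)$ and solving for $m$ gives exactly a bound of the form $m \ge (C/\tau^2)\, s \log(2s/\delta)$. The main (and only) obstacle is simply bookkeeping: one must verify that the specific choices of $C_1, C_2$ produce a numerical prefactor no larger than $50$ over the stated range of $\tau$; this is where the constraint $\tau < 0.6377$ enters, ensuring that the linear term $R\tau/3$ in the Bernstein denominator stays comparable to $\sigma^2$ and does not inflate the final constant.
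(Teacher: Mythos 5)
Your approach is the same as the paper's: apply the noncommutative Bernstein inequality to a Bernoulli-indexed sum of rank-one matrices, bound $\|e(j)\|^2 \lesssim s$, bound the variance via $\sum_j g_M(j)^2 e(j)e(j)^* \preceq M\bar D$ together with $\|\bar D\| \le 1.3623$, and solve the resulting exponential tail for $m$. Whether you work with $Y_j = p^{-1}X_j$ and threshold $\tau$, or with $X_j$ and threshold $t = p\tau$ as the paper does, is an immaterial rescaling, and your constants $R \asymp s/m$ and $\sigma^2 \asymp s/m$ match the paper's $R = 14s/M$, $\sigma^2 \le 20\,ps/M$ after dividing by $p$.

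The one thing you get wrong is the role of the upper bound $\tau < 0.6377$. You assert that it is ``where the constraint enters'' to keep the linear term $R\tau/3$ from inflating the Bernstein prefactor beyond $50$; but the paper's computation uses only $\tau \le 1$ for that step, giving a denominator $\le (20 + 14/3)\,ps/M < 25\,ps/M$ and hence the exponent $\tau^2 m/(50s)$. The number $0.6377$ is instead $1 - 0.3623$, where $0.3623$ is the bound on $\|I - \bar D\|$ from Proposition~\ref{pro:sysmtx:deterministic}: it is chosen so that on the event $\mathcal{E}_{1,\tau}$ one has
\[
\|I - p^{-1}D\| \le \|I - \bar D\| + \|p^{-1}D - \bar D\| \le 0.3623 + \tau < 1,
\]
which guarantees that $p^{-1}D$ (hence $D$) is invertible on $\mathcal{E}_{1,\tau}$ --- precisely the ``immediate consequence'' the paper notes right after the lemma, and which feeds into Corollary~\ref{cor:bdonmatrix}. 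So the constraint is about the invertibility payoff, not about Bernstein arithmetic; your derivation of the tail bound and the sample complexity is otherwise sound.
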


Note that an immediate consequence of Lemma~\ref{lm:invertibility} is that $D$ is invertible on $\mathcal{E}_{1,\tau}$.  Additionally, Lemma \ref{lm:invertibility} allows us to control the norms of the submatrices of $D^{-1}$.  For that purpose, we partition $D^{-1}$ as 
\begin{align*}
D^{-1} = \begin{bmatrix}
L & R
\end{bmatrix}
\end{align*}
with $L$ and $R$ both $2s \times s$ and obtain:

\begin{corollary}\label{cor:bdonmatrix}
  On the event $\mathcal{E}_{1, \tau}$ with $\tau \in \left( 0, \frac{1}{4} \right]$, we have
  \begin{align*}
    \left\| L - p^{- 1} \bar{L} 
    \right\| & \leq 2 \left\| \bar{D}^{- 1} \right\|^2 p^{- 1} \tau\\
    \left\| L \right\| & \leq 2 \left\| \bar{D}^{- 1}
    \right\| p^{- 1}.
  \end{align*}
\end{corollary}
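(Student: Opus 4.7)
The plan is to reduce both inequalities to bounds on $D^{-1}$ and $D^{-1} - p^{-1}\bar{D}^{-1}$, and then to obtain these via a Neumann series expansion around $\bar{D}$ combined with the standard resolvent identity $A^{-1}-B^{-1}=A^{-1}(B-A)B^{-1}$. No heavy machinery is required beyond the deterministic bound $\|\bar{D}^{-1}\|\le 1.568$ from Proposition~\ref{pro:sysmtx:deterministic} and the definition of $\mathcal{E}_{1,\tau}$.

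First I would note that $L$ is obtained by selecting the first $s$ columns of the $2s\times 2s$ matrix $D^{-1}$, and similarly $\bar{L}$ from $\bar{D}^{-1}$. Since restricting to a submatrix cannot increase the operator norm, it suffices to establish
\begin{align*}
\|D^{-1}\| &\le 2\|\bar{D}^{-1}\|\,p^{-1}, &
\|D^{-1}-p^{-1}\bar{D}^{-1}\| &\le 2\|\bar{D}^{-1}\|^2 p^{-1}\tau.
\end{align*}

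Second, to bound $\|D^{-1}\|$ I would factor $p^{-1}D = \bar{D}\bigl(I+\bar{D}^{-1}(p^{-1}D-\bar{D})\bigr)$. On $\mathcal{E}_{1,\tau}$ with $\tau\in(0,\tfrac14]$, submultiplicativity combined with Proposition~\ref{pro:sysmtx:deterministic} gives $\|\bar{D}^{-1}(p^{-1}D-\bar{D})\|\le \|\bar{D}^{-1}\|\tau \le 1.568/4 < 1/2$, so the Neumann series converges and
\[
\|pD^{-1}\| \;=\; \bigl\|\bigl(I+\bar{D}^{-1}(p^{-1}D-\bar{D})\bigr)^{-1}\bar{D}^{-1}\bigr\| \;\le\; \frac{\|\bar{D}^{-1}\|}{1-\|\bar{D}^{-1}\|\tau} \;\le\; 2\|\bar{D}^{-1}\|,
\]
which after dividing by $p$ yields the second inequality of the corollary.

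Third, for the perturbation bound I would apply the resolvent identity with $A=p^{-1}D$ and $B=\bar{D}$ to write
\[
pD^{-1} - \bar{D}^{-1} \;=\; pD^{-1}\bigl(\bar{D}-p^{-1}D\bigr)\bar{D}^{-1}.
\]
Taking operator norms, using the event $\mathcal{E}_{1,\tau}$ to control the middle factor by $\tau$, and inserting the bound $\|pD^{-1}\|\le 2\|\bar{D}^{-1}\|$ from the previous step, gives $p\|D^{-1}-p^{-1}\bar{D}^{-1}\| \le 2\|\bar{D}^{-1}\|^2\tau$, which is the first inequality once one divides by $p$. There is no real obstacle in this argument; the only point that must be checked carefully is that the quantitative choice $\tau\le 1/4$ together with the explicit constant $\|\bar{D}^{-1}\|\le 1.568$ keeps $\|\bar{D}^{-1}\|\tau$ strictly below $1/2$, so that the Neumann series expansion is valid and produces the clean constant $2$ appearing in the statement.
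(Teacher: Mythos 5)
Your proof is correct and follows essentially the same path as the paper: both rely on the resolvent identity $A^{-1}-B^{-1}=A^{-1}(B-A)B^{-1}$ together with the inverse-perturbation bound $\|A^{-1}\|\le \|B^{-1}\|/(1-\|A-B\|\|B^{-1}\|)$ (which the paper derives by rearranging inequalities and you phrase as a Neumann series — these are equivalent), then check that $\tau\|\bar D^{-1}\|\le 1.568/4<1/2$ and pass to the submatrix $L$. The only cosmetic difference is that you take $A=p^{-1}D,\ B=\bar D$ rather than the paper's $A=D,\ B=p\bar D$, which is just a rescaling and leads to the same constants.
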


The proof of this corollary uses elementary matrix analysis and can be found in Appendix~\ref{sec:bdonmatrix}.  Since on the event $\mathcal{E}_{1, \tau}$ with $\tau < 1/4$ the matrix $D = \left[\begin{array}{cc}
  D_0 & D_1\\
  D_1 & D_2
\end{array}\right]$ is invertible, we solve for $\alpha$ and $\sqrt{\left| \bar{K}_M'' \left( 0 \right) \right|} \beta$ from
(\ref{eqn:matrix:vector}):
\begin{align}\label{eqn:alphabetasoln}
  \left[\begin{array}{c}
    \alpha\\
    \sqrt{\left| \bar{K}_M'' \left( 0 \right) \right|} \beta
  \end{array}\right] & = D^{- 1} \left[\begin{array}{c}
    u\\
    0
  \end{array}\right]\nn\\
  & = L u.
\end{align}
In the next section, we will plug \eqref{eqn:alphabetasoln} back into \eqref{eqn:formofpoly}, and analyze the effect of random perturbations on the polynomial $Q(f)$.

\subsection{Random Perturbations}\label{sec:perturbations}

In this section, we show that the dual
polynomial $Q \left( f \right)$ concentrates around $\bar{Q}(f)$ on a discrete set
$\Omega_{\mathrm{grid}}$.

We introduce a random analog of $\bar{v}_\ell$, defined by \eqref{eqn:vlbar}, as
\begin{align}\label{eqn:vl}
  {v}_{\ell} \left( f \right) & = \frac{1}{\sqrt{\left| \bar{K}_M''
  \left( 0 \right) \right|}^\ell} \left[\begin{array}{c}
    {K}_M^{\left( \ell \right)} \left( f - f_1 \right)^{\ast}\\
    \vdots\\
    {K}_M^{\left( \ell \right)} \left( f - f_s \right)^{\ast}\\
      \frac{1}{\sqrt{\left| \bar{K}_M''
  \left( 0 \right) \right|}}{K}_M^{(\ell+1)} \left( f - f_1 \right)^{\ast}\\
  \vdots\\
        \frac{1}{\sqrt{\left| \bar{K}_M''
  \left( 0 \right) \right|}}{K}_M^{(\ell+1)} \left( f - f_s \right)^{\ast}
    \end{array}\right]\nn\\
     & =  \frac{1}{M}  \sum_{j = - 2 M}^{2 M} \Bigg( \frac{i 2 \pi
  j}{\sqrt{\left| \bar{K}_M'' \left( 0 \right) \right|}} \Bigg)^{\ell} g_M
  \left( j \right) \delta_j e^{i2\pi f j} e(j).
\end{align}
with ${K}_M^{\left( \ell \right)}$ the $\ell$th derivative of ${K}_M$, and $e(j)$ defined in \eqref{eqn:ej}.  The expectation of $v_\ell$ is equal to $p$ times its deterministic counterpart defined by~\eqref{eqn:vlbar}:
\begin{align*}
  \mathbbm{E}v_{\ell} \left( f \right) & =  p \bar{v}_{\ell} \left( f \right), \forall f \in \left[
  0, 1 \right]
\end{align*}
Then, in a similar fashion to \eqref{eqn:Qfasvlbar}, we rewrite
\begin{align*}
  \frac{1}{\sqrt{|\bar{K}_M''(0)|}^\ell} Q^{(\ell)} \left( f \right) & = \sum_{k = 1}^s \alpha_k \frac{1}{\sqrt{|\bar{K}_M''(0)|}^\ell}K_M^{(\ell)}
  \left( f - f_k \right)\nn\\
  & \qquad\qquad+ \sum_{k=1}^s 
  \sqrt{\left| \bar{K}_M'' \left(
  0 \right) \right|} \beta_k \frac{1}{\sqrt{|\bar{K}_M''(0)|}^{\ell+1}}{K}_M^{(\ell+1)} \left( f - f_k \right)\nn\\
  & = {v}_\ell \left( f \right)^{\ast} Lu = \left<{L}u, {v}_\ell(f)\right> = \left<u, L^* v_\ell(f)\right>.
\end{align*}
We decompose $L^*v_\ell(f)$ into three parts:
\begin{align*}
L^*v_\ell(f) & = [(L-p^{-1}\bar{L}) + p^{-1}\bar{L}]^*[(v_\ell(f) - p\bar{v}_\ell(f)) + p \bar{v}_\ell(f)]\nn\\
& = \bar{L}^*\bar{v}_\ell(f) + L^*(v_\ell(f) - p\bar{v}_\ell(f)) + (L-p^{-1}\bar{L})^*p \bar{v}_\ell(f),
\end{align*}
which induces a decomposition on $\frac{1}{\sqrt{|\bar{K}_M''(0)|}^\ell} Q^{(\ell)}(f)$
\begin{align}\label{eqn:Qfdecomp}
  \frac{1}{\sqrt{|\bar{K}_M''(0)|}^\ell} Q^{(\ell)} \left( f \right) & =  \left<u, L^*v_\ell(f)\right>\nn\\
  & = \left<u, \bar{L}^*\bar{v}_\ell(f)\right> + \left<u, L^*(v_\ell(f) - p\bar{v}_\ell(f))\right> + \left<u, (L-p^{-1}\bar{L})^*p \bar{v}_\ell(f)\right>\nn\\
  & = \frac{1}{\sqrt{|\bar{K}_M''(0)|^\ell}}\bar{Q}^{(\ell)} \left( f \right) + I_1^\ell \left( f \right) + I_2^\ell \left( f
  \right) .
\end{align}
Here $\frac{1}{\sqrt{|\bar{K}_M''(0)|^\ell}}\bar{Q}^{(\ell)}(f) = \left<u, \bar{L}^*\bar{v}_\ell(f)\right> = \left<\bar{L}u, \bar{v}_\ell(f)\right>$ as in \eqref{eqn:Qfasvlbar} and we have defined
  \begin{align*}
  I_1^\ell(f) = \left<u, L^*(v_\ell(f) - p\bar{v}_\ell(f))\right>
  \end{align*}
  and 
  \begin{align*}
I_2^\ell(f) =  \left<u, (L-p^{-1}\bar{L})^*p \bar{v}_\ell(f)\right>.
  \end{align*}

The goal of the remainder of this section is to show, in Lemma \ref{lm:I1} and \ref{lm:I2}, that $I_1^\ell \left( f
\right)$ and $I_2^\ell \left( f \right)$ are small on a set of grid points $\Omega_{\mathrm{grid}}$ with
high probability. 


The proof of Lemma \ref{lm:I1}, which shows $I_1^\ell(f)$ is small on $\Omega_{\mathrm{grid}}$, essentially follows that of Cand\`es and Romberg \cite{Candes:2007es}. We include the proof details here for completeness, but very little changes in the argument. Since  $I_1^\ell(f) = \left<u, L^*(v_\ell(f) - p\bar{v}_\ell(f))\right>$ is a weighted sum of independent random variables following a symmetric distribution on the complex unit circle, for fixed $f \in [0, 1]$, we apply Hoeffding's inequality to control its value. This in turn requires an estimate of $\|L^*(v_\ell(f) - p \bar{v}_\ell(f))\|_2$. In Lemma \ref{lm:vw}, we first use
concentration of measure (Lemma \ref{lm:concentration}) to
establish that $\left\| v_{\ell} \left( f \right) - p \bar{v}_{\ell} \left( f
\right) \right\|_2$ is small with high probability. In Lemma \ref{lm:vsd}, we then combine Lemma \ref{lm:vw} and Lemma \ref{lm:invertibility} to show $\left\| L^*(v_\ell(f) - p \bar{v}_\ell(f))\right\|_2$ is small. The extension from a fixed $f$ to a finite set $\Omega_{\mathrm{grid}}$ relies on the union bound.

We start with bounding $\|v_\ell(f) - p \bar{v}_\ell(f)\|_2$ in the following lemma. The proof given in Appendix \ref{apx:lm:vw} is based on an inequality of Talagrand.
\begin{lemma}
  \label{lm:vw}Fix $f \in \left[ 0, 1 \right]$. Let
  \begin{align*}
    \bar{\sigma}_{\ell}^2 & := 2^{4 \ell + 1} \frac{m}{M^2}
    \max \left\{ 1, 2^4 \frac{s}{\sqrt{m}} \right\}
  \end{align*}
  and fix a positive number
  \begin{align*}
  a \leq
  \begin{cases}
  \sqrt{2}m^{1/4} & \text{if $2^4 \frac{s}{\sqrt{m}} \geq 1$,}
\\
\frac{\sqrt{2}}{4} \sqrt{\frac{m}{s}} &\text{otherwise.}
  \end{cases}
  \end{align*}
Then we have
  \begin{align*}
    \mathbbm{E} \left\| v_{\ell} \left( f \right) -p\bar{v}_\ell \left(
    f \right) \right\|_2 & \leq 2^{2 \ell + 3}
    \frac{\sqrt{m s}}{M}
  \end{align*}
  \begin{align*}
    \mathbbm{P} \left(\left\| v_{\ell} \left( f \right)
    -p\bar{v}_\ell \left( f \right) \right\|_2 > 2^{2 \ell + 3}
     \frac{\sqrt{m s}}{M} + a \bar{\sigma}_{\ell}, \ell = 0, 1, 2, 3
    \right) & \leq  64 e^{- \gamma a^2}
  \end{align*}
  for some $\gamma > 0$.
\end{lemma}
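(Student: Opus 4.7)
The plan is to express the random vector as a sum of independent zero-mean pieces,
\[
v_\ell(f) - p\bar{v}_\ell(f) \;=\; \sum_{j=-2M}^{2M} Y_j,
\qquad
Y_j := \frac{1}{M}\left(\frac{i 2\pi j}{\sqrt{|\bar{K}_M''(0)|}}\right)^{\!\ell} g_M(j)(\delta_j-p)\, e^{i 2\pi f j}\, e(j),
\]
and then to bound its norm $Z := \|v_\ell(f) - p\bar{v}_\ell(f)\|_2$ first in expectation (by Jensen), and then in probability (by Talagrand's inequality, Lemma~\ref{lm:concentration}), viewing $Z$ as a supremum $\sup_{\|g\|_2\le1}\mathrm{Re}\,\langle g,\sum_j Y_j\rangle$. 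Throughout, I will use the key kernel estimates: $|g_M(j)|\le1$ from~\eqref{eqn:gbd}, $|\bar{K}_M''(0)| = 4\pi^2(M^2-1)/3$, and the resulting normalization bound $|2\pi j|^2/|\bar{K}_M''(0)| \le 12$ uniformly for $|j|\le 2M$. With these, $|2\pi j/\sqrt{|\bar{K}_M''(0)|}|^{2\ell}\le 2^{4\ell}$ for $\ell\le3$ and $\|e(j)\|_2^2 = s\bigl(1+(2\pi j)^2/|\bar{K}_M''(0)|\bigr)\le 13s$.

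For the expectation bound, I use $\mathbb{E}Z \le \sqrt{\mathbb{E}Z^2}$; independence and mean-zero give $\mathbb{E}Z^2=\sum_j\mathbb{E}\|Y_j\|_2^2$. Plugging in the above estimates together with $\mathbb{E}(\delta_j-p)^2 = p(1-p) \le p = m/M$ and summing over the $4M+1$ indices yields
\[
\mathbb{E}Z^2 \;\le\; 2^{4\ell}\cdot 13 s \cdot \frac{m}{M}\cdot\frac{4M+1}{M^2} \;\lesssim\; 2^{4\ell}\cdot\frac{ms}{M^2},
\]
so that $\mathbb{E}Z \le 2^{2\ell+3}\sqrt{ms}/M$, matching the stated expectation.

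For the tail bound, Talagrand's inequality requires a uniform bound $R := \sup_j\|Y_j\|_2 \le 2^{2\ell}\sqrt{13s}/M$ and a weak variance proxy $\sigma^2 := \sup_{\|g\|_2\le1}\sum_j\mathbb{E}|\langle g,Y_j\rangle|^2$. Since the only randomness is $(\delta_j-p)$ with variance $\le p$,
\[
\sigma^2 \;\le\; p\sum_j \frac{|g_M(j)|^2}{M^2}\left|\frac{2\pi j}{\sqrt{|\bar{K}_M''(0)|}}\right|^{2\ell}\!\sup_{\|g\|_2\le1}|\langle g,e(j)\rangle|^2 \;\lesssim\; 2^{4\ell}\cdot\frac{m}{M^2},
\]
where I crucially use the fact that the normalization makes the above sum uniformly $O(1)$ in $M$. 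Talagrand then delivers a sub-Gaussian tail
\[
\mathbb{P}(Z > \mathbb{E}Z + t) \le C\exp\!\left(-\gamma\, t^2/V_\ell\right)
\]
valid in the regime $t \lesssim V_\ell/R$, where $V_\ell$ absorbs both $\sigma^2$ and the lower-order correction $R\cdot\mathbb{E}Z \lesssim 2^{4\ell+3} s\sqrt{m}/M^2$. These two contributions are exactly what is combined by the max: $V_\ell \asymp 2^{4\ell+1}(m/M^2)\max\{1,16s/\sqrt{m}\}=\bar{\sigma}_\ell^2$. The two stated ranges for $a$ (namely $a \le \sqrt{2}\,m^{1/4}$ when $16s/\sqrt{m} \ge 1$ and $a \le \frac{\sqrt{2}}{4}\sqrt{m/s}$ otherwise) are precisely the thresholds under which $a\bar{\sigma}_\ell \lesssim \sigma^2/R$, i.e.\ the sub-Gaussian range of Talagrand. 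A union bound over $\ell\in\{0,1,2,3\}$ multiplies the failure probability by $4$, and each individual Talagrand bound contributes a constant (covering real/imaginary parts of the inner product defining $Z$) that yields the $64$ in the statement.

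The main obstacle is the bookkeeping of constants needed to recognize that the variance $\sigma^2$, the bound $R\cdot\mathbb{E}Z$, and the threshold on $a$ all fit together into the single quantity $\bar{\sigma}_\ell^2$ via $\max\{1,16s/\sqrt{m}\}$; in particular, one must carefully track that the $(2\pi j)^2/|\bar{K}_M''(0)|$ normalization is $O(1)$ rather than $O(M^2)$, which is what prevents the tail from blowing up with the degree of the polynomial. Once these bookkeeping steps are in place, Talagrand's inequality applies verbatim.
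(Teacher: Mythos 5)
Your overall strategy matches the paper's: decompose $v_\ell(f)-p\bar v_\ell(f)$ into the independent zero-mean vectors $Y_j$, view the norm as a supremum of real-linear functionals, bound the mean by Jensen, and invoke Talagrand's concentration inequality (Lemma~\ref{lm:concentration}) for the tail, with the regime analysis on $a$ exactly tracking when the Talagrand exponent is sub-Gaussian. The expectation bound is carried out correctly.

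There is, however, a genuine gap in your variance estimate. Talagrand requires $\sigma^2=\sup_{\|h\|_2\le1}\sum_j\mathbb{E}\,\tilde h^2(Y_j)$ with the supremum \emph{outside} the sum, and the target bound $\sigma^2\lesssim 2^{4\ell}m/M^2$ must contain \emph{no} factor of $s$ --- that is precisely what separates it from $R\cdot\mathbb{E}Z\sim 2^{4\ell}s\sqrt m/M^2$ and makes the $\max\{1,2^4s/\sqrt m\}$ dichotomy meaningful. In your chain you replace $|\langle g,e(j)\rangle|^2$ by $\sup_{\|g\|_2\le1}|\langle g,e(j)\rangle|^2=\|e(j)\|_2^2\lesssim s$ term by term, i.e.\ you push the supremum inside. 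The resulting intermediate quantity sums to $\approx 2^{4\ell}\,p\,(4M+1)\,s/M^2\sim 2^{4\ell}ms/M^2$, so the inequality ``$\lesssim 2^{4\ell}m/M^2$'' you assert next does not follow; it is off by a factor of $s$. The missing idea is a spectral bound on the Gram-type matrix: keep the supremum outside and write
\begin{align*}
\sup_{\|g\|_2\le1}\sum_j \frac{g_M(j)^2}{M^2}\,|\langle g,e(j)\rangle|^2
=\frac{1}{M^2}\,\Big\|\sum_j g_M(j)^2\,e(j)e(j)^*\Big\|
\le\frac{\|g_M\|_\infty}{M^2}\,\Big\|\sum_j g_M(j)\,e(j)e(j)^*\Big\|
=\frac{\|g_M\|_\infty}{M}\,\|\bar D\|,
\end{align*}
and then use $\|\bar D\|\le 1.3623$ from Proposition~\ref{pro:sysmtx:deterministic}, which is where the separation hypothesis $\Delta_f\ge 1/M$ enters. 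You flag that ``the sum is uniformly $O(1)$ in $M$'' and credit it to the $(2\pi j)^2/|\bar K_M''(0)|$ normalization, but that normalization only removes the degree dependence; eliminating the $s$ dependence requires the operator-norm bound on $\bar D$, and without it Talagrand's exponent degrades by a factor of $s$. Finally, a small bookkeeping correction: the constant $64$ arises as $16\times 4$ --- the constant $16$ in Talagrand's inequality times the union bound over $\ell\in\{0,1,2,3\}$ --- not from a real/imaginary split, since the family $\mathcal H$ of real-linear functionals already covers both parts via the complex unit ball of $h$.
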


The following lemma combines Lemma \ref{lm:vw} and Corollary \ref{cor:bdonmatrix} to show $\|L^*(v_\ell(f) - p\bar{v}_\ell(f)\|_2$ is small with high probability. 
\begin{lemma}
  \label{lm:vsd}Let $\tau \in (0, 1/4]$. Consider a finite set $\Omega_{\mathrm{grid}} = \left\{ f_d \right\}$. With
  the same notation as last lemma, we have
  \begin{align*}
    &  & \mathbbm{P} \left[ \sup_{f_d \in \Omega_{\mathrm{grid}}} \left \| L^*(v_{\ell} \left( f_d \right)
    -p\bar{v}_\ell \left( f_d \right))  \right\|_2 \geq 4 \left( 2^{2\ell + 3}\sqrt{\frac{s}{m}} + \frac{M}{m} a
    \bar{\sigma}_\ell \right), \ell = 0, 1, 2, 3 \right]\\
    &  & \leq 64 \left| \Omega_{\mathrm{grid}} \right| e^{- \gamma a^2} +\mathbbm{P} \left(
    \mathcal{E}_{1, \tau}^c \right) .
  \end{align*}
\end{lemma}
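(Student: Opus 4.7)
The plan is to bound the quantity in question by factoring
\[
\|L^{\ast}(v_\ell(f_d)-p\bar v_\ell(f_d))\|_2 \;\leq\; \|L\|\cdot\|v_\ell(f_d)-p\bar v_\ell(f_d)\|_2,
\]
and then controlling the two factors independently: the first by Corollary~\ref{cor:bdonmatrix} on the event $\mathcal{E}_{1,\tau}$, and the second by Lemma~\ref{lm:vw} together with a union bound over $\Omega_{\mathrm{grid}}$.

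More concretely, first I would condition on the event $\mathcal{E}_{1,\tau}$ with $\tau\in(0,1/4]$. On this event Corollary~\ref{cor:bdonmatrix} gives $\|L\|\leq 2\|\bar D^{-1}\|p^{-1}$, and by the bound $\|\bar D^{-1}\|\leq 1.568<2$ from Proposition~\ref{pro:sysmtx:deterministic} we obtain the clean estimate $\|L\|\leq 4p^{-1}=4M/m$ (absorbing constants into the factor of $4$ appearing in the statement). Thus, deterministically on $\mathcal{E}_{1,\tau}$,
\[
\|L^{\ast}(v_\ell(f_d)-p\bar v_\ell(f_d))\|_2 \;\leq\; \frac{4M}{m}\,\|v_\ell(f_d)-p\bar v_\ell(f_d)\|_2.
\]

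Next, for each fixed $f_d\in\Omega_{\mathrm{grid}}$, Lemma~\ref{lm:vw} shows that simultaneously for $\ell=0,1,2,3$,
\[
\|v_\ell(f_d)-p\bar v_\ell(f_d)\|_2 \;\leq\; 2^{2\ell+3}\frac{\sqrt{ms}}{M} + a\bar\sigma_\ell
\]
with probability at least $1-64e^{-\gamma a^2}$. Multiplying by the operator norm bound on $L$ yields exactly the required quantity $4\bigl(2^{2\ell+3}\sqrt{s/m}+(M/m)a\bar\sigma_\ell\bigr)$, because $(4M/m)\cdot 2^{2\ell+3}\sqrt{ms}/M = 4\cdot 2^{2\ell+3}\sqrt{s/m}$.

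Finally, I would take a union bound over all points $f_d\in\Omega_{\mathrm{grid}}$ to promote the pointwise statement to a supremum, picking up a factor of $|\Omega_{\mathrm{grid}}|$ in front of the $64e^{-\gamma a^2}$ tail. Adding the probability $\mathbb{P}(\mathcal{E}_{1,\tau}^c)$ of the event on which the bound on $\|L\|$ could fail gives the claimed inequality. The argument is essentially routine once the two lemmas are in place; the only place requiring mild care is to verify that the constants from Corollary~\ref{cor:bdonmatrix} combine into the factor $4$ stated in the conclusion, which is the reason for the restriction $\tau\leq 1/4$.
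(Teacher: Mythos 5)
Your proof is correct and follows essentially the same route as the paper's: condition on $\mathcal{E}_{1,\tau}$ together with the event of Lemma~\ref{lm:vw}, apply the operator-norm bound $\|L\|\leq 2\|\bar{D}^{-1}\|p^{-1}\leq 4M/m$ from Corollary~\ref{cor:bdonmatrix} and Proposition~\ref{pro:sysmtx:deterministic}, and take a union bound over $\Omega_{\mathrm{grid}}$. The constant-tracking and the role of $\tau\leq 1/4$ (needed to invoke Corollary~\ref{cor:bdonmatrix}) are handled exactly as in the paper.
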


\begin{proof}[Proof of Lemma \ref{lm:vsd}]
  Conditioned on the event
  \begin{align*}
  \bigcap_{\ell, f_d \in \Omega_{\mathrm{grid}}} \left\{ \left\|
  v_{\ell} \left( f_d \right) -p\bar{v}_\ell \left( f_d \right) \right\|_2
  \leq 2^{2 \ell + 3} \frac{\sqrt{m s}}{M} + a
  {\bar{\sigma}_{\ell}} \right\} \bigcap \mathcal{E}_{1, \tau}
  \end{align*}
   we have
  \begin{align*}
    \left\| L^*\left( v_\ell \left( f_d \right) -p\bar{v}_\ell \left( f_d \right)
    \right)\right\|_2 & \leq \left\| L\right\| \left( 2^{2\ell+3} \frac{\sqrt{m s}}{M} + a \bar{\sigma}_\ell \right)\nn\\
    & \leq 2 \left\| \bar{D}^{- 1} \right\| p^{- 1} \left( 2^{2\ell+3} \frac{\sqrt{m s}}{M} + a \bar{\sigma}_\ell \right)\nn\\
    & \leq 4 \left( 2^{2\ell + 3}\sqrt{\frac{s}{m}} +
    \frac{M}{m} a \bar{\sigma}_\ell \right),
  \end{align*}
  where we have used Proposition \ref{pro:sysmtx:deterministic} and Corollary \ref{cor:bdonmatrix}, and plugged in $p = m/M$. The claim of the lemma then follows from the union bound.
\end{proof}

Lemma \ref{lm:vsd} together with Hoeffding's inequality allow us to control the
size of $\sup_{f_d \in \Omega_{\mathrm{grid}}} I_1^\ell \left( f_d \right)$:

\begin{lemma}
  \label{lm:I1}There exists a numerical constant $C$ such that if
  \begin{align*}
m & \geq C  \max \left\{ \frac{1}{\varepsilon^2} \max\left( s\log\frac{|\Omega_{\mathrm{grid}}|}{\delta}, \log^2 \frac{\left| \Omega_{\mathrm{grid}} \right|}{\delta} \right), s \log \frac{s}{\delta} \right\},
  \end{align*}
  then we have
  \begin{align*}
    \mathbbm{P} \Big\{ \sup_{f_d \in \Omega_{\mathrm{grid}}} \left| I_1^\ell \left( f_d \right)
    \right| \leq \varepsilon, \ell = 0, 1, 2, 3 \Big\} & \geq 1 - 12 \delta
  \end{align*}
\end{lemma}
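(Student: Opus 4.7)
The plan is to exploit the independence of the random phases $u = \sign(c)$ from the Bernoulli indicators $\{\delta_j\}$, together with the uniform $\ell_2$-norm control on $L^\ast(v_\ell(f_d) - p\bar v_\ell(f_d))$ supplied by Lemma~\ref{lm:vsd}. Condition on $\{\delta_j\}$, which renders $w^{(\ell,d)} := L^\ast(v_\ell(f_d)-p\bar v_\ell(f_d)) \in \mathbb{C}^s$ deterministic. Then $I_1^\ell(f_d) = \langle u, w^{(\ell,d)}\rangle = \sum_k \overline{w^{(\ell,d)}_k}\, u_k$ is a complex linear combination of the i.i.d.\ symmetric unit-modulus random variables $u_k = \sign(c_k)$, and the goal is to control its modulus.

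For such a sum, a Hoeffding-type inequality for random phases (obtained, e.g., by splitting into real and imaginary parts and applying the usual scalar Hoeffding bound) yields $\Pr(|I_1^\ell(f_d)| \ge \varepsilon \mid \delta) \le 4\exp(-c\,\varepsilon^2 / \|w^{(\ell,d)}\|_2^2)$ for a universal constant $c>0$. A union bound over $f_d \in \Omega_{\mathrm{grid}}$ and $\ell \in \{0,1,2,3\}$ then shows that, to achieve probability at least $1-4\delta$ (say), it suffices to guarantee $\|w^{(\ell,d)}\|_2^2 \le c'\, \varepsilon^2/\log(|\Omega_{\mathrm{grid}}|/\delta)$ on the conditioning event, for a suitable absolute constant $c'$.

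Next I would set the parameters so that the conditioning event has high probability. Choose $a^2 = C_1 \log(|\Omega_{\mathrm{grid}}|/\delta)$ in Lemma~\ref{lm:vsd} so that $64|\Omega_{\mathrm{grid}}| e^{-\gamma a^2} \le \delta$, and fix $\tau = 1/4$ so that Lemma~\ref{lm:invertibility} gives $\Pr(\mathcal E_{1,\tau}^c) \le \delta$ whenever $m \ge C_2 s\log(s/\delta)$. On the intersection event, Lemma~\ref{lm:vsd} supplies the uniform bound $\|w^{(\ell,d)}\|_2 \le 4\bigl(2^{2\ell+3}\sqrt{s/m} + (M/m)\, a\, \bar\sigma_\ell\bigr)$ over all $f_d \in \Omega_{\mathrm{grid}}$ and all $\ell \le 3$. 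Substituting the explicit form $\bar\sigma_\ell^2 = 2^{4\ell+1}(m/M^2)\max\{1, 16s/\sqrt m\}$ gives $(M/m)a\bar\sigma_\ell \le C_3 (a/\sqrt m)\cdot \max\{1, 4(s/\sqrt m)^{1/2}\}$, and one then checks that $m \ge C \varepsilon^{-2}\max\{s \log(|\Omega_{\mathrm{grid}}|/\delta),\, \log^2(|\Omega_{\mathrm{grid}}|/\delta)\}$ drives both the $\sqrt{s/m}$ contribution and the $(M/m)a\bar\sigma_\ell$ contribution below $c''\, \varepsilon/\sqrt{\log(|\Omega_{\mathrm{grid}}|/\delta)}$ for every $\ell \le 3$.

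Summing the failure probabilities, namely the Hoeffding union bound, the $v_\ell$ concentration event in Lemma~\ref{lm:vsd}, and the invertibility event $\mathcal E_{1,\tau}^c$ of Lemma~\ref{lm:invertibility}, each contributing a few multiples of $\delta$, yields the stated $12\delta$. The main obstacle is arithmetic rather than conceptual: one must split into the regimes $16s/\sqrt m \le 1$ and $16s/\sqrt m > 1$ and verify that both the $\sqrt{s/m}$ and the $(M/m)a\bar\sigma_\ell$ contributions stay below the required threshold in each case. The $s\log(|\Omega_{\mathrm{grid}}|/\delta)/\varepsilon^2$ term in the final lower bound on $m$ comes from the first contribution, the $\log^2(|\Omega_{\mathrm{grid}}|/\delta)/\varepsilon^2$ term from the second contribution in the well-separated regime $16s/\sqrt m \le 1$, and the $s\log(s/\delta)$ term is inherited from the invertibility requirement.
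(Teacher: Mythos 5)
Your proposal mirrors the paper's proof: condition on the Bernoulli indicators, apply Hoeffding's inequality for the random phases together with a union bound over $\Omega_{\mathrm{grid}}$ and $\ell$, control $\|L^\ast(v_\ell(f_d)-p\bar v_\ell(f_d))\|_2$ via Lemma~\ref{lm:vsd} with the same choices $a^2 \asymp \log(|\Omega_{\mathrm{grid}}|/\delta)$ and $\tau = 1/4$, and perform the same case split on $2^4 s/\sqrt m \lessgtr 1$ to derive the three terms in the lower bound on $m$. This is exactly the argument in the paper, up to the arithmetic details you correctly flag as routine.
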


Next lemma controls $I_2^\ell(f)$.  Its proof is similar to the proof of Lemma~\ref{lm:I1}.
\begin{lemma}
  \label{lm:I2}There exists a numerical constant $C$ such that if
  \begin{align*}
    m & \geq C \frac{1}{\varepsilon^2} s \log \frac{s}{\delta} \log
    \frac{\left| \Omega_{\mathrm{grid}} \right|}{\delta},
  \end{align*}
then we have
  \begin{align*}
    \mathbbm{P} \Big( \sup_{f_d \in \Omega_{\mathrm{grid}}} \left| I_2^\ell \left( f_d \right)
    \right| < \varepsilon, \ell = 0, 1, 2, 3 \Big) & \leq 1 - 8 \delta
  \end{align*}
\end{lemma}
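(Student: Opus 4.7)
The strategy I would follow closely parallels the proof of Lemma \ref{lm:I1}, but uses the randomness in $L - p^{-1}\bar{L}$ instead of in $v_\ell - p\bar{v}_\ell$. Recalling that
\[
I_2^\ell(f) = \bigl\langle u,\; (L - p^{-1}\bar L)^{*}\, p\,\bar v_\ell(f) \bigr\rangle,
\]
the key observation is that $u = \mathrm{sign}(c)$ is a random sign vector drawn independently of the sampling pattern $\{\delta_j\}$ on which $L$ depends. So for every realization of $\{\delta_j\}$, the quantity $I_2^\ell(f_d)$ is a weighted sum of i.i.d.\ complex unit-modulus random variables following a symmetric distribution, and can be controlled by Hoeffding's inequality provided we control the $\ell_2$ norm of the weights.

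The plan is therefore: first condition on the good event $\mathcal{E}_{1,\tau}$ (with $\tau$ to be tuned). On this event, Corollary \ref{cor:bdonmatrix} gives $\|L - p^{-1}\bar L\| \leq 2\|\bar D^{-1}\|^2 p^{-1}\tau$. Combining with a deterministic uniform bound $\|\bar v_\ell(f)\|_2 \leq C_\ell$ for $\ell=0,1,2,3$ (which comes from the same squared-Fej\'er kernel estimates used in Proposition \ref{pro:sysmtx:deterministic}: the $1/(Mf)^4$ decay plus the separation $\Delta_f \geq 1/M$ makes $\sum_k |\bar K_M^{(\ell)}(f-f_k)/M^\ell|^2$ bounded by a constant), the weight vector $w_\ell(f) := (L-p^{-1}\bar L)^{*} p\,\bar v_\ell(f)$ satisfies
\[
\|w_\ell(f)\|_2 \leq 2\|\bar D^{-1}\|^2 C_\ell\, \tau,
\]
where crucially the $p^{-1}$ and $p$ cancel, leaving an $O(\tau)$ bound.

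Conditional on $\{\delta_j\}$ with $\mathcal{E}_{1,\tau}$ holding, Hoeffding's inequality for complex symmetric random signs then yields
\[
\mathbb{P}\bigl(|I_2^\ell(f_d)| > \varepsilon \,\big|\, \{\delta_j\}\bigr) \leq 4\exp\!\bigl(-c\varepsilon^2/\|w_\ell(f_d)\|_2^2\bigr).
\]
A union bound over the four derivatives and the $|\Omega_{\mathrm{grid}}|$ grid points produces a factor of $4|\Omega_{\mathrm{grid}}|$ in front. Choosing $\tau = c_1 \varepsilon/\sqrt{\log(|\Omega_{\mathrm{grid}}|/\delta)}$ for a sufficiently small constant $c_1$ makes the union-bounded conditional failure probability at most $7\delta$.

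Finally, Lemma \ref{lm:invertibility} applied with this $\tau$ gives $\mathbb{P}(\mathcal{E}_{1,\tau}^c) \leq \delta$ provided
\[
m \geq \frac{50}{\tau^2}\, s\log\!\frac{2s}{\delta} \;\sim\; \frac{s\,\log(s/\delta)\,\log(|\Omega_{\mathrm{grid}}|/\delta)}{\varepsilon^2},
\]
which matches the hypothesis of the lemma. Adding the $\delta$ from $\mathcal{E}_{1,\tau}^c$ to the $7\delta$ above yields total failure probability at most $8\delta$, as required. The one step that needs genuine care — and that I expect to be the main obstacle if I try to make constants sharp — is establishing the deterministic uniform bound on $\|\bar v_\ell(f)\|_2$; this is where the separation hypothesis $\Delta_f \geq 1/M$ is used, through the tail bounds on $\bar K_M^{(\ell)}$ that already power Proposition \ref{pro:expected} and Proposition \ref{pro:sysmtx:deterministic}. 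Everything else is a clean combination of Corollary \ref{cor:bdonmatrix}, Hoeffding, and a union bound.
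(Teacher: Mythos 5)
Your proposal is correct and follows essentially the same route as the paper's proof: condition on $\mathcal{E}_{1,\tau}$, use Corollary~\ref{cor:bdonmatrix} to bound $\|L - p^{-1}\bar L\|$, bound $\|p\bar v_\ell(f)\|_2$ deterministically via the squared-Fej\'er decay and the separation condition, apply Hoeffding conditionally, take a union bound, tune $\tau \sim \varepsilon/\sqrt{\log(|\Omega_{\mathrm{grid}}|/\delta)}$, and close with Lemma~\ref{lm:invertibility}. The only cosmetic difference is that the paper bounds $\|p\bar v_\ell(f)\|_2 \leq \|p\bar v_\ell(f)\|_1$ and then sums tail bounds in $\ell_1$, whereas you argue the $\ell_2$ bound directly; the two are equivalent given the entrywise bounds, and you correctly flag that this is exactly where $\Delta_f \geq 1/M$ enters.
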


Both Lemmas~\ref{lm:I1} and~\ref{lm:I2} are proven in the Appendix.

Denote 
\begin{align}
\mathcal{E}_2 = \Bigg\{ \sup_{f_d \in \Omega_{\mathrm{grid}}} \Bigg|\frac{1}{\sqrt{|\bar{K}_M''(0)|^\ell}} Q^{(\ell)}(f_d) - \frac{1}{\sqrt{|\bar{K}_M''(0)|^\ell}}\bar{Q}^{(\ell)}(f_d)\Bigg| \leq \frac{\varepsilon}{3}, \ell = 0, 1, 2, 3\Bigg\}.\label{eqn:E2}
\end{align}
Combining the decomposition \eqref{eqn:Qfdecomp}, Lemma \ref{lm:I1}, and Lemma \ref{lm:I2}  with suitable redefinition of $\varepsilon$ and $\delta$ immediately yields the following proposition
\begin{proposition}
Suppose $\Omega_{\mathrm{grid}} \subset [0, 1]$ is a finite set of points, $\varepsilon > 0$ controls the maximal deviation of the random polynomial and its derivatives from their expectations as in \eqref{eqn:E2}, and $\delta \in (0, 1)$ is small constant to control probability. Then there exists constant $C$ such that \begin{align}\label{eqn:lbdm}
m & \geq C \frac{1}{\varepsilon^2} \max \left\{\log^2 \frac{\left| \Omega_{\mathrm{grid}} \right|}{\delta}, s \log \frac{s}{\delta}\log\frac{|\Omega_{\mathrm{grid}}|}{\delta} \right\},
\end{align}
is sufficient to guarantee 
\begin{align*}
\Pr(\mathcal{E}_2) \geq 1 - \delta.
\end{align*}
\end{proposition}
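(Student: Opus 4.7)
The plan is to combine the decomposition \eqref{eqn:Qfdecomp} with Lemmas~\ref{lm:I1} and~\ref{lm:I2} through a triangle inequality and a union bound. From \eqref{eqn:Qfdecomp}, for each $\ell \in \{0,1,2,3\}$ and each $f \in [0,1]$,
\[
\frac{Q^{(\ell)}(f) - \bar{Q}^{(\ell)}(f)}{\sqrt{|\bar{K}_M''(0)|^\ell}} \;=\; I_1^\ell(f) + I_2^\ell(f).
\]
Consequently, to guarantee the event $\mathcal{E}_2$ defined in \eqref{eqn:E2}, it suffices that simultaneously for $\ell = 0,1,2,3$,
\[
\sup_{f_d \in \Omega_{\mathrm{grid}}} |I_1^\ell(f_d)| \;\leq\; \tfrac{\varepsilon}{6}
\quad\text{and}\quad
\sup_{f_d \in \Omega_{\mathrm{grid}}} |I_2^\ell(f_d)| \;\leq\; \tfrac{\varepsilon}{6}.
\]

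Next, I would invoke Lemma~\ref{lm:I1} with tolerance $\varepsilon/6$ and confidence parameter $\delta/24$ (chosen so that the failure probability $12\delta'$ in that lemma becomes $\delta/2$). Its hypothesis requires $m$ to be at least a constant multiple of $\max\{\varepsilon^{-2}s\log(|\Omega_{\mathrm{grid}}|/\delta),\; \varepsilon^{-2}\log^2(|\Omega_{\mathrm{grid}}|/\delta),\; s\log(s/\delta)\}$, after absorbing the factors of $36 = 6^2$ and $\log 24$ into the constant. Similarly, Lemma~\ref{lm:I2} applied with tolerance $\varepsilon/6$ and confidence $\delta/16$ (covering its $8\delta'$ failure probability) requires $m$ to be at least a constant multiple of $\varepsilon^{-2} s\log(s/\delta)\log(|\Omega_{\mathrm{grid}}|/\delta)$. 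A union bound over the two events then bounds the total failure probability by $\delta/2 + \delta/2 = \delta$.

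Finally, I would reconcile the combined $m$-requirement with the one claimed in the proposition. The three terms coming from Lemma~\ref{lm:I1} are handled as follows: the second, $\varepsilon^{-2}\log^2(|\Omega_{\mathrm{grid}}|/\delta)$, appears unchanged in the stated bound; the first, $\varepsilon^{-2} s\log(|\Omega_{\mathrm{grid}}|/\delta)$, is dominated by the Lemma~\ref{lm:I2} contribution $\varepsilon^{-2} s\log(s/\delta)\log(|\Omega_{\mathrm{grid}}|/\delta)$ because $\log(s/\delta) \geq 1$; and the third, $s\log(s/\delta)$, is likewise dominated because $\varepsilon^{-2}\log(|\Omega_{\mathrm{grid}}|/\delta) \geq 1$ for the regime of interest (and can in any case be absorbed by enlarging $C$). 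What remains is exactly the max in the stated sample bound. There is no real technical obstacle here: Lemmas~\ref{lm:I1} and~\ref{lm:I2} have already absorbed all the probabilistic content, and the only task is constant bookkeeping together with a two-term union bound.
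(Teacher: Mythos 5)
Your proof is correct and takes essentially the same approach as the paper, which states only that ``Combining the decomposition \eqref{eqn:Qfdecomp}, Lemma \ref{lm:I1}, and Lemma \ref{lm:I2} with suitable redefinition of $\varepsilon$ and $\delta$ immediately yields the following proposition.'' Your triangle inequality, two-event union bound, and constant bookkeeping are exactly the ``suitable redefinition'' the paper leaves implicit.
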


\subsection{Extension to Continuous Domain}\label{sec:ext_cont}
We have proved that $\frac{1}{\sqrt{|\bar{K}_M''(0)|^\ell}} Q^{(\ell)}(f)$ and $\frac{1}{\sqrt{|\bar{K}_M''(0)|^\ell}}\bar{Q}^{(\ell)}(f)$ are not far on a set of grid points. This section aims extending this statement to everywhere in $[0, 1]$, and show $|Q(f)| < 1$ for $f \notin \Omega$ eventually. The key is the following Bernstein's polynomial inequality:
\begin{lemma}[Bernstein's polynomial inequality, \cite{Schaeffer:1941wm}] Let $p_N$ be any polynomial of degree
  $N$ with complex coefficients. Then
  \begin{align*}
    \sup_{\left| z \right| \leq 1} \left| p' \left( z \right) \right| & \leq
    N \sup_{\left| z \right| \leq 1} \left| p \left( z \right) \right| .
  \end{align*}
\end{lemma}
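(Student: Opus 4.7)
The statement is the classical Bernstein polynomial inequality, and I would attack it in two stages: first reduce the closed-disk supremum to a unit-circle supremum, and then prove the circle version using an interpolation/duality argument.

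The first reduction is immediate from the maximum modulus principle. Since $p$ is a polynomial, so is $p'$; both are entire, hence analytic on the closed disk $\overline{D} = \{|z| \le 1\}$. The maximum modulus principle therefore gives
\begin{align*}
\sup_{|z|\le 1}|p(z)| = \sup_{|z|=1}|p(z)|, \qquad \sup_{|z|\le 1}|p'(z)| = \sup_{|z|=1}|p'(z)|.
\end{align*}
So it suffices to prove the inequality when both suprema are taken over the circle. After normalizing $M := \sup_{|z|=1}|p(z)| = 1$, the statement becomes $|p'(z_0)| \le N$ for every $z_0$ with $|z_0|=1$.

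The second stage is the heart of the argument. Parameterize the circle by $z = e^{i\theta}$ and set $P(\theta) := p(e^{i\theta})$, so $P$ is a trigonometric polynomial with frequencies in $\{0,1,\dots,N\}$, and $|P'(\theta)| = |p'(e^{i\theta})|$. I would invoke M.\ Riesz's interpolation formula, which represents $P'$ at any chosen point as an explicit finite linear combination of values of $P$ at the $2N$ Chebyshev-like nodes $\tau_k = (2k-1)\pi/(2N)$ shifted to that point, with coefficients whose absolute values sum to $N$. The inequality then follows by taking absolute values and using $|P(\cdot)| \le 1$. Equivalently, one can proceed via the reciprocal polynomial $p^{\ast}(z) = z^N\overline{p(1/\bar z)}$, which satisfies $|p^\ast(z)| = |p(z)|$ on $|z|=1$; combining this with Gauss--Lucas (applied to a carefully chosen auxiliary polynomial $p + \lambda p^\ast$ with $|\lambda|=1$) again produces the factor $N$.

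The main obstacle is establishing the Riesz interpolation identity (or, in the alternative route, verifying that the auxiliary polynomial has all its critical points inside $\overline{D}$ so that Gauss--Lucas applies). The interpolation formula itself is a standard consequence of Lagrange interpolation on the $2N$-th roots of $-1$ for trigonometric polynomials of degree $\le N$, combined with the computation $\sum_{k=1}^{2N}|\lambda_k| = N$ via a telescoping or symmetry argument on $\cot^2(\tau_k/2)$. Since the lemma is classical and cited to Schaeffer--Szeg\H{o}, in a written-out proof I would likely just invoke the interpolation identity as a black box and remark that the extremal polynomial $p(z) = z^N$ (attaining equality) confirms that the constant $N$ cannot be improved.
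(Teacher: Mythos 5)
The paper does not prove this lemma at all --- it is cited to Schaeffer \cite{Schaeffer:1941wm} and used as a black box, so there is no ``paper's own proof'' to compare against. Your sketch is therefore doing more than the paper asks.

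As for the sketch itself: the first stage (maximum modulus principle reducing the closed-disk supremum to the circle) is correct and cleanly stated. The second stage is where the proof actually lives, and you are candid that you would ``invoke the interpolation identity as a black box,'' so the submission is an outline of a known proof rather than a proof. Two small cautions on the second stage. First, when you pass from $p$ on the circle to $P(\theta)=p(e^{i\theta})$, you correctly observe $|P'(\theta)|=|p'(e^{i\theta})|$; note that $P$ is a trigonometric polynomial of degree $N$ with one-sided spectrum $\{0,\dots,N\}$, which is fine for Riesz's formula but is worth saying explicitly since the constant in the trigonometric Bernstein inequality is the degree, not the bandwidth $2N+1$. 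Second, the Gauss--Lucas alternative you gesture at is usually run not with $p+\lambda p^{\ast}$ but with $q_\lambda(z)=M\lambda z^{N}-p(z)$ for $|\lambda|>1$: on $|z|=1$ one has $|M\lambda z^{N}|>|p(z)|$, so Rouch\'e places all $N$ zeros of $q_\lambda$ strictly inside the disk, Gauss--Lucas then places the zeros of $q_\lambda'(z)=MN\lambda z^{N-1}-p'(z)$ inside the disk, and the conclusion $|p'(z_0)|\le MN$ for $|z_0|=1$ follows by the contrapositive (choosing $\lambda=p'(z_0)/(MNz_0^{N-1})$). Your version with $p+\lambda p^{\ast}$ and $|\lambda|=1$ does not obviously control the location of the critical points, so if you go that route you would need to supply the missing Rouch\'e-type step. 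As written, the proposal is a correct high-level roadmap to a classical theorem that the paper simply cites, but it is not a self-contained proof.
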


Our first proposition verifies that our random dual polynomial is close to the deterministic dual polynomial on all of $[0,1]$

\begin{proposition}
  \label{pro:extcont}Suppose $\Delta_f \geq \Delta_{\min} = \frac{1}{M}$ and
  \begin{align}
m & \geq
    C \max \left\{ \frac{1}{\varepsilon^2} \log^2 \frac{M}{\delta
    \varepsilon}, \frac{1}{\varepsilon^2} s \log \frac{s}{\delta} \log
    \frac{M}{\delta \varepsilon} \right\}.\label{eqn:everywherecondition}
  \end{align}
  Then with probability $1-\delta$, we have
  \begin{align}\label{eqn:erreverywhere}
\left|\frac{1}{\sqrt{|\bar{K}_M''(0)|^\ell}} Q^{(\ell)}(f) - \frac{1}{\sqrt{|\bar{K}_M''(0)|^\ell}}\bar{Q}^{(\ell)}(f)\right| \leq \varepsilon, \forall f \in [0, 1], \ell = 0, 1, 2, 3.
  \end{align}
\end{proposition}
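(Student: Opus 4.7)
The plan is to combine the grid-level concentration bound from the proposition at the end of Section~\ref{sec:perturbations} with Bernstein's polynomial inequality to lift the estimate to every $f \in [0,1]$. Let $F_\ell(f) := \frac{1}{\sqrt{|\bar{K}_M''(0)|}^\ell}\bigl(Q^{(\ell)}(f)-\bar{Q}^{(\ell)}(f)\bigr)$. Because both $K_M$ and $\bar{K}_M$, together with all of their derivatives, are trigonometric polynomials in $f$ of degree at most $2M$, so is $F_\ell$. The trigonometric form of Bernstein's inequality (which follows from the stated complex version via $z=e^{i2\pi f}$) then gives $\sup_f |F_\ell'(f)| \leq 4\pi M \sup_f |F_\ell(f)|$.

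Next, I would take an equispaced grid $\Omega_{\mathrm{grid}} \subset [0,1]$ of size $N_g$ with spacing $h = 1/N_g$, and invoke the grid proposition with tolerance $\varepsilon/2$ (after rescaling the constant $1/3$ that appears in the definition of $\mathcal{E}_2$): on the resulting event we have $|F_\ell(f_d)| \leq \varepsilon/2$ for every $\ell \in \{0,1,2,3\}$ and every $f_d \in \Omega_{\mathrm{grid}}$, with probability at least $1-\delta$ whenever
\begin{align*}
m \geq C' \frac{1}{\varepsilon^2} \max\left\{\log^2\frac{N_g}{\delta}, \ s\log\frac{s}{\delta}\log\frac{N_g}{\delta}\right\}.
\end{align*}
Choosing $N_g$ proportional to $M/\varepsilon$ (with a constant large enough to absorb the bootstrap below) yields exactly the hypothesis \eqref{eqn:everywherecondition}.

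For arbitrary $f \in [0,1]$, let $f_d$ be the nearest grid point, so $|f-f_d| \leq h/2$. By the mean-value theorem combined with the Bernstein bound, $|F_\ell(f)| \leq |F_\ell(f_d)| + (h/2)\cdot 4\pi M \sup_g |F_\ell(g)| \leq \varepsilon/2 + 2\pi M h \sup_g|F_\ell(g)|$. Taking supremum over $f$ and rearranging gives $(1 - 2\pi M h)\sup_f|F_\ell(f)| \leq \varepsilon/2$; since $N_g \asymp M/\varepsilon$ makes $2\pi M h = 2\pi\varepsilon \leq 1/2$ for all relevant $\varepsilon$, this yields $\sup_f|F_\ell(f)| \leq \varepsilon$ for each of $\ell = 0,1,2,3$, which is exactly \eqref{eqn:erreverywhere}.

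The one genuinely delicate point is the apparent circularity of the Bernstein bootstrap, since $\sup|F_\ell|$ appears on both sides of the chain above; this is resolved cleanly by picking the grid spacing strictly below the critical threshold $1/(4\pi M)$, which costs only a constant factor in $N_g$ and hence only an $O(\log(1/\varepsilon))$ correction inside the logarithms of the sample-complexity bound (explaining the $M/(\delta\varepsilon)$ rather than $M/\delta$ inside the log in \eqref{eqn:everywherecondition}). Beyond that, everything reduces to a union bound over the four derivative orders and the $N_g$ grid points, already built into the preceding proposition.
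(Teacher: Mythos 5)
Your proof is correct, and it takes a genuinely different (and arguably cleaner) route through the extension step. The paper first establishes a crude a priori bound $\sup_f \frac{1}{\sqrt{|\bar K_M''(0)|}^\ell}|Q^{(\ell)}(f)|\leq CM^2$ on the event $\mathcal{E}_{1,1/4}$ (using $\|u\|_2\leq\sqrt s$, $\|L\|\leq 2\|\bar D^{-1}\|p^{-1}$, and $\|v_\ell(f)\|_2\lesssim\sqrt s$), then invokes Bernstein's inequality to turn this into a Lipschitz bound $|Q^{(\ell)}(f_a)-Q^{(\ell)}(f_b)|\leq CM^3|f_a-f_b|$ for $Q^{(\ell)}$ alone, and finally takes a grid of size $\asymp M^3/\varepsilon$ so fine that a three-term triangle inequality (applied symmetrically to $Q^{(\ell)}$ and $\bar Q^{(\ell)}$) closes the argument. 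You instead apply Bernstein directly to the normalized difference $F_\ell=Q^{(\ell)}-\bar Q^{(\ell)}$, which is itself a degree-$2M$ trigonometric polynomial, and close with a self-referential bootstrap rather than an a priori sup-norm estimate. This buys you two things: you avoid the rough $O(M^2)$ bound on $\sup|Q^{(\ell)}|$ (and the explicit reliance on $\mathcal{E}_{1,1/4}$ at this step, beyond what is already implicit in $Q^{(\ell)}$ being well-defined under the grid proposition), and your contraction argument in fact permits a grid of size $\asymp M$ independent of $\varepsilon$ (just pick $h<1/(4\pi M)$ so that $2\pi Mh$ is a fixed constant below $1$), which would eliminate the $\varepsilon$ from inside the logarithm; keeping $N_g\asymp M/\varepsilon$ as you do is harmless and reproduces the stated bound exactly. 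Both arguments consume the same grid-level concentration result and the same Bernstein inequality, so the sample-complexity conclusions agree.

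One small point of bookkeeping: the explanation in your last paragraph attributes the $\varepsilon$ inside $\log\frac{M}{\delta\varepsilon}$ to a ``constant factor'' correction from choosing $h$ below the Bernstein threshold; in fact that correction really is an absolute constant and by itself contributes no $\log(1/\varepsilon)$. The $1/\varepsilon$ inside the logarithm enters only because you set $N_g\asymp M/\varepsilon$; with the bootstrap you could take $N_g\asymp M$ and would then get $\log\frac{M}{\delta}$ instead, which is strictly stronger than what the proposition asserts. So the statement as written is of course still reached, but the logarithmic $\varepsilon$-dependence is a (benign) artifact of your grid choice, not a necessity of the argument.
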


\begin{proof}
  It suffices to prove \eqref{eqn:erreverywhere} on $\mathcal{E}_{1, 1/4}$ and $\mathcal{E}_2$ and then modify the lower bound \eqref{eqn:lbdm}. We first give a very rough estimate of $\sup_f \frac{1}{\sqrt{|\bar{K}_M''(0)|}^\ell}\left| Q^{(\ell)} \left( f \right) \right|$
  on the set $\mathcal{E}_{1, 1 / 4}$:
  \begin{align*}
    \frac{1}{\sqrt{|\bar{K}_M''(0)|}^\ell}\left| Q^{(\ell)} \left( f \right) \right| & = \left| \left<u, L^*v_\ell(f)\right> \right|\nn\\
    & \leq \|u\|_2 \|L\| \|v_\ell(f)\|_2\nn\\
    & \leq C p^{- 1} s\nn\\
    & \leq C M^2
  \end{align*}
  where we have used $\left\| u \right\|_2 \leq \sqrt{s}$ and $\left\| v_{\ell}
  \left( f \right) \right\|_2 \leq C \sqrt{s}$. To see the
  latter, we note
  \begin{align*}
    \left\| v_{\ell} \left( f \right) \right\|_2 & \leq  \sum_{j = - 2 M}^{2
    M} \Bigg\| \frac{1}{M} \Bigg( \frac{i 2 \pi j}{\sqrt{\left| \bar{K}_M''
    \left( 0 \right) \right|}} \Bigg)^{\ell} g_M \left( j\right)
    e(j) \Bigg\|_2\nn\\
    & \leq \left( 4 M + 1 \right) \frac{1}{M} 4^{\ell+1} s^{1 / 2},
  \end{align*}
  where we have used
  \begin{align*}
\|g_M\|_\infty &\leq 1, \\
\Big|\frac{2\pi j}{\sqrt{\bar{K}_M''(0)}}\Big| &\leq 4 \tmop{when} M \geq 2,\\
    \|e(j)\|_2^2 &\leq s \left( 1 +
    \max_{\left| j \right| \leq 2 M} \frac{\left( 2 \pi j \right)^2}{\left|
    K_M'' \left( 0 \right) \right|} \right) \leq 14 s \text{\ when\ } M \geq 4.
\end{align*}
  Viewing $\frac{1}{\sqrt{|\bar{K}''_M(0)|}}Q^{(\ell)} \left( \cdot \right)$ as a trigonometric polynomial in $z = e^{-i2\pi f}$ of degree $2M$, according to
  Bernstein's polynomial inequality, we get
  \begin{align*}
    \left| \frac{1}{\sqrt{|\bar{K}''_M(0)|}}Q^{(\ell)} \left( f_a \right) - \frac{1}{\sqrt{|\bar{K}''_M(0)|}}Q^{(\ell)} \left( f_b \right) \right| & \leq \left|
    e^{- i 2 \pi f_a} - e^{- i 2 \pi f_b} \right| \sup_z \left| \frac{d \frac{1}{\sqrt{|\bar{K}''_M(0)|}}Q^{(\ell)} \left( z \right)}{d
    z} \right|\nn\\
    & \leq 4 \pi \left| f_a - f_b \right| 2M \sup_f \left| \frac{1}{\sqrt{|\bar{K}''_M(0)|}}Q^{(\ell)} \left( f \right) \right|\nn\\
    & \leq C M^3 \left| f_a - f_b \right| .
  \end{align*}
  We select $\Omega_{\mathrm{grid}} \subset [0, 1]$ such that for any $f \in [0, 1]$, there exists a point $f_d \in \Omega_{\mathrm{grid}}$ satisfying
  $\left| f - f_d \right| \leq \frac{\varepsilon}{3 C M^3}$. The size of
  $\Omega_{\mathrm{grid}}$ is less than $3 C M^3 / \varepsilon$.

With this choice of $\Omega_{\mathrm{grid}}$, on the set $\mathcal{E}_{1, 1/4} \bigcap \mathcal{E}_2$ we have
  \begin{align*}
& \left|\frac{1}{\sqrt{|\bar{K}_M''(0)|^\ell}} Q^{(\ell)}(f) - \frac{1}{\sqrt{|\bar{K}_M''(0)|^\ell}}\bar{Q}^{(\ell)}(f)\right|\nn\\
&\leq \left|\frac{1}{\sqrt{|\bar{K}_M''(0)|^\ell}} Q^{(\ell)}(f) - \frac{1}{\sqrt{|\bar{K}_M''(0)|^\ell}} Q^{(\ell)}(f_d)\right| + \left|\frac{1}{\sqrt{|\bar{K}_M''(0)|^\ell}} Q^{(\ell)}(f_d) - \frac{1}{\sqrt{|\bar{K}_M''(0)|^\ell}} \bar{Q}^{(\ell)}(f_d)\right|\nn\\
& + \left|\frac{1}{\sqrt{|\bar{K}_M''(0)|^\ell}} \bar{Q}^{(\ell)}(f_d) - \frac{1}{\sqrt{|\bar{K}_M''(0)|^\ell}}\bar{Q}^{(\ell)}(f)\right|\nn\\
&\leq CM^3 |f-f_d| + \frac{\varepsilon}{3} + CM^3 |f-f_d|\nn\\
&\leq \varepsilon, \forall f \in [0, 1].
  \end{align*}
Finally, we modify the condition \eqref{eqn:lbdm} according to our choice of $\Omega_{\mathrm{grid}}$:
  \begin{align*}
    m & \geq C \max \left\{ \frac{1}{\varepsilon^2} \log^2 \frac{M}{\delta
    \varepsilon}, \frac{1}{\varepsilon^2} s \log \frac{s}{\delta} \log
    \frac{M}{\delta \varepsilon} \right\}
  \end{align*}
  \end{proof}
  
An immediate consequence of Proposition \ref{pro:extcont} and the bound \eqref{eqn:defaraway} of Proposition \ref{pro:expected} is the following estimate on $Q(f)$ for $f \in \Omega_{\mathrm{far}} = [0, 1]/\bigcup_k \left[ f_k - f_{b, 1}, f_k + f_{b, 1}\right]$:
\begin{lemma}\label{lm:faraway}
Suppose $\Delta_f \geq \Delta_{\min} = \frac{1}{M}$ and
  \begin{align*}
m & \geq C \max \left\{\log^2 \frac{M}{\delta}, s \log \frac{s}{\delta} \log
    \frac{M}{\delta} \right\}.
  \end{align*}
  Then with probability $1-\delta$, we have
 \begin{align}
|Q(f)| < 1, \forall f \in \Omega_{\mathrm{far}}. 
\end{align}
\end{lemma}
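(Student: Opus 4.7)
My plan is to derive Lemma \ref{lm:faraway} as an immediate corollary of Proposition \ref{pro:extcont} combined with the deterministic bound \eqref{eqn:defaraway} in Proposition \ref{pro:expected}. The deterministic polynomial $\bar{Q}$ already satisfies $|\bar{Q}(f)| < 0.99992$ uniformly on $\Omega_{\mathrm{far}}$, so the only thing to do is show that the random polynomial $Q$ does not deviate from $\bar{Q}$ by more than a fixed constant gap to $1$.

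Concretely, I fix a numerical constant $\varepsilon_0 \in (0, 1 - 0.99992)$, e.g.\ $\varepsilon_0 = 10^{-4}$, and apply Proposition \ref{pro:extcont} with this choice of $\varepsilon$. The conclusion of that proposition (specialized to $\ell = 0$) says that with probability at least $1-\delta$,
\begin{equation*}
|Q(f) - \bar{Q}(f)| \;\leq\; \varepsilon_0 \qquad \text{for all } f \in [0,1].
\end{equation*}
Combined with \eqref{eqn:defaraway}, the triangle inequality gives, on this event,
\begin{equation*}
|Q(f)| \;\leq\; |\bar{Q}(f)| + \varepsilon_0 \;<\; 0.99992 + \varepsilon_0 \;<\; 1 \qquad \text{for all } f \in \Omega_{\mathrm{far}},
\end{equation*}
which is exactly the desired bound.

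The only remaining bookkeeping is the sample complexity. Proposition \ref{pro:extcont} requires
\begin{equation*}
m \;\geq\; C \max\Bigl\{ \tfrac{1}{\varepsilon^2}\log^2 \tfrac{M}{\delta \varepsilon},\; \tfrac{1}{\varepsilon^2} s \log \tfrac{s}{\delta} \log \tfrac{M}{\delta \varepsilon} \Bigr\}.
\end{equation*}
Since $\varepsilon_0$ is now a fixed numerical constant, the factors $1/\varepsilon_0^2$ and $\log(1/\varepsilon_0)$ can be absorbed into a new numerical constant, and the condition simplifies to
\begin{equation*}
m \;\geq\; C' \max\Bigl\{ \log^2 \tfrac{M}{\delta},\; s \log \tfrac{s}{\delta} \log \tfrac{M}{\delta} \Bigr\},
\end{equation*}
which matches the hypothesis of Lemma \ref{lm:faraway}. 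There is no real obstacle here --- the lemma is essentially just repackaging Proposition \ref{pro:extcont} with an explicit choice of $\varepsilon$ chosen small enough to leave room below the deterministic bound of $0.99992$.
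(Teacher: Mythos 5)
Your approach is exactly the one the paper takes: apply Proposition \ref{pro:extcont} with a fixed numerical $\varepsilon$, combine with the deterministic bound \eqref{eqn:defaraway} via the triangle inequality, and absorb the now-constant $\varepsilon$ into $C$. One small arithmetic slip: you correctly require $\varepsilon_0 \in (0, 1 - 0.99992)$, i.e.\ $\varepsilon_0 < 8 \times 10^{-5}$, but then offer $\varepsilon_0 = 10^{-4}$ as an example, which lies outside that interval and would give $0.99992 + 10^{-4} = 1.00002 > 1$. The paper uses $\varepsilon = 10^{-5}$, which does satisfy the constraint; with that (or any value below $8\times 10^{-5}$) the argument goes through as you describe.
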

\begin{proof}
It suffices to choose $\varepsilon = 10^{-5}$. The rest follows from \eqref{eqn:erreverywhere}, triangle inequality, and modification of the constant in \eqref{eqn:everywherecondition}.
\end{proof}

A similar statement holds for $f \in
\Omega_{\tmop{near}} = \bigcup_k \left[ f_k - f_{b, 1}, f_k + f_{b, 1}
\right]$.

\begin{lemma}
  \label{lm:near}Suppose $\Delta_f \geq \Delta_{\min} = \frac{1}{M}$ and
  \begin{align*}
    m & \geq C \max \left\{ \log^2 \left( \frac{M}{\delta} \right), s \log
    \frac{s}{\delta} \log \frac{M}{\delta} \right\} .
  \end{align*}
  Then we have $\left| Q \left( f \right) \right| < 1$ for all $f \in
  \Omega_{\tmop{near}} .$
\end{lemma}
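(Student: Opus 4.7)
The plan is to mimic the near-region argument from Cand\`es--Fernandez-Granda, showing that $|Q(f)|^2$ is strictly concave on each interval $[f_k - f_{b,1}, f_k + f_{b,1}]$, and then combining strict concavity with the interpolation conditions $Q(f_k) = \sign(c_k)$ and $Q'(f_k) = 0$ (which hold by construction of $Q$ via the linear system \eqref{eqn:Q1:specific}--\eqref{eqn:Q2:specific} on the event $\mathcal{E}_{1,\tau}$ from Section~\ref{sec:invertibility}) to conclude $|Q(f)|^2 < 1$ for $f$ near but not equal to any $f_k$.

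More precisely, I would compute
\[
\tfrac{1}{2} \tfrac{d^2}{df^2} |Q(f)|^2 \;=\; Q_R(f) Q_R''(f) + Q_I(f) Q_I''(f) + (Q_R'(f))^2 + (Q_I'(f))^2,
\]
and aim to show that this quantity, after normalization by $|\bar{K}_M''(0)|$, is bounded above by a strictly negative constant. The deterministic analog of this quantity is controlled by the last display of Proposition \ref{pro:expected}:
\[
\tfrac{1}{|\bar{K}_M''(0)|}\bigl(\bar{Q}_R \bar{Q}_R'' + |\bar{Q}'|^2 + |\bar{Q}_I||\bar{Q}_I''|\bigr) \;\leq\; -7.865\times 10^{-2}.
\]
To transfer this bound to $Q$, I invoke Proposition \ref{pro:extcont} with a small fixed $\varepsilon$ (e.g.\ $\varepsilon = 10^{-3}$), which under the hypothesis $m \geq C \max\{\log^2(M/\delta),\, s\log(s/\delta)\log(M/\delta)\}$ yields
\[
\tfrac{1}{\sqrt{|\bar{K}_M''(0)|^\ell}}\bigl|Q^{(\ell)}(f) - \bar{Q}^{(\ell)}(f)\bigr| \leq \varepsilon, \qquad \ell = 0,1,2,3,\ f \in [0,1],
\]
with probability $1-\delta$.

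Writing $Q^{(\ell)} = \bar{Q}^{(\ell)} + E^{(\ell)}$ and expanding the perturbation of the half-second-derivative, every cross term is a product of at most two quantities bounded by Proposition \ref{pro:expected} (after suitable normalization) and at most two normalized error terms bounded by $\varepsilon$. Since the bounds \eqref{eqn:near1}--\eqref{eqn:near5} are all $O(1)$, the perturbation is $O(\varepsilon)$, and for $\varepsilon$ small enough (independent of $M,s$) it is dominated by the margin $7.865\times 10^{-2}$, leaving
\[
\tfrac{1}{|\bar{K}_M''(0)|}\bigl(Q_R Q_R'' + |Q'|^2 + Q_I Q_I''\bigr) \;<\; 0 \quad \text{on } \Omega_{\text{near}}.
\]
Finally, since $Q(f_k) = \sign(c_k)$ gives $|Q(f_k)|^2 = 1$ and $Q'(f_k) = 0$ makes $\tfrac{d}{df}|Q(f)|^2$ vanish at $f_k$, strict negativity of the second derivative of $|Q|^2$ on $[f_k - f_{b,1}, f_k + f_{b,1}]$ forces $|Q(f)|^2 < 1$ for every $f$ in this interval with $f \neq f_k$.

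The main technical obstacle will be the careful bookkeeping of the perturbation expansion: one must check that every term in $Q_R Q_R'' + |Q'|^2 + Q_I Q_I'' - (\bar{Q}_R \bar{Q}_R'' + |\bar{Q}'|^2 + \bar{Q}_I \bar{Q}_I'')$ carries the correct factor of $|\bar{K}_M''(0)|$ for the normalization to match Propositions \ref{pro:expected} and \ref{pro:extcont}, and that the extra replacement $|\bar{Q}_I||\bar{Q}_I''|\mapsto \bar{Q}_I \bar{Q}_I''$ only loosens the bound. The calibration of $\varepsilon$ and the resulting constant in the sample-complexity bound are then a straightforward consequence of Proposition \ref{pro:extcont}.
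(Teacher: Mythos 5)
Your proposal is correct and takes essentially the same approach as the paper: both reduce to showing the quadratic form $Q_R Q_R'' + |Q'|^2 + Q_I Q_I''$ is negative on $\Omega_{\text{near}}$ by comparing $Q$ to $\bar{Q}$ via Proposition~\ref{pro:extcont} and then invoking the explicit bounds of Proposition~\ref{pro:expected}. The only cosmetic difference is that you differentiate $|Q|^2$ directly (which avoids the $|Q|^{-1}$ and $|Q|^{-3}$ denominators appearing in the paper's formula for $d^2|Q|/df^2$), but the condition one must verify ends up identical; the paper also computes the explicit constants ($-7.865\times 10^{-2} + 2.714\varepsilon + \varepsilon^2 < 0$) that you defer as bookkeeping.
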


\begin{proof}
 Define $Q_R(f) = \operatorname*{Re}(Q(f))$ and $Q_I(f) = \operatorname*{Im}(Q(f))$. Since $\left| Q \left( f_k \right) \right| = 1 $ and $Q' \left( f_k
  \right) = 0$ with the latter implying
\begin{align*}
\frac{d |Q|}{d f}(f) = \frac{Q_R'(f) Q_R(f) + Q_I'(f)Q_I(f)}{|Q(f)|} = 0
\end{align*}  
we only need to show $\frac{d^2 |Q(f)|}{d f^2} < 0$ on $\Omega_{\mathrm{near}}$. Take the second order derivative of $\left| Q \left( f \right) \right|$:
  \begin{align*}
    \frac{d^2 \left| Q \right|}{df^2} \left( f \right) & = - \frac{\left(
    Q_R \left( f \right) Q_R' \left( f \right) + Q_I \left( f \right) Q_I'
    \left( f \right) \right)^2}{\left| Q \left( f \right) \right|^3} +
    \frac{\left| Q' \left( f \right) \right|^2 + Q_R \left( f \right) Q_R''
    \left( f \right) + Q_I \left( f \right) Q_I'' \left( f \right)}{\left| Q
    \left( f \right) \right|} .
  \end{align*}
  So it suffices to show that for $f \in \Omega_{\tmop{near}}$
  \begin{align*}
    Q_R \left( f \right) Q_R'' \left( f \right) + \left| Q' \left( f \right)
    \right|^2 + \left| Q_I \left( f \right) \right| \left| Q_I'' \left( f
    \right) \right| & < 0.
  \end{align*}
  
As a consequence of \eqref{eqn:erreverywhere} in Proposition \ref{pro:extcont}, triangle inequality, and \eqref{eqn:near1}-\eqref{eqn:near5} of Proposition \ref{pro:expected}, we have on the set $\mathcal{E}_2$ for any $f
    \in \Omega_{\tmop{near}}$
  \begin{align*}
    Q_R \left( f \right) & \geq \bar{Q}_R \left(f \right) -\varepsilon \geq 0.9182 - \varepsilon\\
    \left| Q_I \left( f \right) \right| & \leq \left| \bar{Q}_I \left( f
    \right) \right| + \varepsilon \leq 3.611\times 10^{-2} + \varepsilon\\
    \frac{1}{\left| \bar{K}_M'' \left( 0 \right) \right|} Q_R'' \left( f \right)
    & \leq \frac{1}{\left| \bar{K}_M'' \left( 0 \right) \right|} \bar{Q}_R''
    \left( f \right)  + \varepsilon \leq -0.314 + \varepsilon\\
    \Big| \frac{1}{\left| \bar{K}_M'' \left( 0 \right) \right|} Q_I'' \left( f
    \right) \Big| & \leq \Big| \frac{1}{\left| \bar{K}_M'' \left( 0 \right)
    \right|} \bar{Q}_I'' \left( f \right) \Big| + \varepsilon \leq 0.5755 + \varepsilon\\
    \Big| \frac{1}{\sqrt{\left| \bar{K}_M'' \left( 0 \right) \right|}} Q_I'
    \left( f \right) \Big| & \leq \Big| \frac{1}{\sqrt{\left| \bar{K}_M''
    \left( 0 \right) \right|}} \bar{Q}_I' \left( f \right) \Big| +
    \varepsilon \leq 0.4346 + \varepsilon.
  \end{align*}
implying
  \begin{align*}
    &  & \frac{1}{\left| \bar{K}_M'' \left( 0 \right) \right|} \left( Q_R
    \left( f \right) Q_R'' \left( f \right) + \left| Q' \left( f \right)
    \right|^2 + \left| Q_I \left( f \right) \right| \left| Q_I'' \left( f
    \right) \right| \right) \leq - 7.865 10^{- 2} + 2.714 \varepsilon + \varepsilon^2< 0
  \end{align*}
  when $\varepsilon$ assumes a sufficiently small numerical value. With this
  choice of $\varepsilon$, the condition of $m$ becomes
  \begin{align*}
    m & \geq C \max \left\{ \log^2 \frac{M}{\delta}, s \log \frac{s}{\delta}
    \log \frac{M}{\delta} \right\} .
  \end{align*}
  Therefore, $|Q(f)| < 1$ on $\Omega_{\mathrm{near}}$ except for $f \in \Omega$.
 We actually proved a stronger result that with probability at least $1 -\delta$
  \begin{align}
    |Q \left( f \right)| & \leq 1 - 0.07 \left| \bar{K}_M'' \left( 0 \right)
    \right| \left( f - f_k \right)^2 \leq 1, \forall f \in \left[ f_k - f_{b,
    1}, f_k + f_{b, 1} \right].
  \end{align}
\end{proof}

\begin{proof}[Proof of Theorem \ref{thm:main}]
Finally, if $\Delta_{\min} \geq \frac{1}{M}$ and
\begin{align*}
  m & \geq C \max \left\{ \log^2 \frac{M}{\delta}, s \log \frac{s}{\delta}
  \log \frac{M}{\delta} \right\},
\end{align*}
combining Lemma \ref{lm:faraway} and \ref{lm:near}, we have proved the claim
of Theorem \ref{thm:main}.
\end{proof}

\section{Numerical Experiments}\label{sec:experiments}
We conducted a series of numerical experiments to test the performance of \eqref{eqn:minimization} under various parameter settings (see Table \ref{tbl:config}).  We use  $J = \{0, \ldots, n-1\}$ for all numerical experiments. 

We compared the performance of two algorithms: the semidefinite program \eqref{eqn:tracemin} and the basis pursuit obtained through discretization:
\begin{align}\label{eqn:bp}
\minimize_{c} \ \|c\|_1\
\text{subject\ to} \ x_j^\star = (Fc)_j, j \in T\,.
\end{align}
Here $F$ is a DFT matrix of appropriate dimension depending on the grid size.   Note that since the components of $c$ are complex, this is a second-order cone problem.  In the following, we use SDP and BP to label the semidefinite program algorithm and the basis pursuit algorithm, respectively.  We solved the SDP with the SDPT3 solver \cite{SDPT3} and the basis pursuit \eqref{eqn:bp} with CVX \cite{cvx} coupled with SDPT3. All parameters of the SDPT3 solver were set to default values and CVX precision was set to `high'.  For the BP, we used three levels of discretization at $4$, $16$, and $64$ times the signal dimension.

To generate our instances of form \eqref{eqn:signalmodel:mv}, we sampled $s = \rho_s n$ normalized frequencies from $[0, 1]$, either \emph{randomly}, or \emph{equispaced}. Random frequencies are sampled randomly on $[0, 1]$ with an additional constraint on the minimal separation $\Delta_f$. Given $s = \rho_s n$, $s$ equispaced frequencies are generated with the same separation $1/s$ with an additional random shift. This random shift  ensures that in most cases, basis mismatch occurs for discretization method. The signal coefficient magnitudes $|c_1|, \cdots, |c_s|$ are either \emph{unit}, i.e., equal to 1, or \emph{fading}, i.e., equal to $.5 + w^2$ with $w$ a zero mean unit variance Gaussian random variable. The signs $\{e^{i\phi_k}, k = 1, \cdots, s\}$  follow either Bernoulli $\pm 1$ distribution, labeled as \emph{real}, or uniform distribution on the complex unit circle, labeled as \emph{complex}. A length $n$ signal was then formed according to model \eqref{eqn:signalmodel:mv}. As a final step, we uniformly sample $m = \rho_m n$ entries of the resulting signal.

We tested the algorithms on three sets of experiments. In the first experiment, by running the algorithms on a randomly generated instance with $n = 256, s = 6$ and $m = 40$ samples selected uniformly at random, we compare SDP and BP's ability of frequency localization and visually illustrate the effect of discretization. We see from Figure \ref{fig:effectofgriding} that SDP recovery followed by retrieving the frequencies according to Proposition \ref{pro:freqlocalization} gives the most accurate result. We also observe that increasing the level of discretization can increase BP's accuracy in locating the frequencies. 
\begin{figure}
\centering
\includegraphics[width=.9\textwidth, trim = 50mm 75mm 50mm 50mm, clip = true]{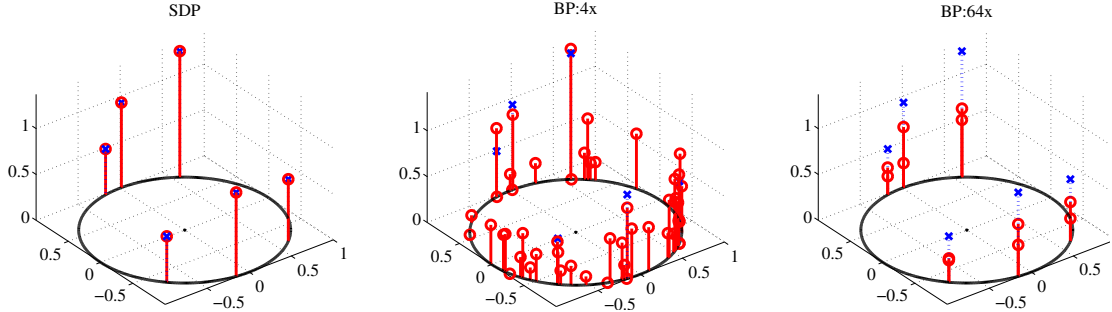}
\caption{ \small Frequency Estimation: Blue represents the true frequencies, while red represents the estimated ones.}
\label{fig:effectofgriding}
\end{figure} 

In the second set of experiments, we compare the performance of SDP and BP with three levels of discretization in terms of solution accuracy and running time. The parameter configurations are summarized in Table \ref{tbl:config}. Each configuration was repeated $10$ times, resulting a total of $1920$ valid experiments excluding those with $\rho_m \geq 1$. 
\begin{table}
\centering
  \caption{ \small Parameter configurations}
  \begin{tabular}{|l|l|}
  \hline
$n$ & 64, 128, 256\\
\hline
$\rho_s$ & 1/16, 1/32, 1/64\\
\hline
${\rho_m}/{\rho_s}$ & 5, 10, 20\\
\hline
$|c_k|$ & unit, fading\\
\hline
frequency & random, equispaced\\
\hline
sign & real, complex\\
\hline
  \end{tabular}
  \label{tbl:config}
\end{table}

We use the performance profile as a convenient way to compare the performance of different algorithms. The performance profile proposed in \cite{Dolan:2002wm} visually presents the performance of a set of algorithms under a variety of experimental conditions. More specifically, let $\mathcal{P}$ be the set of experiments and $\mathcal{M}_a(p)$ specify the performance of algorithm $a$ on experiment $p$ for some metric $\mathcal{M}$ (the smaller the better), e.g., running time and solution accuracy. Then the performance profile $\mathcal{P}_a(\beta)$ is defined as
\begin{equation*}
\mathcal{P}_a(\beta) = \frac{\#\{p \in \mathcal{P}: \mathcal{M}_a(p) \leq \beta \min_a \mathcal{M}_a(p)}{\#(\mathcal{P})}, \beta \geq 1.
\end{equation*}
Roughly speaking, $\mathcal{P}_a(\beta)$ is the fraction of experiments such that the performance of algorithm $a$ is within a factor $\beta$ of that of the best performed one.

We show the performance profiles for numerical accuracy and running times in Figure \ref{fig:profile_accuracy} and \ref{fig:profile_runningtime}, respectively. We see that SDP significantly outperforms BP for all tested discretization levels in terms of numerical accuracy. When the discretization levels are higher, e.g., 64x, the running times of BP exceed that of SDP. 

\begin{figure}
\centering
\begin{subfigure}[b]{0.4\textwidth}
\includegraphics[width=\textwidth, trim = 10mm 0mm 20mm 0mm, clip = true]{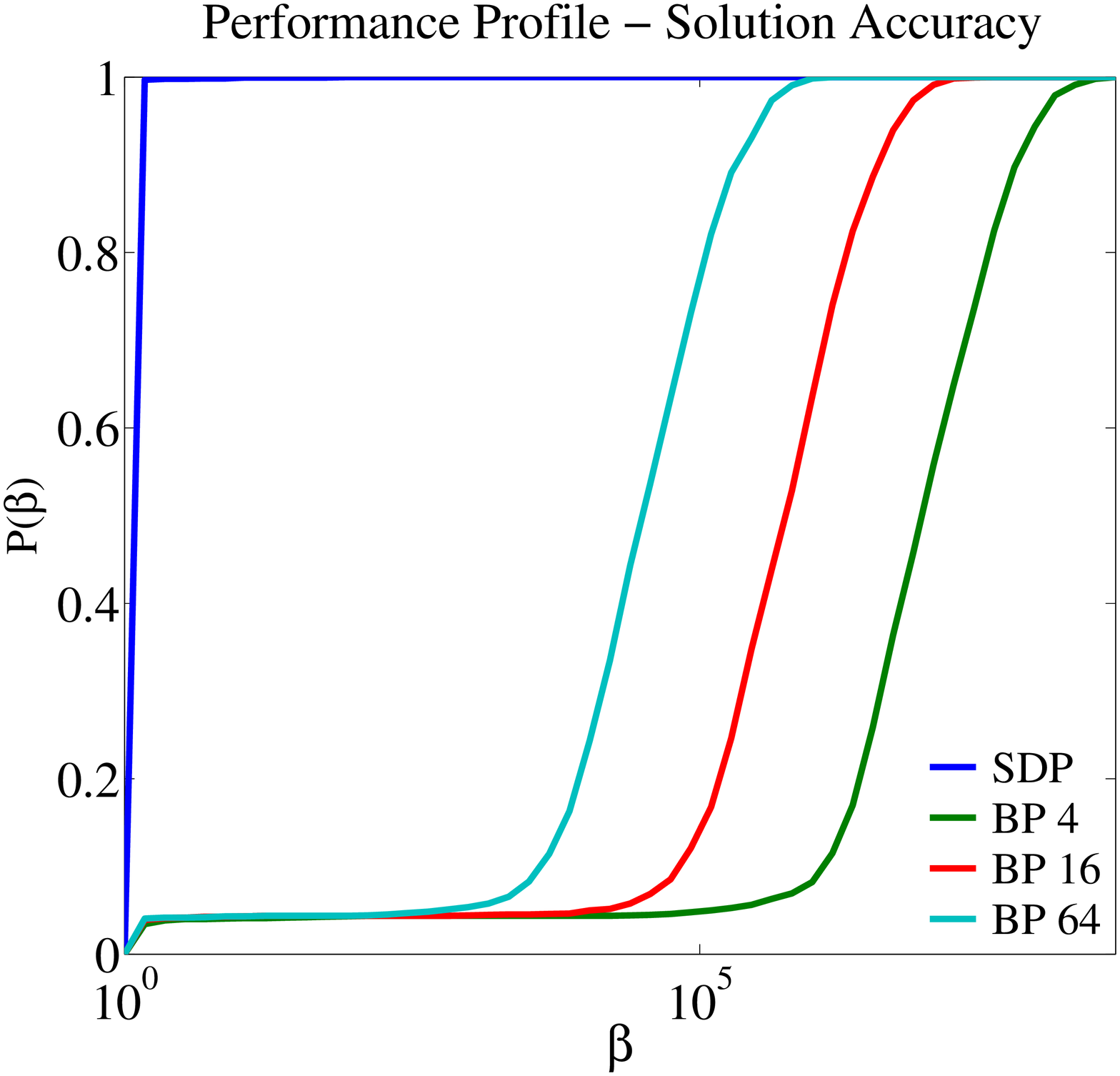}
\caption{ \small Solution accuracy}
\label{fig:profile_accuracy}
\end{subfigure}
  \begin{subfigure}[b]{0.4\textwidth}
\includegraphics[width=\textwidth, trim = 10mm 0mm 20mm 0mm, clip = true]{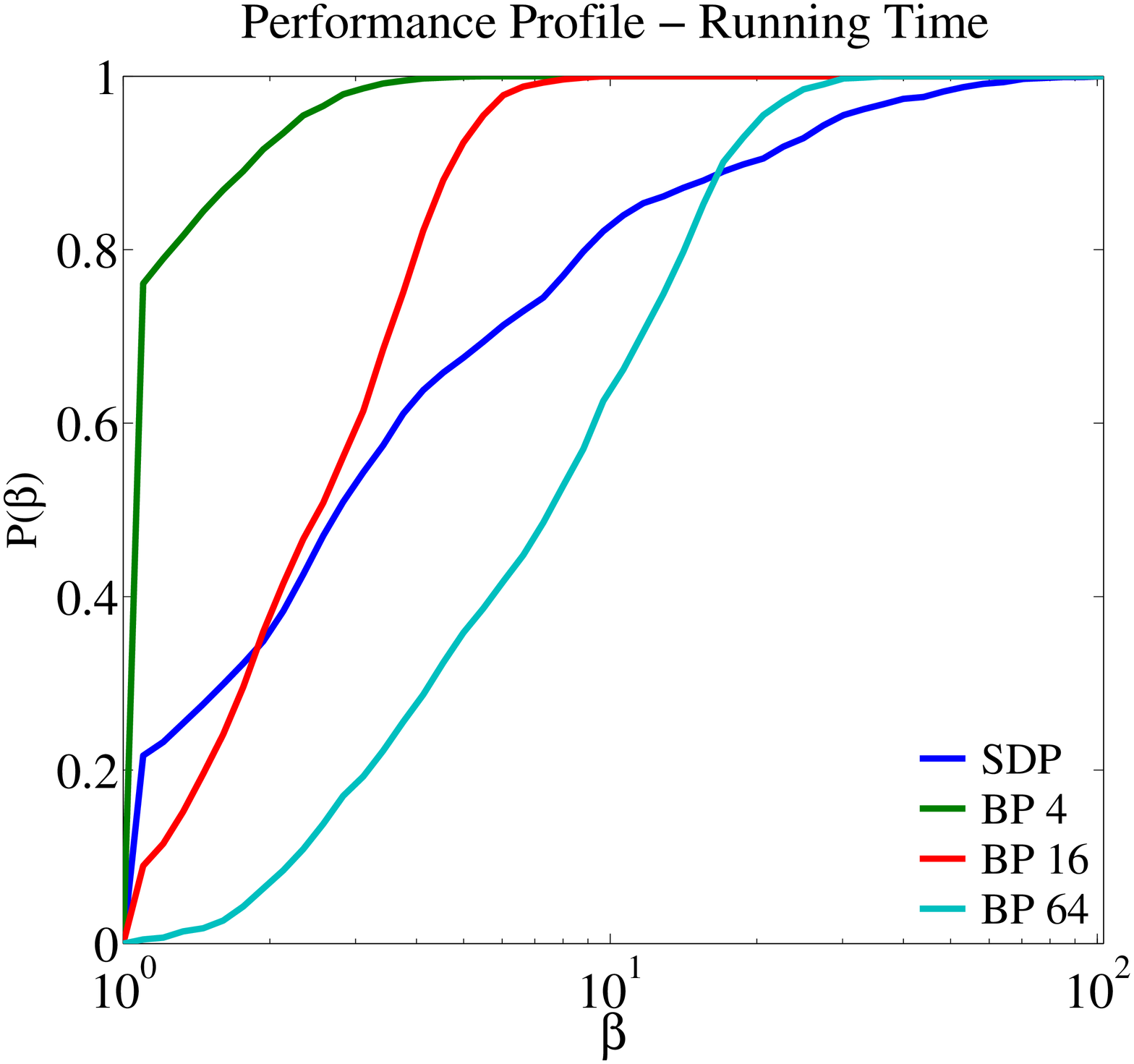}
\caption{ \small Running times}
\label{fig:profile_runningtime}
\end{subfigure}  
\caption{ \small Performance profiles for solution accuracy and running times. Note the $\beta$-axes are in logarithm scale for both plots.}
\end{figure}

To give the reader a better idea of the numerical accuracy and the running times,  in Table \ref{tbl:err} we present their medians and median absolute deviations for the four algorithms. As one would expect, the running time increases as the discretization level increases. We also observe that SDP is very accurate, with a median error at the order of $10^{-9}$. Increasing the level of discretization can increase the accuracy of BP. However, with discretization level $N = 64 n$, we get a median accuracy at the order of $10^{-5}$, but the median running time already exceeds that of SDP.
\begin{table}
\centering
  \caption{ \small Medians and median absolute deviation (MAD) for solution accuracy and running time}
  \begin{tabular}{|c|c|c|c|c|c|}
  \hline
 & & SDP & BP: 4x & BP: 16x & BP: 64x\\
 \hline
 \multirow{2}{*}{Solution Accuracy} & Median & 1.39e-09
 & 1.23e-02 &  7.67e-04
 & 4.65e-05\\
 \cline{2-6}
& MAD & 1.26e-09
 & 9.44e-03 &  6.05e-04
 & 3.64e-05\\
 \hline
 \multirow{2}{*}{Running Time (s)} & Median & 34.03 & 11.72 & 20.39 & 70.46\\
  \cline{2-6}
 & MAD & 27.32 & 4.83 & 12.19 & 55.37\\
\hline
  \end{tabular}
  \label{tbl:err}
\end{table}

In the third set of experiments, we compiled two phase transition plots. To prepare Figure \ref{fig:phasetran}, we pick $n = 128$ and vary $\rho_s = \frac{2}{n}:\frac{2}{n}:\frac{100}{n}$ and $\rho_m = \frac{2}{n}:\frac{2}{n}:\frac{126}{n}$. For each fixed $(\rho_m, \rho_s)$, we randomly generate $s = n\rho_s$ frequencies while maintaining a frequency separation $\Delta_f \geq \frac{1}{n}$. The coefficients are generated with random magnitudes and random phases, and the entries are observed uniform randomly. We then run the SDPT3-SDP algorithm to recover the missing entries. The recovery is considered successful if the relative error $\|\hat{x} - x^\star\|_2/\|x^\star\|_2 \leq 10^{-6}$. This process was repeated $10$ times and the rate of success was recorded. Figure \ref{fig:phasetran} shows the phase transition results. The $x$-axis indicates the fraction of observed entries $\rho_m$, while the $y$-axis is $\rho_s = \frac{s}{n}$. The color represents the rate of success with red corresponding to perfect recovery and blue corresponding to complete failure. 

We also plot the line $\rho_s = \rho_m/2$. Since a signal of $s$ frequencies has $2s$ degrees of freedom, including $s$ frequency locations and $s$ magnitudes, this line serves as the boundary above which any algorithm should have a chance to fail. In particular, Prony's method requires $2s$ consecutive samples in order to recover the frequencies and the magnitudes.

From Figure \ref{fig:phasetran}, we see that there is a transition from perfect recovery to complete failure. However, the transition boundary is not very sharp. In particular, we notice failures below the boundary of the transition where complete success should happen. Examination of the unsuccessful instances show that they correspond to instances with minimal frequency separations marginally exceeding $\frac{1}{n}$. We expect to get cleaner phase transitions if the frequency separation is increased.

To prepare Figure \ref{fig:phasetran2}, we repeated the same process in preparing Figure \ref{fig:phasetran} except that the frequency separation was increased from $\frac{1}{n}$ to $\frac{1.5}{n}$. In addition, to respect the minimal separation, we reduced the range of possible sparsity levels to $\{2, 4, \ldots, 70\}$. We now see a much sharper phase transition. The boundary is actually very close to the $\rho_s = \rho_m/2$ line. When $\rho_m$ is close to $1$, we even observe successful recovery above the line.
\begin{figure}
\begin{subfigure}[b]{0.5\textwidth}
\includegraphics[width=\textwidth, trim = 5mm 0mm 20mm 10mm, clip = true]{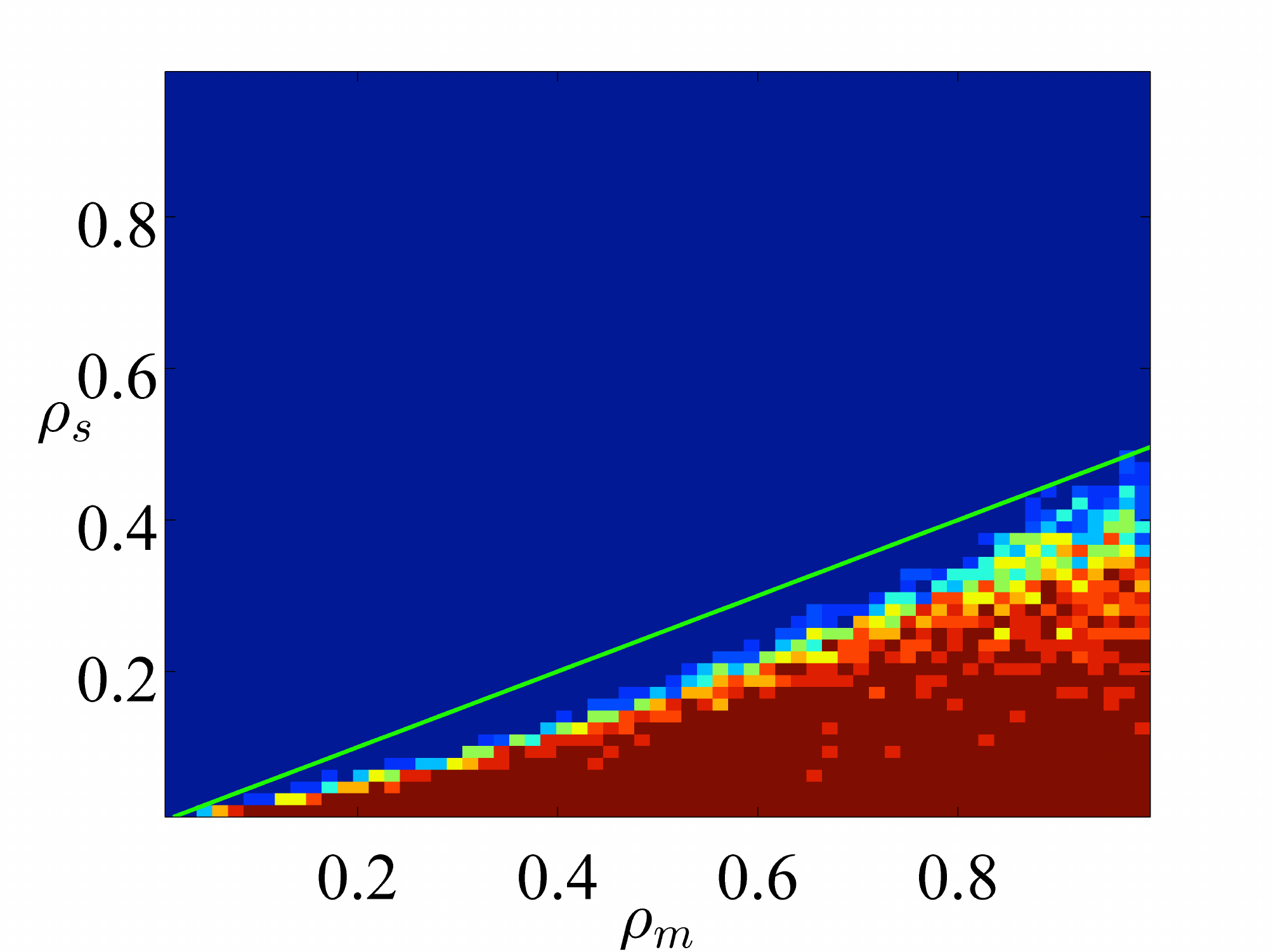}
\caption{\small Phase transition: $\Delta_f \geq \frac{1}{n}$}
\label{fig:phasetran}
\end{subfigure}
  \begin{subfigure}[b]{0.5\textwidth}
\includegraphics[width=\textwidth, trim = 5mm 0mm 20mm 10mm, clip = true]{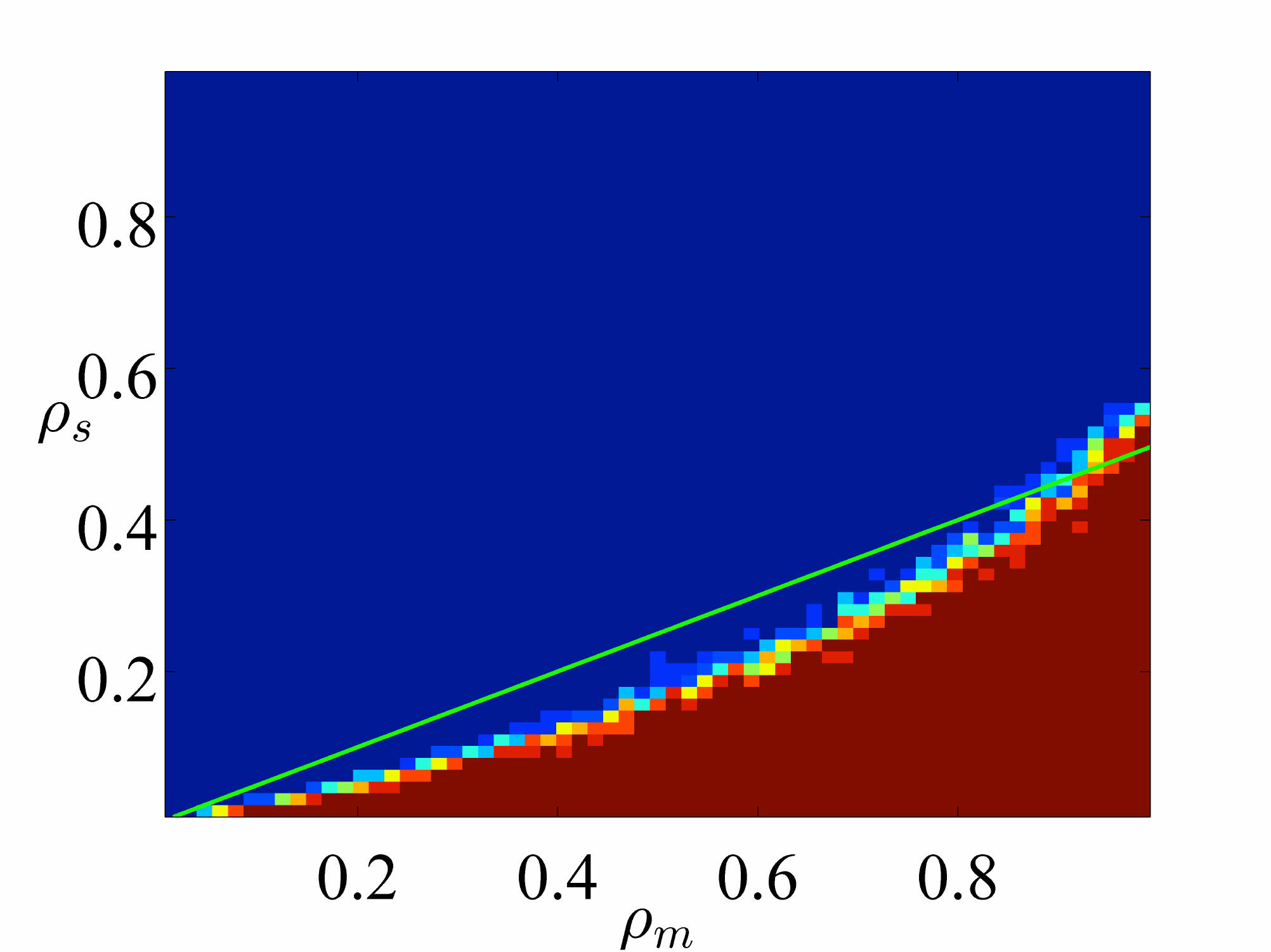}
\caption{\small Phase transition: $\Delta_f \geq \frac{1.5}{n}$}
\label{fig:phasetran2}
\end{subfigure}  
\caption{\small {\bf Phase transition:} The phase transition plots were prepared with $n = 128$, and $\rho_m = 2/n: 2/n: 126/n$. The frequencies were generated randomly with minimal separation $\Delta_f$. Both signs and magnitudes of the coefficients are random. In Figure \ref{fig:phasetran}, the separation $\Delta_f \geq 1/n$ and $\rho_s = 2/n: 2/n: 100/n$, while in Figure \ref{fig:phasetran2}, the separation $\Delta_f \geq 1.5/n$ and $\rho_s = 2/n: 2/n: 70/n$. }
\end{figure}  

\section{Conclusion and Future Work}
\label{sec:conclusion}

By leveraging the framework of atomic norm minimization, we were able to resolve the basis mismatch problem in compressed sensing of line spectra.   For signals with well-separated frequencies, we show the number of samples needed is roughly proportional to the number of frequencies, up to polylogarithmic factors.  This recovery is possible even though our continuous dictionary is not incoherent at all and does not satisfy any sort of restricted isometry conditions. 

There are several interesting future directions to be explored to further expand the scope of this work.  First, it would be useful to understand what happens in the presence of noise.  We cannot expect exact support recovery in this case, as our dictionary is continuous and any noise will make the exact frequencies un-identifiable.  In a similar vein, techniques like those used in~\cite{Candes:2012uf} that still rely on discretization are not applicable for our current setting.  However, since our numerical method is rather stable, we are encouraged that a theoretical stability result is possible.  

We show a simple example which provides some evidence for conjecturing that our proposed technique is stable. The signal $x^\star$ was generated with $n = 40$, $s = 3$, random frequencies, fading amplitudes, and random phases. A total number of $18$ uniform samples indexed by $T$ were taken.  The noisy observations $y$ was generated by adding complex noise $w$ with bounded $\ell_2$ norm $\varepsilon = 2$ to $x^\star_T$.  We denoised and recovered the signal by solving the following optimization:
{\setlength\abovedisplayskip{11pt plus 3pt minus 8pt}
\setlength\belowdisplayskip{11pt plus 3pt minus 8pt}
\begin{align}\label{eqn:noiseopt}
\minimize_x\ \|x\|_\A \ \st\ \|y - x_T\|_2 \leq \varepsilon,
\end{align}}
which can be solved by a semidefinite program. Figure \ref{fig:noise} illustrates the approximate frequency recovery achieved by the optimization problem \eqref{eqn:noiseopt} in presence of noise.
\begin{figure}
\centering
\begin{tabular}{cc}
\includegraphics[width=0.4\textwidth, trim = 0mm 5mm 10mm 8mm, clip = true]{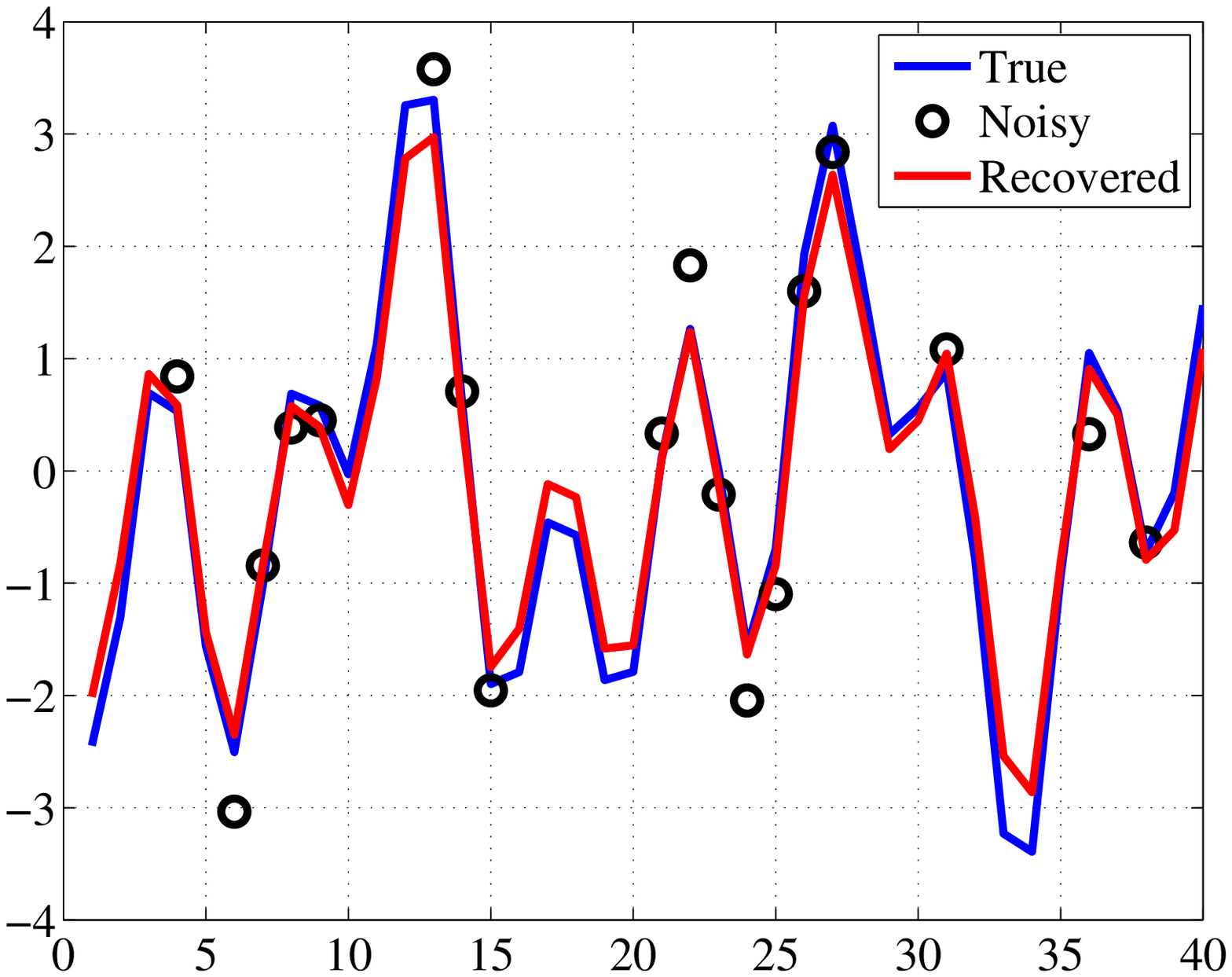}
& \includegraphics[width=0.4\textwidth, trim = 0mm 10mm 10mm 8mm, clip = true]{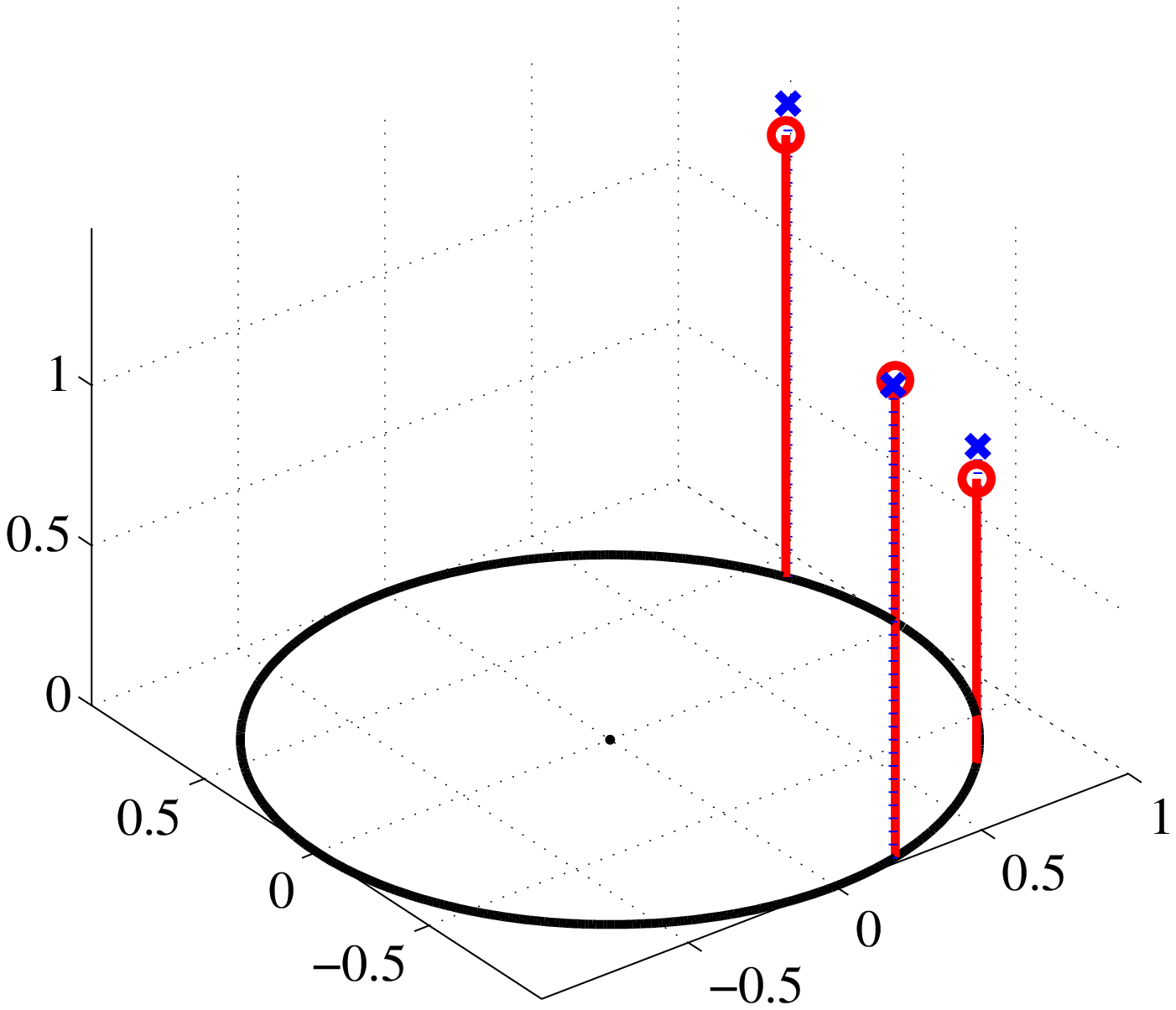}\\
(a) & (b)
\end{tabular}
\caption{\small Noisy frequency recovery: (a) Real part of true, noisy, and recovered signals, (b)True frequencies (blue) and recovered frequencies (red). The (randomly generated) true signal has three frequencies at $\{0.5874, 0.7528, 0.8966\}$ with coefficients $\{-0.1756-0.5306i, 0.1599-0.4344i, -1.0966+0.4216i\}$. The (real parts of) noisy observations at $m = 18$ out of $n = 40$ indices are indicated by black circles. The noise components were generated according to i.i.d. complex Gaussian distribution and then scaled to have an $\ell_2$ norm of $2$. The observed signal has an $\ell_2$ norm $5.9453$. }\label{fig:noise}
\end{figure}  
We leave the verification of stability with numerical simulations and theoretical analysis of this phenomenon to future work.

Second, we saw in our numerical experiments that modest discretization introduces substantial error in signal reconstruction and fine discretization carries significant computational burdens.  In this regard, it would be of great interest to speed up our semidefinite programming solvers so that we can scale our algorithms beyond the synthetic experiments of this paper.  Our rudimentary experimentation with first-order methods developed in~\cite{Bhaskar:2012tq} did not suffice for this problem as they were unable to achieve the precision necessary for fine frequency localization.  So, instead, it would be of interest to explore second order alternatives such as active set methods or the like to speed up our computations.

Finally, we are interested in exploring the class of signals that are semidefinite characterizable in hopes of understanding which signals can be exactly recovered.  Our continuous frequency model is an instance of applying compressed sensing to problems with continuous dictionaries.  It would be of great interest to see how our techniques may be extended to other continuously parametrized dictionaries.  Models involving image manifolds may fall into this category~\cite{Wakin09}.  Fully exploring the space of signals that can be acquired with just a few specially coded samples provides a fruitful and exciting program of future work.

\section*{Acknowledgements}
The authors would like to thank Robert Nowak for many helpful discussions about this work.   BR is generously supported by ONR award N00014-11-1-0723, NSF award CCF-1148243, and a Sloan Research Fellowship.  BB, BR, and PS are generously supported by NSF award CCF-1139953.  GT is generously supported by DARPA  Grant Number N66001-11-1-4090.

\begin{small}
\bibliographystyle{abbrv}
\bibliography{cs}
\end{small}

\appendix

\section{Proof of Theorem \ref{thm:main_general}}\label{apx:thm:main_general}
\begin{proof}
Assume $n = 4M + n_0$ with $M = \lfloor (n-1)/4 \rfloor$ and $n_0 = 1, 2, 3$ or $4$.
Suppose the signal $x^\star$ has decomposition
\begin{align*}
x^\star & = \sum_{k=1}^s c_k \begin{bmatrix}
1\\
e^{i2\pi f_k}\\
\vdots\\
e^{i2\pi (n-1) f_k} 
\end{bmatrix}\\
& = \sum_{k=1}^s \underbrace{c_k e^{i2\pi f_k (2M)}}_{\tilde{c}_k}\begin{bmatrix}
e^{i2\pi f_k (-2M)}\\
e^{i2\pi f_k (-2M + 1)}\\
\vdots\\
e^{i2\pi f_k (2M)} \\
\vdots\\
e^{i2\pi f_k (2M+n_0 -1)}
\end{bmatrix}
\end{align*}

The rest of the proof argues that the dual polynomial constructed for the symmetric case can be modified to certify the optimality of $x^\star$ for the general case. 

If the coefficients $\{c_k, k = 1, \ldots, s\}$ have uniform random complex signs, for fixed $\{f_k\}$, $\{\tilde{c}_k, k = 1, \ldots, s\}$ also have uniform random complex signs. In addition, the Bernoulli observation model $\{\delta_j\}_{j=0}^{n-1}$ on index set $\{0, \cdots, n-1\}$ naturally induces a Bernoulli observation model $\{\tilde{\delta}_j = \delta_{j+2M}\}_{j=-2M}^{2M}$ on $\{-2M, \cdots, 2M\}$ with $\mathbb{P}(\tilde{\delta}_j = 1) = m/n$. Denote $\tilde{T} = \{j: \tilde{\delta}_j = 1\} \subset \{-2M, \cdots, 2M\}$. Therefore, if $\Delta_f \geq \Delta_{\min} = 1/M$ and
  \begin{align}\label{eqn:mapx}
    m & \geq  C \max \left\{ \log^2 \frac{M}{\delta}, s \log \frac{s}{\delta}
    \log \frac{M}{\delta} \right\},
  \end{align}
 according to the proof of Theorem \ref{thm:main}, with probability great than $1-\delta$, we could construct a dual polynomial 
 \begin{align*}
    \tilde{Q} \left( f \right) & =  \sum_{j = - 2 M}^{2 M} \tilde{q}_j
    e^{- i 2 \pi jf} 
      \end{align*}
  satisfying
  \begin{align*}
    \tilde{Q} \left( f_k \right) & = \tmop{sign} \left( \tilde{c}_k \right), \forall
    f_k \in \Omega \\
    \left| \tilde{Q} \left( f \right) \right| & < 1, \forall f \notin \Omega \\
    \tilde{q}_j & = 0, \forall j \notin \tilde{T}.  
  \end{align*}
Now define 
\begin{align*}
q_j = \begin{cases}
\tilde{q}_{j - 2M} & {j = 0, \cdots, 4M}\\
0 & \mbox{otherwise}.
\end{cases}
\end{align*}
and 
\begin{align*}
Q(f) &=\sum_{j=0}^{n-1} q_j e^{-i2\pi f j}\\
& = \sum_{j=0}^{n-1} \tilde{q}_{j-2M} e^{-i2\pi f j}\\
& = e^{-i2\pi f (2M)} \tilde{Q}(f).
\end{align*}
Clearly, the polynomial $Q(f)$ satisfies
  \begin{align*}
    {Q} \left( f_k \right) & = e^{-i2\pi f_k (2M)} \tmop{sign} \left( \tilde{c}_k \right) = \tmop{sign}(c_k), \forall
    f_k \in \Omega  \\
    \left| {Q} \left( f \right) \right|& = |\tilde{Q}(f)| < 1, \forall f \notin \Omega \\
    {q}_j & = 0, \forall j \notin {T},  
  \end{align*} 
  where $T = \{j: \delta_j = 1\} \subset \{0, \ldots, n-1\}$. The theorem then follows from rewriting \eqref{eqn:mapx} in terms of $n$ and Proposition \ref{pro:optimality}.
\end{proof}

\section{Proof of Proposition \ref{pro:freqlocalization}}\label{apx:pro:freqlocalization}
We need the following lemma about maximal rank spectral factorization.
\begin{lemma}
  \label{lm:maxrank}Suppose $R \left( z \right) = \sum_{j = - \left( n - 1
  \right)}^{n - 1} r_j z^j$ with $r_{- j} = r_j^{\ast}$ is a nonnegative
  trigonometric polynomial, which has $s$ zeros on the unit circle. Then there
  exists a positive semidefinite matrix $G$ with rank $n - s$ such that
  \begin{align*}
    R \left( z \right) & =  \psi \left( z^{- 1} \right)^T G \psi \left( z
    \right)
  \end{align*}
  where
  \begin{align*}
    \psi \left( z \right) & =  \begin{bmatrix}
      1\\
      z\\
      \vdots\\
      z^{n - 1}
    \end{bmatrix}.
  \end{align*}
  Furthermore, this is the representation with maximal possible rank.
\end{lemma}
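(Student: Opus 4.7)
My strategy is to recognize the feasible set of Gram matrix representations of $R$ as a spectrahedron, establish the rank upper bound via a nullspace argument at the unit-circle zeros, and then exhibit a representation of rank exactly $n-s$ by reducing to a strictly positive trigonometric polynomial.

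\textbf{Step 1 (Feasible set).} Let $\mathcal{F} := \{G \succeq 0 : R(z) = \psi(z^{-1})^T G \psi(z) \text{ identically}\}$. This is an affine section of the PSD cone, hence a spectrahedron. It is nonempty by the classical Fej\'er--Riesz theorem, which yields a polynomial $p(z)$ of degree at most $n-1$ with $R(z) = |p(z)|^2$ on $|z|=1$; letting $p$ also denote its coefficient vector, the rank-one matrix $G = \bar p\, p^T$ lies in $\mathcal{F}$ (using $\psi(z^{-1})^T = \psi(z)^*$ for $|z|=1$).

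\textbf{Step 2 (Upper bound on rank).} For $|z_0|=1$ we have $\psi(z_0^{-1}) = \overline{\psi(z_0)}$, so whenever $R(z_0) = 0$ and $G \in \mathcal{F}$,
\begin{equation*}
0 = R(z_0) = \psi(z_0)^* G \psi(z_0),
\end{equation*}
and $G \succeq 0$ forces $G \psi(z_0) = 0$. Applying this to the $s$ unit-circle zeros $z_1, \ldots, z_s$, and using linear independence of $\psi(z_1), \ldots, \psi(z_s)$ (Vandermonde vectors at distinct nodes), we obtain $\dim \ker G \geq s$, i.e., $\rank(G) \leq n - s$ for every $G \in \mathcal{F}$.

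\textbf{Step 3 (Attaining rank $n-s$).} I would construct an explicit $G^\star \in \mathcal{F}$ of rank $n-s$ by reduction. Let $q(z) = \prod_{k=1}^{s} (z - z_k)$ and write $R(z) = |q(z)|^2\, \tilde R(z)$, where $\tilde R$ is a nonnegative trigonometric polynomial of degree $n-1-s$ with \emph{no} zeros on $|z|=1$. The analogous spectrahedron
\begin{equation*}
\tilde{\mathcal{F}} := \bigl\{\tilde G \succeq 0 : \tilde R(z) = \tilde\psi(z^{-1})^T \tilde G\, \tilde\psi(z)\bigr\},
\end{equation*}
with $\tilde\psi(z) = (1, z, \ldots, z^{n-1-s})^T$, is nonempty by Fej\'er--Riesz, and because $\tilde R$ is strictly positive on the unit circle, $\tilde{\mathcal{F}}$ contains a full-rank element $\tilde G^\star$ of rank $n-s$ (see the paragraph below). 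Multiplication by $q(z)$ realizes an injection of coefficient spaces $\mathbb{C}^{n-s} \hookrightarrow \mathbb{C}^n$ via a tall matrix $M$ of linearly independent columns satisfying $q(z)\,\tilde\psi(z) = M^T \psi(z)$. The lift $G^\star := M \tilde G^\star M^*$ is PSD of rank $n-s$, and a direct substitution yields
\begin{equation*}
\psi(z^{-1})^T G^\star \psi(z) = \overline{q(z)}\, \tilde\psi(z^{-1})^T \tilde G^\star\, \tilde\psi(z)\, q(z) = |q(z)|^2\, \tilde R(z) = R(z),
\end{equation*}
so $G^\star \in \mathcal{F}$. Combining with Step~2 proves maximality.

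The main obstacle is the existence of a full-rank element $\tilde G^\star \in \tilde{\mathcal{F}}$. I would argue this by observing that any vector $v$ in the common kernel of all elements of $\tilde{\mathcal{F}}$ must in particular satisfy $v^T \tilde p = 0$ for every Fej\'er--Riesz factor $\tilde p$ of $\tilde R$; since $\tilde R$ has no unit-circle zeros, inverting each of its off-circle root pairs produces a family of spectral factors whose coefficient vectors can be shown to span all of $\mathbb{C}^{n-s}$, forcing $v = 0$ and hence full rank at any relative-interior point of $\tilde{\mathcal{F}}$.
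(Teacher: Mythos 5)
Your overall architecture is a genuine alternative to the paper's: the paper works directly with $R$, builds $n-s$ explicit Fej\'er--Riesz factors $g_{s+1},\dots,g_n$ differing in which member of each off-circle root pair they take, proves their linear independence by evaluation at the $\zeta_l$, and forms $G = \sum_j\lambda_j g_j g_j^*$; you instead peel off the unit-circle zeros, reduce to the strictly positive polynomial $\tilde R$, and lift a full-rank Gram matrix back up. Step 2 (the rank upper bound via $G\psi(z_0)=0$ at each unit-circle zero) is the same argument. There are, however, two problems with Step 3.

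\textbf{Conjugation error in the lift.} With $M$ defined by $M^T\psi(z) = q(z)\tilde\psi(z)$, one has $\psi(z^{-1})^T M = q(z^{-1})\tilde\psi(z^{-1})^T$ and $M^*\psi(z) = \bar q(z)\tilde\psi(z)$, where $\bar q$ denotes the polynomial with \emph{conjugated coefficients} (not $\overline{q(z)}$). Hence $\psi(z^{-1})^T M\tilde G^\star M^*\psi(z) = q(z^{-1})\bar q(z)\,\tilde R(z)$, which on $|z|=1$ equals $|\bar q(z)|^2\tilde R(z)$ — a nonnegative trigonometric polynomial vanishing at the conjugates $\bar z_1,\dots,\bar z_s$ rather than at $z_1,\dots,z_s$, so it is \emph{not} $R(z)$ unless the unit-circle zeros happen to be closed under conjugation. (Concretely, for $n=2$, $q(z)=z-i$, $\tilde G^\star = 1$, your $G^\star$ produces $2 - iz + iz^{-1}$ instead of $R(z)=2 + iz - iz^{-1}$.) The fix is to set $G^\star := \bar M\,\tilde G^\star\,\bar M^*$; then $\psi(z^{-1})^T\bar M = \bar q(z^{-1})\tilde\psi(z^{-1})^T$ and $\bar M^*\psi(z) = q(z)\tilde\psi(z)$, and on the circle $\bar q(z^{-1})q(z) = \overline{q(z)}\,q(z) = |q(z)|^2$, as desired.

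\textbf{The crux is asserted, not proved.} The final paragraph — that a strictly positive $\tilde R$ admits a full-rank Gram matrix because the spectral factors obtained by flipping off-circle root pairs span $\mathbb{C}^{n-s}$ — is exactly the content that carries the weight. The paper proves the analogous statement in full (choosing $n-s$ specific factors and showing linear independence by evaluating at the $\zeta_l$); your sketch would need the same detailed argument, and it implicitly requires distinct off-circle roots, a hypothesis neither stated nor addressed. In fact, once you have reduced to strictly positive $\tilde R$, a much cleaner route is available and avoids this entirely: set $c := \min_{|z|=1}\tilde R(z) > 0$; then $\tilde R - c$ is a nonnegative trigonometric polynomial of degree $\le n-1-s$, so by the positive real lemma it has \emph{some} PSD Gram matrix $\tilde G_1\succeq 0$, and since the constant $c$ has Gram matrix $\tfrac{c}{n-s}I$, the matrix $\tilde G^\star := \tilde G_1 + \tfrac{c}{n-s}I \succ 0$ is a positive-definite Gram matrix for $\tilde R$. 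This gives rank $n-s$ with no appeal to spectral factorization structure at all, and I'd recommend replacing your sketch with it.
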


\begin{proof}
  According to the positive real lemma \cite[Theorem 2.5]{Dumitrescu:2007vw}, the nonnegativity of $R \left( z \right)$
  \begin{eqnarray*}
    &  & R \left( e^{i 2 \pi f} \right) \geq 0, \forall f \in \left[ 0, 1
    \right]
  \end{eqnarray*}
  is equivalent to
  \begin{eqnarray}
    &  & \exists \text{\ Hermitian matrix\ } G \in \mathbbm{C}^{n \times n}  , G \succeq 0, r_j =
    \sum_{l = 1}^{n - j} G_{l, l + j}, j = 0, \ldots, n - 1 
    \label{eqn:grammatrix}
  \end{eqnarray}
  In this case, we can write
  \begin{eqnarray*}
    R \left( z \right) & = & \psi \left( z^{- 1} \right)^T G \psi \left( z
    \right).
  \end{eqnarray*}
Such a representation is called a Gram matrix representation. The set
  $\mathcal{C}$ of all Gram matrices satisfying (\ref{eqn:grammatrix}) is a
  compact convex set. By the spectral factorization theorem \cite[Theorem 1.1]{Dumitrescu:2007vw}, all of the rank
  one elements of $\mathcal{C}$ are of the form $G = g g^{\ast}$ such that:
  \begin{eqnarray*}
     g^*\psi(z) & = & d \prod_{k = 1}^s \left( z - e^{i 2
    \pi f_k} \right) \prod_{k = s + 1}^{n - 1} \left( z - z_k \right)
  \end{eqnarray*}
  with $d$ the scalar factor determined by $R \left( z \right) = \psi
  \left( z^{- 1} \right)^T g g^{\ast} \psi \left( z \right)$. Here $\left\{
  e^{i 2 \pi f_k}, k = 1, \ldots, s \right\}$ are the $s \geq 0$ zeros of $R
  \left( z \right)$, as a function defined on the complex plane, on the unit
  circle with multiplicity two. The rest $\left( 2 \left( n - 1 \right) - 2 s
  \right)$ zeros of $R \left( z \right)$ are divided into $n - 1 - s$
  conjugate pairs $\left\{ \zeta_{k}, 1 / \zeta_k^{\ast}, k = s + 1, \ldots, n
  - 1 \right\}$. Choosing $z_k$ as either $\zeta_k$ or $1 / \zeta_k^{\ast}$
  yields different $g$. Therefore, there are $2^{n - 1 - s}$ total such rank-one representations of $R \left( z \right)$.
  
  Since $R \left( e^{i 2 \pi f} \right)$ has $s$ zeros on $\left[ 0, 1
  \right]$, we have for any $G \in \mathcal{C}$, $\psi \left( e^{i 2 \pi f_k}
  \right)^{\ast} G \psi \left( e^{i 2 \pi f_k} \right) = 0$, implying $G \psi
  \left( e^{i 2 \pi f_k} \right) = 0, k = 1, \ldots, s$. Therefore, the null
  space of $G$ has dimension at least $s$, and hence the rank of any Gram
  matrix $G$ is at most $n - s$. 
  
  In the following we construct a maximal rank
  representation of $R \left( z \right)$. For that purpose,
  define $g_l$ for $l = s + 1, \ldots, n$ via
  \begin{eqnarray*}
 g_l^*\psi(z) & = & d_l \left[ \prod_{k = 1}^s \left(
    z - e^{i 2 \pi f_k} \right) \right] \left[ \prod_{k = s + 1, k \neq l}^{n
    - 1} \left( z - \zeta_k \right) \right] \left( z - 1 / \zeta_l^{\ast}
    \right), l = s + 1, \ldots, n - 1\\
g_n^*\psi(z) & = & d_n \left[ \prod_{k = 1}^s \left(
    z - e^{i 2 \pi f_k} \right) \right] \left[ \prod_{k = s + 1}^{n - 1}
    \left( z - \zeta_k \right) \right]
  \end{eqnarray*}
  We first claim that $\left\{ g_l, l = s + 1, \ldots, n \right\}$ are
  linearly independent. Suppose otherwise, there exist constants $\alpha_j$
  with at least one $\alpha_j \neq 0$, such that
  \begin{eqnarray*}
    \sum_{j = s + 1}^n \alpha_j g_j^* & = & 0
  \end{eqnarray*}
  Post-multiplying both sides of the above equation by $\psi \left( \zeta_l
  \right)$ for each $s + 1 \leq l \leq n - 1,$we get
  \begin{eqnarray*}
    \sum_{j = s + 1}^n \alpha_j  g_j^*\psi \left( \zeta_l \right) & = &
    \alpha_l d_l \left[ \prod_{k = 1}^s \left( \zeta_l - e^{i 2 \pi f_k}
    \right) \right] \left[ \prod_{k = s + 1, k \neq l}^{n - 1} \left( \zeta_l
    - \zeta_k \right) \right] \left( \zeta_l - 1 / \zeta_l^{\ast} \right)
    = 0
  \end{eqnarray*}
  implying $\alpha_l = 0$ for $s + 1 \leq l \leq n - 1$. Since $g_n \neq 0$,
  $\alpha_n$ must also be zero. Therefore, the set $\left\{ g_l, l = s + 1,
  \ldots, n \right\}$ is linearly independent. Now form the matrix
  \begin{eqnarray*}
    G & = & \sum_{j = s + 1}^n \lambda_j g_j g_j^{\ast} \succeq 0
  \end{eqnarray*}
  where $\lambda_j > 0$, and $\sum_j \lambda_j = 1$. Then the linear
  independence of $\{g_l\}$ implies that $G$ is of rank $n - s$, and
  \begin{eqnarray*}
    \psi \left( z^{- 1} \right)^T G \psi \left( z \right) & = & \sum_{j = s +
    1}^n \lambda_j \psi \left( z^{- 1} \right)^T g_j g_j^{\ast} \psi \left( z
    \right)\\
    & = & \sum_{j = s + 1}^n \lambda_j R \left( z \right)\\
    & = & R \left( z \right) .
  \end{eqnarray*}
\end{proof}

\begin{proof}[Proof of Proposition \ref{pro:freqlocalization}]
We prove for the case $J = \left\{ 0, \cdots, n - 1 \right\}$. Note that our semidefinite program \eqref{eqn:tracemin} and its dual have interior feasible points, a common assumption underlying many results about semidefinite programs, including the ones cited in the following arguments. For example, to obtain an interior feasible point for \eqref{eqn:tracemin}, one could take $x = x^\star$ on $T$ and zero elsewhere, the Toeplitz matrix equal to $(\|x\|_2^2+1) I $,  and $t = 1$.

  1) For the
  dual optimal solution $q$ given in the proposition, define
  \begin{eqnarray*}
    R \left( z \right) & : = & 1 - \psi \left( z^{- 1} \right)^T q q^{\ast}
    \psi \left( z \right).
  \end{eqnarray*}
  The trigonometric polynomial $R \left( e^{i 2 \pi f} \right)$ is positive
  with $s$ zeros at $f = f_k$ since
  \begin{eqnarray*}
    R \left( e^{i 2 \pi f} \right) & = & 1 - \psi \left( e^{i 2 \pi f}
    \right)^{\ast} q q^{\ast} \psi \left( e^{i 2 \pi f} \right)\\
    & = & 1 - \left| Q \left( f \right) \right|^2  \left\{ \begin{array}{ll}
      = 0 & \tmop{if} f = f_k\\
      > 0 & \tmop{otherwise}
    \end{array} \right.
  \end{eqnarray*}
  Now according to Lemma \ref{lm:maxrank}, there exists $G \succeq 0$ with
  $\tmop{rank} \left( G \right) = n - s$ such that
  \begin{eqnarray*}
    R \left( z \right) & = & 1 - \psi \left( z^{- 1} \right)^T q q^{\ast} \psi
    \left( z \right) = \psi \left( z^{- 1} \right)^T G \psi \left( z \right)
  \end{eqnarray*}
  Define $H = G + q q^{\ast} \succeq q q^{\ast}$ which satisfies $\psi \left(
  z^{- 1} \right)^T H \psi \left( z \right) \equiv 1$. In other words, we have
  a pair $\left( q, H \right)$ such that
  \begin{eqnarray*}
    \bar{H} :=\begin{bmatrix}
          H & - q\\
      - q^{\ast} & 1
    \end{bmatrix} & \succeq & 0\\
    \sum_{j = 1}^{n - l} H_{j, j + l} & = & 0, l = 0, \ldots, n - 1\\
    q_{T^c} & = & 0
  \end{eqnarray*}
  and $\tmop{rank} \left( \bar{H} \right) = n - s + 1$. Therefore, $\left( q,
  H \right)$ is a dual optimal solution. 
  
  The unique primal optimal solution
  has the form (refer to Proposition \ref{pro:optimality} for a proof of the uniqueness):
  \begin{eqnarray*}
    \hat{x} & = & x^{\star} = \sum_{k = 1}^s \left| c_k \right| e^{i \phi_k} a
    \left( f_k, 0 \right)\\
    \tmop{Teop} \left( \hat{u} \right) & = & \sum_{k = 1}^s \left| c_k \right|
    a \left( f_k, 0 \right) a \left( f_k, 0 \right)^{\ast}\\
    \hat{t} & = & \sum_{k = 1}^s \left| c_k \right|
  \end{eqnarray*}
  where $e^{i\phi_k} = \sign{(c_k)}$. 
  Since the following decomposition holds
  \begin{eqnarray*}
    \bar{T} := \begin{bmatrix}
      \tmop{Toep} \left( \hat{u} \right) & \hat{x}\\
      \hat{x}^{\ast} & \hat{t}
    \end{bmatrix} & = & \sum_{k = 1}^s \left| c_k \right|
    \begin{bmatrix}
      a \left( f_k, 0 \right)\\
      e^{i \phi_k}
    \end{bmatrix} \begin{bmatrix}
      a \left( f_k, 0 \right)\\
      e^{i \phi_k}
    \end{bmatrix}^{\ast},
  \end{eqnarray*}
  the rank of $\bar{T}$ is $s$, and hence $\left( \bar{H}, \bar{T} \right)$ is
  a strictly complementary pair (See for \cite{Alizadeh1997Complementary} for more information about
  strict complementarity).
  
 According to \cite[Lemma 3.1]{Goldfarb1998}, all matrices in the relative interior of the optimal solution
  set of a semidefinite program share the same range space. Therefore, for
  any other $( \tilde{q}, \tilde{H} )$ in the relative interior of
  $\mathcal{D}$, the rank of $\begin{bmatrix}
    \tilde{H} & - \tilde{q}\\
    - \tilde{q}^{\ast} & 1
  \end{bmatrix}$ is also $n + 1 - s$, implying that $(\tilde{q},
  \tilde{H})$ also form strictly complementary pairs with the primal optimal
  solution $\left( \hat{x}, \hat{u}, \hat{t} \right)$.
  
  2) \ Since for the $\left( q, H \right)$ in part 1)
  \begin{eqnarray*}
    &  & \begin{bmatrix}
      a \left( f_k, 0 \right)\\
      e^{i \phi_k}
    \end{bmatrix}^{\ast} \begin{bmatrix}
      H & - q\\
      - q^{\ast} & 1
    \end{bmatrix} \begin{bmatrix}
          a \left( f_k, 0 \right)\\
      e^{i \phi_k}
    \end{bmatrix}\\
    & = & a \left( f_k, 0 \right)^{\ast} H a \left( f_k, 0 \right) - a \left(
    f_k, 0 \right)^{\ast} q e^{i \phi_k} - e^{- i \phi_k} q^{\ast} a \left(
    f_k, 0 \right) + 1\\
    & = & 0, k = 1, \ldots, s,
  \end{eqnarray*}
the null space of $\begin{bmatrix}
    H & - q\\
    - q^{\ast} & 1
  \end{bmatrix}$ is
  \begin{eqnarray}
    \tmop{span} \left\{ \begin{bmatrix}
      a \left( f_k, 0 \right)\\
      e^{i \phi_k}
    \end{bmatrix} : k = 1, \ldots, s \right\} &  &  \label{eqn:span}
  \end{eqnarray}
  Again using \cite[Lemma 3.1]{Goldfarb1998}, we conclude that the null space of any
  $\begin{bmatrix}
    \tilde{H} & - \tilde{q}\\
    - \tilde{q}^{\ast} & 1
  \end{bmatrix}$ with $( \tilde{q}, \tilde{H} )$ in the
  relative interior of $\mathcal{D}$ is given by (\ref{eqn:span}). Therefore,
  we have for each $k$
  \begin{eqnarray}
    &  & \begin{bmatrix}
      a \left( f_k, 0 \right)\\
      e^{i \phi_k}
    \end{bmatrix}^{\ast} \begin{bmatrix}
      \tilde{H} & - \tilde{q}\\
      - \tilde{q}^{\ast} & 1
    \end{bmatrix} \begin{bmatrix}
      a \left( f_k, 0 \right)\\
      e^{i \phi_k}
    \end{bmatrix} \nonumber\\
    & = & a \left( f_k, 0 \right)^{\ast} \tilde{H} a \left( f_k, 0 \right) -
    a \left( f_k, 0 \right)^{\ast} \tilde{q} e^{- i \phi_k} - e^{i \phi_k}
    \tilde{q}^{\ast} a \left( f_k, 0 \right) + 1 \nonumber\\
    & = & 2 - 2 \operatorname*{Re} \left( e^{- i \phi_k} \left\langle \tilde{q}, a \left(
    f_k, 0 \right) \right\rangle \right) = 0  \label{eqn:eat}
  \end{eqnarray}
  or equivalently
  \begin{align*}
    \left\langle \tilde{q}, a \left( f_k, 0 \right) \right\rangle & =  e^{i
    \phi_k} = \tmop{sign} \left( c_k \right), k = 1, \ldots, s
  \end{align*}
  using $\left| \left\langle \tilde{q}, a \left( f_k, 0 \right) \right\rangle
  \right| \leq 1$.
  
  For any $\tilde{f} \neq f_k, k = 1, \ldots, s$, if $| \langle
  \tilde{q}, a ( \tilde{f}, 0 ) \rangle | = 1$, then a
  calculation similar to (\ref{eqn:eat}) concludes that
  $\begin{bmatrix}
    a ( \tilde{f}, 0 )\\
    e^{i \tilde{\phi}}
  \end{bmatrix}$ with $e^{i \tilde{\phi}} = \tmop{sign} (
  \langle \tilde{q}, a ( \tilde{f}, 0 ) \rangle )$
  is in the null space of $\begin{bmatrix}
    \tilde{H} & - \tilde{q}\\
    - \tilde{q}^{\ast} & 1
  \end{bmatrix}$. Due to the linear independence of $a ( \tilde{f},
  0 )$ with $\left\{ a \left( f_k, 0 \right), k = 1, \ldots, s
  \right\}$, the null space have dimension greater than $s$, contradicting with
  $\tmop{rank} \left( \begin{bmatrix}
    \tilde{H} & - \tilde{q}\\
    - \tilde{q}^{\ast} & 1
  \end{bmatrix} \right) = n + 1 - s$. Therefore, we have
  \begin{eqnarray*}
    \left| \left\langle \tilde{q}, a \left( f, 0 \right) \right\rangle \right|
    & < & 1, \forall f \neq f_k, k = 1, \ldots, s.
  \end{eqnarray*}
  3) Due to the existence of strictly complementary primal-dual optimal
  solutions, the statement is a direct consequence of \cite[Lemma 3.4]{Luo1998} (See also \cite{deKlerk1997, Halicka2002}). 
 \end{proof}

\section{Proof of Proposition \ref{pro:sysmtx:deterministic}}\label{apx:sysmtx:deterministic}
\begin{proof}
  Under the assumption that $\Delta_{\min} \geq \frac{1}{M}$, we cite the results of \cite[Proof of Lemma 2.2]{Candes:2012uf} as follows:
  \begin{align*}
\left\| I - \bar{D}_0
    \right\|_{\infty} & \leq 6.253 \times 10^{- 3}\\
     \Big\| \frac{1}{\sqrt{\left| \bar{K}_M'' \left( 0
    \right) \right|}} \bar{D}_1 \Big\|_{\infty} & \leq 4.212 \times 10^{-
    2}\\
    \Big\| I - \Big( - \frac{1}{\left|
    \bar{K}_M'' \left( 0 \right) \right|} \bar{D}_2 \Big) \Big\|_{\infty} &
    \leq 0.3201,
  \end{align*}
  where $\|\cdot\|_\infty$ is the matrix infinity norm, namely, the maximum absolute row sum. 
  Since $I - \bar{D}$ is symmetric and has zero diagonals, the Ger\v{s}hgorin
  circle theorem \cite{Horn:1990tf} implies that
  \begin{align*}
    \left\| I - \bar{D} \right\| & \leq \left\| I - \bar{D}
    \right\|_{\infty}\\
    &  & \\
    & \leq \max \Big\{ \left\| I - \bar{D}_0 \right\|_{\infty} + \Big\|
    \frac{1}{\sqrt{\left| \bar{K}_M'' \left( 0 \right) \right|}} \bar{D}_1
    \Big\|_{\infty}, \nn\\
    & \ \ \ \ \ \ \ \ \ \ \ \ \Big\| \frac{1}{\sqrt{\left| \bar{K}_M'' \left( 0
    \right) \right|}} \bar{D}_1 \Big\|_{\infty} + \Big\| I - \Big( -
    \frac{1}{\left| \bar{K}_M'' \left( 0 \right) \right|} \bar{D}_2 \Big)
    \Big\|_{\infty} \Big\}\\
    & = 0.3623.
  \end{align*}
  As a consequence, $\bar{D}$ is invertible and
  \begin{align*}
  \|\bar{D}\| &\leq 1 + \|I - \bar{D}\| \leq 1.3623,\nn\\
    \left\| \bar{D}^{- 1} \right\| & \leq \frac{1}{1 - \left\| I - \bar{D}
    \right\|} \leq 1.568.
  \end{align*}
  
\end{proof}

\section{Proof of Lemma \ref{lm:invertibility}}\label{apx:lm:invertibility}
\begin{proof}[Proof of Lemma \ref{lm:invertibility}]
We start with computing the quantities necessary to apply Lemma \ref{lm:matrixbernstein}:
  \begin{align*}
    \mathbbm{E}X_j & = 0\\
    \left\| X_j \right\| & = \left\| \frac{1}{M} g_M \left( j
    \right) \left( \delta_j - p \right) e \left( j \right) e \left( j
    \right)^{\ast} \right\|\nn\\
    & \leq \frac{1}{M} \left\| g_M \right\|_{\infty} s \left( 1 +
    \max_{\left| j \right| \leq 2 M} \frac{\left( 2 \pi j \right)^2}{\left|
    K_M'' \left( 0 \right) \right|} \right)\nn\\
    & \leq R := 14 \frac{s}{M}
    \tmop{for} M \geq 4.
    \end{align*}
    Here we have used
    \begin{align*}
    \|g_M\|_\infty &\leq 1,\\
    \|e(j)\|_2^2 &= s \left( 1 +
    \max_{\left| j \right| \leq 2 M} \frac{\left( 2 \pi j \right)^2}{\left|
    K_M'' \left( 0 \right) \right|} \right) \leq 14 s, \text{\ for \ } M \geq 4.
    \end{align*}
    We continue with $\sigma^2$:
    \begin{align*}
    \sigma^2 & = \left\| \sum_{j=-2M}^{2M} \mathbbm{E} \left( \frac{1}{M^2} g_M^2
    \left( j \right) \left( \delta_j - p \right)^2 \|e(j)\|_2^2  e \left( j \right) e^{\ast} \left( j \right) \right)
    \right\|\nn\\
    & \leq 14 \frac{p \left( 1 - p \right)}{M} s \left\| \frac{1}{M} \sum_{j = -2M}^{2M}
    g_M^2 \left( j \right) e \left( j \right) e^{\ast} \left( j
    \right) \right\|.
  \end{align*}
  To further bound $\sigma^2$, we note
  \begin{align*}
    &\ \ \ \ \ \frac{1}{M} \sum_{j = -2M}^{2M} g_M^2 \left( j \right) e \left( j
    \right) e^{\ast} \left( j \right)\nn\\
    & \preccurlyeq  \left\| g_M \right\|_{\infty} \Big\{ \frac{1}{M} \sum_{j = -2M}^{2M}
    g_M \left( j \right) e \left( j \right) e^{\ast} \left( j
    \right) \Big\}\nn\\
    & =  \left\| g_M \right\|_{\infty} \bar{D},
  \end{align*}
  which leads to
  \begin{align*}
    & \ \ \ \Big\| \frac{1}{M} \sum_j g_M^2 \left( j \right) e \left(
    j \right) e^{\ast} \left( j \Big) \right\|\nn\\
    & = \lambda_{\max} \Big( \frac{1}{M} \sum_j g_M^2 \left( j
    \right) e \left( j \right) e^{\ast} \left( j \right) \Big)\nn\\
    & \leq \lambda_{\max} \left( \left\| g_M \right\|_{\infty} \bar{D}
    \right)\nn\\
    & = \left\| g_M \right\|_{\infty} \left\| \bar{D} \right\|\nn\\
    & \leq 1.3623 \text{\ by \ } \eqref{eqn:Dbarbd} \text{\ and\ } \eqref{eqn:gbd}.
  \end{align*}
Therefore, we have
  \begin{align*}
    \sigma^2 & \leq 20 \frac{p}{M} s .
  \end{align*}

  Invoking the non-commutative Bernstein's inequality and setting $t = p
  \tau$, we have
 \begin{align*}
    \mathbbm{P} \left( \left\| p^{- 1} D - \bar{D} \right\| \geq \tau \right)
    & \leq 2 s \exp \left( \frac{- p^2 \tau^2 / 2}{20  \frac{p}{M} s + 14
    \frac{s}{M} p \tau / 3} \right)\nn\\
    & \leq 2 s \exp \left( - \frac{1}{50 }
    \tau^2 \frac{m}{s} \right)  \left( \tmop{used} \tau \leq 1 \right)\nn\\
    & \leq \delta .
  \end{align*}
  if
  \begin{align*}
    m & \geq \frac{50}{\tau^2} s \log \frac{2
    s}{\delta} .
  \end{align*}
Consequently, when $\tau < 1 - 0.3623 \leq 1 - \left\| I - \bar{D}
  \right\|$ according to \eqref{eqn:I_Dbarbd}, we have $\left\| I - p^{- 1} D \right\| \leq \left\| I - \bar{D}
  \right\| + \left\| p^{- 1} D - \bar{D} \right\| < 1$, confirming the
  invertibility of $p^{- 1} D$. 
\end{proof}

\section{Proof of Corollary \ref{cor:bdonmatrix}}\label{sec:bdonmatrix}
Assuming $B$ is invertible and $\left\| A - B \right\| \left\| B^{- 1} \right\|
  \leq \frac{1}{2}$, we have the following two inequalities:
  \begin{align*}
    \left\| A^{- 1} \right\| & \leq \frac{\left\| B^{- 1} \right\|}{1 -
    \left\| A - B \right\| \left\| B^{- 1} \right\|} \leq 2 \left\| B^{- 1}
    \right\|\\
    \left\| A^{- 1} - B^{- 1} \right\| & \leq \frac{\left\| A - B \right\|
    \left\| B^{- 1} \right\|^2}{1 - \left\| A - B \right\| \left\| B^{- 1}
    \right\|} \leq 2 \left\| B^{- 1} \right\|^2 \left\| A - B \right\|,
  \end{align*}
  which are rearrangements of
  \begin{align*}
    \left\| A^{- 1} - B^{- 1} \right\| & \leq \left\| A^{- 1} \right\|
    \left\| A - B \right\| \left\| B^{- 1} \right\|\\
    \left\| A^{- 1} \right\| & \leq \left\| A^{- 1} - B^{- 1} \right\| +
    \left\| B^{- 1} \right\|\\
    & \leq \left\| A^{- 1} \right\| \left\| A - B \right\| \left\| B^{- 1}
    \right\| + \left\| B^{- 1} \right\| .
  \end{align*}
  Therefore, we establish that when $\tau \leq \frac{1}{4} < \frac{1}{2
  \left\| \bar{D}^{- 1} \right\|}$ on the set $\mathcal{E}_{1, \tau}$:
  \begin{align*}
    \left\| D^{- 1} - p^{- 1} \bar{D}^{- 1} \right\| & \leq 2 \left\| p^{-
    1} \bar{D}^{- 1} \right\|^2 \left\| D - p \bar{D} \right\| = 2 \left\|
    \bar{D}^{- 1} \right\|^2 p^{- 1} \tau\\
    \left\| D^{- 1} \right\| & \leq 2 \left\| p^{- 1} \bar{D}^{- 1} \right\|
    = 2 \left\| \bar{D}^{- 1} \right\| p^{- 1} .
  \end{align*}
  Since the operator norm of a matrix dominates that of all submatrices, this completes the proof.

\section{Proof of Lemma \ref{lm:vw}}\label{apx:lm:vw}
The proof uses Talagrand's concentration of measure inequality:
\begin{lemma}[{\cite[Corollary 7.8]{LeDoux:2001tx}}]\label{lm:talagrand}
  \label{lm:concentration}Let $\left\{ Y_j \right\}$ be a finite sequence of
  independent random variables taking values in a Banach space and let $V$ be
  defined as
  \begin{align*}
    V & = \sup_{h \in \mathcal{H}} \sum_j h \left( Y_j \right)
  \end{align*}
  for a countable family of real valued functions $\mathcal{H}$. Assume that
  $\left| h \right| \leq B$ and $\mathbbm{E}h \left( Y_j \right) = 0$ for all
  $h \in \mathcal{H}$ and every $j$. Then for all $t > 0,$
  \begin{align*}
    \mathbbm{P} \left( \left| V -\mathbbm{E}V \right| > t \right) & \leq 16
    \exp \left( - \frac{t}{K B} \log \left( 1 + \frac{B t}{\sigma^2 +
    B\mathbbm{E} \bar{V}} \right) \right),
  \end{align*}
  where $\sigma^2 = \sup_{h \in \mathcal{H}} \sum_j \mathbbm{E}h^2 (Y_j)$, $\bar{V} = \sup_{h \in \mathcal{H}} \left| \sum_j h \left( Y_j
  \right) \right|$, and $K$ is a numerical constant.
\end{lemma}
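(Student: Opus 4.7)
The plan is to write the random quantity $V_\ell := \|v_\ell(f) - p\bar{v}_\ell(f)\|_2$ as a supremum of real linear functionals of the independent Bernoullis $\{\delta_j\}$, apply Lemma~\ref{lm:concentration} for each fixed $\ell$, and union bound over $\ell \in \{0,1,2,3\}$. Expanding
\[
v_\ell(f) - p\bar{v}_\ell(f) = \sum_{j=-2M}^{2M} Y_{j,\ell}, \quad Y_{j,\ell} := \tfrac{1}{M}\Big(\tfrac{i2\pi j}{\sqrt{|\bar{K}_M''(0)|}}\Big)^\ell g_M(j)(\delta_j - p)\,e^{i2\pi f j}\, e(j),
\]
and using $\|\cdot\|_2 = \sup_{w \in \mathcal{W}} \mathrm{Re}\langle w, \cdot\rangle$ for $\mathcal{W}$ a countable dense subset of the unit ball of $\mathbb{C}^{2s}$ puts $V_\ell$ exactly into the supremum form required by Lemma~\ref{lm:concentration}, with summands $h_{w,j}(\delta_j) := \mathrm{Re}\langle w, Y_{j,\ell}\rangle$ bounded, mean zero, and independent across $j$.

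First I would bound the expectation. By Jensen and independence across $j$,
\[
\mathbbm{E}V_\ell \le \Big(\sum_j \tfrac{1}{M^2}\big|\tfrac{2\pi j}{\sqrt{|\bar{K}_M''(0)|}}\big|^{2\ell} g_M(j)^2\, p(1-p)\,\|e(j)\|_2^2\Big)^{1/2}.
\]
Using the uniform bounds $|2\pi j/\sqrt{|\bar{K}_M''(0)|}|\le 4$ for $|j|\le 2M$ (valid for $M \ge 2$), $\|g_M\|_\infty\le 1$, $\|e(j)\|_2^2\le 14s$ (for $M\ge 4$), $p(1-p)\le p = m/M$, and summing over at most $4M+1$ indices, yields $\mathbbm{E}V_\ell \le C\cdot 2^{2\ell}\sqrt{ms}/M$; tightening constants recovers the first conclusion $\mathbbm{E}V_\ell \leq 2^{2\ell+3}\sqrt{ms}/M$.

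Next I would compute the Talagrand ingredients. The pointwise bound is immediate from the same inequalities: $B_\ell := \sup_{w,j}|h_{w,j}| \le C\cdot 4^\ell\sqrt{s}/M$. For the variance parameter,
\[
\sigma_\ell^2 := \sup_{w}\sum_j \mathbbm{E} h_{w,j}^2 \le \frac{4^{2\ell} p}{M^2}\sup_w \sum_j g_M(j)^2 |\langle w, e(j)\rangle|^2.
\]
Bounding $g_M(j)^2 \le \|g_M\|_\infty g_M(j) \le g_M(j)$ and recognizing $\sum_j g_M(j)\,e(j)e(j)^* = M\bar{D}$ gives $\sum_j g_M(j)^2|\langle w,e(j)\rangle|^2 \le M\|\bar{D}\| \le 1.3623 M$ by Proposition~\ref{pro:sysmtx:deterministic}; hence $\sigma_\ell^2 = O(4^{2\ell} m/M^2)$.

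Finally I would plug these into Lemma~\ref{lm:concentration} with $t = a\bar{\sigma}_\ell$. The key observation is that
\[
\sigma_\ell^2 + B_\ell\,\mathbbm{E}V_\ell = O\big(4^{2\ell} m/M^2\big) + O\big(4^\ell\sqrt{s}/M\cdot 2^{2\ell}\sqrt{ms}/M\big) = O\big(4^{2\ell}m/M^2 \cdot \max\{1, s/\sqrt{m}\}\big) = O(\bar{\sigma}_\ell^2),
\]
so the two cases of the $\max$ defining $\bar{\sigma}_\ell^2$ describe precisely which of $\sigma_\ell^2$ and $B_\ell\,\mathbbm{E}V_\ell$ dominates. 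The stated upper bounds on $a$ are exactly those which guarantee $B_\ell \cdot a\bar{\sigma}_\ell \lesssim \sigma_\ell^2 + B_\ell\,\mathbbm{E}V_\ell$, which keeps the argument of the logarithm inside Talagrand's inequality of order one and, via $\log(1+x)\ge cx$ for $x$ bounded, collapses the bound to the subgaussian form $16\exp(-\gamma' a^2)$. Combining $V_\ell \le \mathbbm{E}V_\ell + (V_\ell - \mathbbm{E}V_\ell)$ with a union bound over $\ell\in\{0,1,2,3\}$ produces the prefactor $4\cdot 16 = 64$ and completes the proof. The main technical obstacle is this last regime bookkeeping: carefully matching the range of $a$ to the $\max$ in $\bar{\sigma}_\ell^2$ so that Talagrand is used only in its subgaussian regime, and propagating the universal constants consistently across all four values of $\ell$.
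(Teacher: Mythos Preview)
You are not proving the statement given. The displayed statement is Talagrand's concentration inequality, quoted verbatim from \cite[Corollary~7.8]{LeDoux:2001tx}; the paper does not attempt to prove it and simply cites it as a black box. What you have written is instead a proof sketch of Lemma~\ref{lm:vw}, the lemma that \emph{applies} Talagrand's inequality to control $\|v_\ell(f)-p\bar v_\ell(f)\|_2$. So as a proof of the stated lemma, your proposal is a category error: none of the objects $v_\ell$, $e(j)$, $\bar D$, $\bar\sigma_\ell$, or the parameter $a$ appear in the statement you were asked to prove.

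That said, if one reads your write-up as a proof of Lemma~\ref{lm:vw}, it is correct and follows the paper's own argument in Appendix~\ref{apx:lm:vw} essentially step for step: the same decomposition $v_\ell(f)-p\bar v_\ell(f)=\sum_j Y_{j,\ell}$, the same Jensen bound on $\mathbb{E}V_\ell$, the same pointwise bound $B_\ell=O(4^\ell\sqrt{s}/M)$, the same variance estimate via $g_M(j)^2\le g_M(j)$ and $\sum_j g_M(j)e(j)e(j)^*=M\bar D$, the same two-case analysis matching the $\max$ in $\bar\sigma_\ell^2$ to keep Talagrand in its subgaussian regime, and the same union bound over $\ell\in\{0,1,2,3\}$ producing the prefactor $64$.
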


\begin{proof}[Proof of Lemma \ref{lm:vw}]
  Based on the definition of $v_\ell(f)$ in \eqref{eqn:vl} and $\bar{v}_\ell(f)$ in \eqref{eqn:vlbar}, we explicitly write $v_{\ell} \left( f \right) - p\bar{v}_\ell \left( f \right) = v_{\ell} (f) - \mathbbm{E} v_\ell(f)$ as 
  \begin{align*}
    v_{\ell} \left( f \right) -p\bar{v}_\ell \left( f \right) & = 
    \sum_{j = - 2 M}^{2 M} \frac{1}{M} \Bigg( \frac{i 2 \pi j}{\sqrt{\left|
    \bar{K}_M'' \left( 0 \right) \right|}} \Bigg)^{\ell} g_M \left( j
    \right) \left( \delta_j - p \right) e^{i 2 \pi f j}e(j)\\    & = \sum_{j = - 2 M}^{2 M} Y_j^{\ell},
  \end{align*}
  where $e(j)$ is defined in \eqref{eqn:ej} and we have defined $Y_j^{\ell}$ as
  \begin{align*}
  Y_j^{\ell} = \frac{1}{M} \Bigg( \frac{i 2 \pi j}{\sqrt{\left|
    \bar{K}_M'' \left( 0 \right) \right|}} \Bigg)^{\ell} g_M \left( j
    \right) \left( \delta_j - p \right) e^{i 2 \pi f j}e(j). \end{align*}
It is clear that $\{Y_j^\ell\}_{j=-2M}^{2M}$ are independent random vectors with zero mean. 

Define
  \begin{align*}
    V^{\ell} & :=  \left\| v_{\ell} \left( f \right) -p\bar{v}_\ell
    \left( f \right) \right\|_2 = \sup_{h : \left\| h \right\|_2 = 1}
    \left\langle v_{\ell} \left( f \right) -p\bar{v}_\ell \left( f
    \right), h \right\rangle_\R = \sup_{h\in \C^{2s} : \left\| h \right\|_2 = 1} \sum_{j = -
    2 M}^{2 M} \langle Y_j^{\ell}, h \rangle_\R
  \end{align*}
  and
\begin{align*}
\tilde{h} ( Y_j^{\ell} ) = \langle Y_j^{\ell}, h
  \rangle_\R = \operatorname*{Re}\Big(\sum_{k = 1}^{2s} h_k^*Y_{j,k}^\ell\Big).
\end{align*}
To compute the quantities necessary to apply Lemma \ref{lm:talagrand}, we will extensively use the following elementary bounds:
\begin{align*}
\|g_M\|_\infty &\leq 1, \\
\Big|\frac{2\pi j}{\sqrt{\bar{K}_M''(0)}}\Big| &\leq 4 \tmop{when} M \geq 2,\\
    \|e(j)\|_2^2 &\leq s \left( 1 +
    \max_{\left| j \right| \leq 2 M} \frac{\left( 2 \pi j \right)^2}{\left|
    K_M'' \left( 0 \right) \right|} \right) \leq 14 s \text{\ when\ } M \geq 4.
\end{align*}

First, we obtain an upper bound on $|\tilde{h}|$:
  \begin{align*}
    | \tilde{h} ( Y_j^{\ell} ) | & = \Bigg| \Bigg<
    \frac{1}{M} \Bigg( \frac{i 2 \pi j}{\sqrt{\left| \bar{K}_M'' \left( 0
    \right) \right|}} \Bigg)^{\ell} g_M \left( j \right)
   e^{i2\pi f j} e(j) \left( \delta_j - p \right), h \Bigg>_\R \Bigg|\nn\\
   & \leq \frac{1}{M} \Bigg| \frac{i 2 \pi j}{\sqrt{\left| \bar{K}_M'' \left( 0
    \right) \right|}} \Bigg|^{\ell} \|g_M\|_\infty \|e(j)\|_2 \nn\\
    & \leq B_{\ell} := 4^{\ell+1} \frac{\sqrt{s}}{M}.
  \end{align*}
The expected value of $\left\| v_{\ell} \left( f \right)
  -p\bar{v}_\ell \left( f \right) \right\|_2^2$ is upper bounded as
  follows:
  \begin{align}\label{eqn:evnorm2}
    \mathbbm{E} \left\| v_{\ell} \left( f \right) -p\bar{v}_\ell \left(
    f \right) \right\|_2^2 & =  \sum_{j = - 2 M}^{2 M} \mathbbm{E}
    \left\langle Y_j^{\ell}, Y_j^{\ell} \right\rangle_\R + \sum_{j \neq k}
    \mathbbm{E} \left\langle Y_j^{\ell}, Y_k^{\ell} \right\rangle_\R\nn\\
    & =  \sum_{j = - 2 M}^{2 M} \mathbbm{E} \left\langle Y_j^{\ell},
    Y_j^{\ell} \right\rangle_\R\nn\\
    & \leq  \sum_{j = - 2 M}^{2 M} \frac{1}{M^2} \Bigg| \frac{2 \pi
    j}{\sqrt{\left| \bar{K}_M'' \left( 0 \right) \right|}} \Bigg|^{2 \ell}
    g_M^2 \left( j \right) p \left( 1 - p \right) \|e(j)\|_2^2\nn\\
    & \leq  4^{2 \ell + 3}  \frac{p s}{M} \text{\ when\ } M \geq 4.
  \end{align}
Observe that $\bar{V}^{\ell} = V^{\ell} = \|v_\ell(f) - p\bar{v}_\ell(f)\|_2$. We apply Jensen's inequality and combine with \eqref{eqn:evnorm2} to get
  \begin{align*}
    \mathbbm{E} \bar{V}^{\ell} & = \mathbbm{E}V^{\ell} \leq
    \sqrt{\mathbbm{E}V^{\ell 2}} \leq \sqrt{4^{2 \ell + 3} \frac{p s}{M}}\\
    & \leq 2^{2 \ell + 3} \frac{\sqrt{m
    s}}{M}.
  \end{align*}
  Next, we upper bound $\sigma^2$:
  \begin{align*}
   & \ \ \ \mathbbm{E} \tilde{h}^2 ( Y_j^{\ell} )  =  \mathbb{E} \langle
    Y_j^{\ell}, h \rangle_\R^2 \nn\\
    &\leq \frac{1}{M^2} 4^{2 \ell}
    \left\| g_M \right\|_{\infty} \mathbb{E}(\delta_j - p)^2
    \left|\left<\sqrt{g_M(j)} e(j), h\right>\right|^2 \end{align*}
  implying
  \begin{align*}
    \sum_j \mathbbm{E}\tilde{h}^2 ( Y_j^\ell ) & \leq \frac{1}{M^2} 4^{2 \ell}
     p \sum_{j = -2M}^{2M} \left|h^*\sqrt{g_M(j)}e(j)\right|^2\nn\\
    & = \frac{1}{M^2} 4^{2 \ell} p \left\| h^*P\right\|_2^2\nn\\
    & \leq 4^{2 \ell} \frac{p \left\| P \right\|^2}{M^2}
  \end{align*}
  where $P$ is a matrix in $\mathbb{C}^{2s\times (4M+1)}$ whose $j$th column is $\sqrt{g_M(j)}e(j)$.
 Note that
  \begin{align*}
    \frac{P P^*}{M} & = \frac{1}{M} \sum_{j=-2M}^{2M} g_M(j) e(j)e(j)^* = \bar{D}.
  \end{align*}
  Therefore, we have
  \begin{align*}
    \sigma_{\ell}^2 = \sum_j \mathbbm{E}\tilde{h}^2 \left( Y_j \right) & \leq 4^{2
    \ell} \frac{p }{M^2} \left\| P
    \right\|^2\nn\\
    & \leq 4^{2 \ell} \frac{1}{M^2} p M \left\| \bar{D} \right\| \nn\\
    & \leq 2^{4 \ell + 1} \frac{m}{M^2} 
    \left( \tmop{used} \left\| \bar{D} \right\| \leq 2 \text{\ from \ } \eqref{eqn:Dbarbd}\right)
  \end{align*}
  In conclusion, Lemma \ref{lm:concentration} shows that
  \begin{align*}
    & \ \ \ \mathbbm{P} \left( \left| \left\| v_{\ell} \left( f \right)
    -p\bar{v}_\ell \left( f \right) \right\|_2 -\mathbbm{E} \left\|
    v_{\ell} \left( f \right) -p\bar{v}_\ell \left( f \right) \right\|_2
    \right| > t \right)\nn\\
    & \leq 16 \exp \left( - \frac{t}{K B_{\ell}} \log \left( 1 +
    \frac{B_{\ell} t}{\sigma_{\ell}^2 + B_{\ell} \mathbbm{E}
    {\bar{V}^{\ell}}} \right) \right)\nn\\
    & \leq 16 \exp \left( - \frac{t}{K B_{\ell}} \log \left( 1 +
    \frac{B_{\ell} t}{2^{4 \ell + 1}
    \frac{m}{M^2} + B_{\ell} 2^{2 \ell + 3}
    \frac{\sqrt{m s}}{M}} \right) \right)
  \end{align*}
  Suppose $ B_{\ell} 2^{2 \ell + 3} \frac{\sqrt{m s}}{M} \geq 2^{4 \ell + 1} \frac{m}{M^2}$. Then, define $\bar{\sigma}_\ell^2 = B_{\ell} 2^{2 \ell + 3} \frac{\sqrt{m s}}{M}$ and fix $t = a \bar{\sigma}_{\ell}$. Then
  it follows that
  \begin{align*}
    \mathbbm{P} \left( \left| \left\| v_{\ell} \left( f \right)
    -p\bar{v}_\ell \left( f \right) \right\|_2 -\mathbbm{E} \left\|
    v_{\ell} \left( f \right) -p\bar{v}_\ell \left( f \right) \right\|_2
    \right| > a \bar{\sigma}_{\ell} \right) & \leq 16 e^{- \gamma a^2},
  \end{align*}
  for some $\gamma > 0$ provided $B_{\ell} t \leq \bar{\sigma}_{\ell}^2$. The
  same is true if $2^{4 \ell + 1} \frac{m}{M^2} \geq B_{\ell} 2^{2 \ell + 3} \frac{\sqrt{m s}}{M}$ and $B_{\ell} t \leq 2^{4
  \ell + 1} \frac{m}{M^2}$ if we define $\bar{\sigma}_{\ell}^2 = 2^{4 \ell + 1} \frac{m}{M^2}$. Therefore, let
  \begin{align*}
    \bar{\sigma}_{\ell}^2 & = \max \left\{ 2^{4 \ell + 1}\frac{m}{M^2}, B_{\ell} 2^{2 \ell + 3} \frac{\sqrt{m s}}{M} \right\}\nn\\
    & = 2^{4 \ell + 1} \frac{m}{M^2}
    \max \left\{ 1, 2^4 \frac{s}{\sqrt{m}} \right\},
  \end{align*}
  and fix $a > 0$ obeying 
  \begin{align*}
  a \leq 
  \begin{cases}
  \sqrt{2} m^{1 / 4} & \mbox{\ if \ } 2^4 s / \sqrt{m} \geq 1\\
  \frac{\sqrt{2}}{4} \sqrt{\frac{m}{s}} & \mbox{otherwise}.
  \end{cases}
  \end{align*}
Then we have
  \begin{align*}
    \mathbbm{P} \left( \left\| v_{\ell} \left( f \right)
    -p\bar{v}_\ell \left( f \right) \right\|_2 > 2^{2 \ell + 3}
    \frac{\sqrt{m s}}{M} + a \bar{\sigma}_{\ell}
    \right) & \leq 16 e^{- \gamma a^2}
  \end{align*}
  for some $\gamma > 0$. Application of the union bound proves the lemma.
\end{proof}

\section{Proof of Lemma \ref{lm:I1}}\label{apx:lm:I1}

The proof of Lemma \ref{lm:I1} is based on Hoeffding's inequality presented below:
\begin{lemma}[Hoeffding's inequality]
  \label{lm:complex:hoeffding} Let the components of $u\in \C^n$ be sampled i.i.d. from a symmetric distribution on the complex unit circle, $w \in \C^n$, and $t$ be a positive real number. Then
  \begin{align*}
    \mathbbm{P} \left( \left| \left<u, w\right> \right| \geq t \right) & \leq 4 e^{-
    \frac{t^2}{4 \left\| w \right\|_2^2}}.
  \end{align*}
\end{lemma}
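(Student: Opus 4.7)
The plan is to reduce this complex-valued concentration inequality to two applications of the classical real-valued Hoeffding inequality for bounded, centered, independent summands. Starting from $\langle u, w\rangle = w^{\ast}u = \sum_{j=1}^n \bar{w}_j u_j$, I would split the sum into its real and imaginary parts,
\[
S_R := \sum_{j=1}^n \operatorname{Re}(\bar{w}_j u_j), \qquad S_I := \sum_{j=1}^n \operatorname{Im}(\bar{w}_j u_j).
\]
The symmetry hypothesis on the distribution of $u_j$ (i.e., $u_j$ and $-u_j$ are equidistributed) implies that $\operatorname{Re}(\bar{w}_j u_j)$ has the same law as its negative and is therefore centered; clearly $|\operatorname{Re}(\bar{w}_j u_j)| \leq |w_j|$, and the same two properties hold for $\operatorname{Im}(\bar{w}_j u_j)$. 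The i.i.d.\ assumption on the $u_j$'s makes the summands within each of $S_R$ and $S_I$ independent.

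I would then invoke the standard real Hoeffding inequality: for independent centered real random variables $X_j$ with $|X_j|\leq |w_j|$,
\[
\mathbb{P}\Bigl(\bigl|\textstyle\sum_{j} X_j\bigr|\geq s\Bigr) \;\leq\; 2\exp\!\left(-\frac{s^2}{2\|w\|_2^2}\right),
\]
applied once with $X_j = \operatorname{Re}(\bar{w}_j u_j)$ to bound $|S_R|$ and once with $X_j = \operatorname{Im}(\bar{w}_j u_j)$ to bound $|S_I|$. To assemble these into a bound on $|\langle u,w\rangle|$, observe that $|\langle u,w\rangle|^2 = S_R^2 + S_I^2$, so
$\{|\langle u,w\rangle|\geq t\} \subseteq \{|S_R|\geq t/\sqrt{2}\}\cup\{|S_I|\geq t/\sqrt{2}\}$. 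Setting $s = t/\sqrt{2}$ and applying a union bound then yields
\[
\mathbb{P}(|\langle u,w\rangle|\geq t) \;\leq\; 2\cdot 2\exp\!\left(-\frac{t^2}{4\|w\|_2^2}\right) \;=\; 4\exp\!\left(-\frac{t^2}{4\|w\|_2^2}\right),
\]
matching the stated bound exactly.

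There is essentially no genuine obstacle to this argument. The only load-bearing steps are (i) using the symmetry hypothesis to conclude that the coordinate-wise real and imaginary parts are centered, so that the real Hoeffding inequality applies in its usual form, and (ii) tracking the $\sqrt{2}$ factor from the real/imaginary coordinate split, which is precisely what turns the $1/2$ in the exponent of the real Hoeffding bound into the $1/4$ in the claimed complex bound (and which forces the leading constant $4$ rather than $2$).
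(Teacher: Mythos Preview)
Your argument is correct and is the standard way to derive this complex version of Hoeffding's inequality. The paper does not actually prove this lemma; it merely states it as a known result and then applies it, so there is no ``paper's proof'' to compare against. Your reduction---splitting $\langle u,w\rangle$ into real and imaginary parts, using the symmetry of the distribution to guarantee each summand is centered with $|\operatorname{Re}(\bar w_j u_j)|,|\operatorname{Im}(\bar w_j u_j)|\le |w_j|$, applying the real two-sided Hoeffding bound $2\exp(-s^2/(2\|w\|_2^2))$ to each part at level $s=t/\sqrt{2}$, and union bounding---recovers exactly the constants $4$ and $1/4$ in the statement.
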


\begin{proof}[Proof of Lemma \ref{lm:I1}]
  Consider the random inner product $\left<u, L^*(v_\ell(f) - p\bar{v}_\ell(f))\right>$ where $\{u_j\}$ are i.i.d. symmetric random variables with values on the complex unit circle. Conditioned on a particular realization
\begin{align*}
\omega \in \mathcal{E} := \Big\{\omega: \sup_{f_d \in \Omega_{\mathrm{grid}}} \left \| L^*(v_{\ell} \left( f_d \right)
    -p\bar{v}_\ell \left( f_d \right))  \right\|_2 < \lambda_\ell, \ell = 0, 1, 2 , 3 \Big\},
\end{align*}    
Hoeffding's
  inequality and the union bound then imply
\begin{align*}
\mathbbm{P} \Big(\sup_{f_d \in \Omega_{\mathrm{grid}}} \left|\left<u, L^*(v_\ell(f_d) - p\bar{v}_\ell(f_d))\right> \right| > \varepsilon \Big | \omega
    \Big) \leq 4 \left| \Omega_{\mathrm{grid}} \right| e^{- \frac{\varepsilon^2}{4
    \lambda_\ell^2}}.
\end{align*}
Elementary probability calculation shows 
\begin{align*}
&\mathbbm{P} \Big(\sup_{f_d \in \Omega_{\mathrm{grid}}} \left|\left<u, L^*(v_\ell(f_d) - p\bar{v}_\ell(f_d))\right> \right| > \varepsilon
    \Big) \nn\\
& = 4 |\Omega_{\mathrm{grid}}| e^{-\frac{\varepsilon^2}{4\lambda_\ell^2}} + \mathbb{P}\left(\mathcal{E}^c\right).
\end{align*}    
Setting 
\begin{align*}
\lambda_\ell = 4 \left( 2^{2\ell + 3}\sqrt{\frac{s}{m}} + \frac{M}{m} a
    \bar{\sigma}_\ell \right)
\end{align*}
 in $\mathcal{E}$ and applying Lemma \ref{lm:vsd} yield,
  \begin{align*}
    & \mathbbm{P} \Big(\sup_{f_d \in \Omega_{\mathrm{grid}}} \left|\left<u, L^*(v_\ell(f_d) - p\bar{v}_\ell(f_d))\right> \right| > \varepsilon \Big | \omega
    \Big)\nn\\
    & \leq 4 \left| \Omega_{\mathrm{grid}} \right| e^{- \frac{\varepsilon^2}{4
    \lambda_\ell^2}} + 64 \left| \Omega_{\mathrm{grid}} \right| e^{- \gamma a^2} +\mathbbm{P}
    \left( \mathcal{E}_{1, \tau}^c \right)
  \end{align*}
  For the second term to be less than $\delta$, we choose $a$ such that
  \begin{align*}
    a^2 & =  \gamma^{- 1} \log  \frac{64 \left| \Omega_{\mathrm{grid}} \right|}{\delta}
    ,
  \end{align*}
  and assume this value from now on. The first term is less than $\delta$ if
  \begin{align}
    \frac{1}{\lambda_\ell^2} & \geq \frac{4}{\varepsilon^2} \log \frac{4 \left|
    \Omega_{\mathrm{grid}} \right|}{\delta} .  \label{eqn:1lambda}
  \end{align}
  First assume that $2^4 s /
  \sqrt{m} \geq 1$. The condition in Lemma \ref{lm:vw} is $a \leq \sqrt{2}
  m^{1 / 4}$ or equivalently
  \begin{align}\label{eqn:mbd1}
    m & \geq \frac{1}{4} \gamma^{- 2} \log^2 \frac{64 \left| \Omega_{\mathrm{grid}}
    \right|}{\delta}.
  \end{align}
  In this case, we have $a \bar{\sigma}_\ell \leq 2^{2\ell+3} \frac{\sqrt{m s}}{M}$, leading to
  \begin{align*}
    \frac{1}{\lambda_\ell^2} & =  \frac{1}{16 \left( 2^{2\ell + 3} \sqrt{\frac{s}{m}} + \frac{M}{m} a \bar{\sigma}_\ell
    \right)^2} \geq \frac{1}{4^{2\ell + 5} }
    \frac{m}{s} .
  \end{align*}
  Now suppose that $2^4 s / \sqrt{m} \leq 1$. If $32s \geq a^2$, then $a \bar{\sigma}_\ell
  \leq 2^{2\ell + 3} \frac{\sqrt{m s}}{M}$ which again gives
  the above lower bound on $1 / \lambda_\ell^2$. On the other hand if $32 s \leq a^2$, then $\lambda_\ell \leq 2^{2\ell +3 }\sqrt{2}  \frac{a}{\sqrt{m}}$ and
  \begin{align*}
    \frac{1}{\lambda_\ell^2} & \geq \frac{1}{2^{4\ell +7}} 
    \frac{m}{a^2}
  \end{align*}
  Therefore, to verify (\ref{eqn:1lambda}) it suffices to take $m$ obeying \eqref{eqn:mbd1} and
  \begin{align*}
    m \min \left( \frac{1}{4^{2\ell + 5} } \frac{1}{s},
    \frac{1}{2^{4\ell + 7}} \frac{1}{a^2} \right) & \geq
    \frac{4}{\varepsilon^2} \log \frac{4 \left| \Omega_{\mathrm{grid}} \right|}{\delta}
  \end{align*}
  This analysis shows that the first term is less than $\delta$ if
  \begin{align*}
    m & \geq \max\Big\{ \frac{4}{\varepsilon^2} 4^{2\ell + 5} s \log \frac{4|\Omega_{\mathrm{grid}}|}{\delta}, \frac{4}{\varepsilon^2} 2^{4\ell + 7} \gamma^{-1} \log \frac{64 |\Omega_{\mathrm{grid}}|}{\delta} \log \frac{4|\Omega_{\mathrm{grid}}|}{\delta},\nn\\
    &\ \ \ \ \ \ \ \ \ \ \ \ \ \frac{1}{4} \gamma^{- 2}  \log^2 \frac{64 \left| \Omega_{\mathrm{grid}}
    \right|}{\delta}\Big \}.
  \end{align*}
  According to Lemma \ref{lm:invertibility}, the last term is less than
  $\delta$ if
  \begin{align*}
    m & \geq \frac{50}{\tau^2} s \log \frac{2
    s}{\delta} .
  \end{align*}
  Setting $\tau = 1 / 4$, combining all lower bounds on $m$ together, and absorbing all constants into one, we get
  \begin{align*}
    m & \geq C  \max \left\{ \frac{1}{\varepsilon^2} \max\left( s\log\frac{|\Omega_{\mathrm{grid}}|}{\delta}, \log^2 \frac{\left| \Omega_{\mathrm{grid}} \right|}{\delta} \right), s \log \frac{s}{\delta} \right\},
  \end{align*}
  is sufficient to guarantee
  \begin{align*}
    \sup_{f_d \in \Omega_{\mathrm{grid}}} \left| I_1^\ell \left( f_d \right) \right| & \leq
    \varepsilon
  \end{align*}
  with probability at least $1 - 3 \delta$. The union bound then proves the lemma.
\end{proof}

\section{Proof of Lemma \ref{lm:I2}}\label{apx:lm:I2}
\begin{proof}[Proof of Lemma \ref{lm:I2}]
  Recall that
  \begin{align*}
   {I_2^\ell \left( f \right)} & = \left<u, (L - p^{-1} \bar{L})^* p \bar{v}_\ell \left( f \right)\right>
  \end{align*}
  On the set $\mathcal{E}_{1, \tau}$ defined in \eqref{eqn:dfE1}, we established in Corollary \ref{cor:bdonmatrix} that
  \begin{align*}
    \left\| L - p^{-1}\bar{L}\right\| & \leq 2 \left\| \bar{D}^{- 1} \right\|^2 p^{- 1} \tau.
  \end{align*}
  We use the $\ell_1$ norm to bound the $\ell_2$ norm of $p \bar{v}_\ell(f)$:
\begin{align*}
    \left\| p \bar{v}_{\ell} \left( f \right) \right\|_2 & \leq     \left\| p \bar{v}_{\ell} \left( f \right) \right\|_1\nn\\
     & = p \Bigg(
    \sum_{k = 1}^s \frac{1}{\sqrt{\left| \bar{K}_M'' \left( 0 \right)
    \right|}^{\ell}} \left| \bar{K}_M^{\left( \ell \right)} \left( f
    - f_k \right) \right| + \sum_{k = 1}^s \frac{1}{\sqrt{\left| \bar{K}_M'' \left( 0 \right)
    \right|}^{(\ell + 1)}} \left| \bar{K}_M^{\left( \ell + 1\right)} \left( f
    - f_k \right) \right|  \Bigg).
\end{align*}
To get a uniform bound on $\sum_{k = 1}^s \frac{1}{\sqrt{\left|
  \bar{K}_M'' \left( 0 \right) \right|}^{\ell}} \left|
  \bar{K}_M^{\left( \ell \right)} \left( f - f_k \right) \right|$, we need the following bound:
  \begin{align*}\label{eqn:kbd}
    \frac{1}{\sqrt{\left| \bar{K}_M'' \left( 0 \right) \right|}^{\ell}} \left|
    \bar{K}_M^{\left( \ell \right)} \left( f \right) \right|  & \leq \begin{cases}
    C_1 & \forall f \in [-\frac{1}{2}, \frac{1}{2}],\\
    \frac{C_2}{M^4 |f|^4} & \mbox{\ if \ } \frac{1}{4M} \leq |f| \leq \frac{1}{2}.
    \end{cases}
  \end{align*}
  for suitably chosen numerical constant $C_1$ and $C_2$. The bound over the region $[\frac{1}{4M}, \frac{1}{2}]$ is a consequence of the more accurate bound established in \cite[Lemma 2.6]{Candes:2012uf}, while the uniform bound $C_1$ can be obtained by checking the expression of $\bar{K}^{(\ell)}_M(f)$. Consequently, we have
  \begin{align*}
 &\sum_{k = 1}^s \frac{1}{\sqrt{\left|
  \bar{K}_M'' \left( 0 \right) \right|}^{\ell}} \left|
  \bar{K}_M^{\left( \ell \right)} \left( f - f_k \right) \right|\nn\\
 &\leq \sum_{k: |f-f_k| < \frac{2}{M}} C_1 + \sum_{k: \frac{2}{M} \leq |f-f_k| \leq \frac{1}{2}}\frac{C_2}{M^4 |f-f_k|^4}\nn\\
 & \leq 4 C_1 + C_2 \sum_{k = 1}^\infty\frac{1}{M^4 (k\Delta_{\min})^4}\nn\\
 &\leq 4C_1 + C_2\sum_{k=1}^\infty \frac{1}{k^4}\nn\\
 & = C := 4C_1 + \frac{\pi^4}{90} C_2 .
  \end{align*}
  We conclude that on the set $\mathcal{E}_{1, \tau}$
  \begin{align*}
    \|(L - p^{-1} \bar{L})^* p \bar{v}_\ell \left( f \right)\|_2
    & \leq C \tau.
  \end{align*}
  Again, application of Hoeffding's inequality and the union bound gives
  \begin{align*}
    & \mathbbm{P} \left( \sup_{f_d \in \Omega_{\mathrm{grid}}} \left| I_2^\ell \left( f_d
    \right) \right| > \varepsilon \right)\nn\\
    & \leq 4 \left| \Omega_{\mathrm{grid}} \right| \exp \left( - \frac{\varepsilon^2}{4 C
    \tau^2} \right) +\mathbbm{P} \left( \mathcal{E}_{1, \tau}^c \right) .
  \end{align*}
  To make the first term less than $\delta$, it suffices to take
  \begin{align*}
    \tau^2 & = \frac{\varepsilon^2}{4 C \log \frac{4 \left| \Omega_{\mathrm{grid}}
    \right|}{\delta}} .
  \end{align*}
  To have the second term less than $\delta$, we require
  \begin{align*}
    m & \geq \frac{C}{\tau^2} s \log \frac{2 s}{\delta}\nn\\
    & = \frac{C}{\frac{\varepsilon^2}{\log \frac{4 \left| \Omega_{\mathrm{grid}}
    \right|}{\delta}}} s \log \frac{2 s}{\delta}\nn\\
    & = C \frac{1}{\varepsilon^2} s \log \frac{2 s}{\delta} \log \frac{4
    \left| \Omega_{\mathrm{grid}} \right|}{\delta}.
  \end{align*}
  Another application of the union bound with respect to $\ell = 0, 1, 2, 3$ proves the lemma.
\end{proof}

\end{document}